\theoremstyle{plain}
\newtheorem{theorem}{Theorem}
\newtheorem{lemma}[theorem]{Lemma}
\newtheorem{proposition}[theorem]{Proposition}
\newtheorem{corollary}[theorem]{Corollary}
\theoremstyle{definition}
\newtheorem{definition}{Definition}
\theoremstyle{remark}
\newcommand{\dL}{dL\xspace}
\newcommand{\dTL}{dTL\xspace}
\newcommand{\bebecomes}{\mathrel{::=}}
\newcommand\Vtextvisiblespace[1][.3em]{%
 \mbox{\kern.06em\vrule height.3ex}%
 \vbox{\hrule width#1}%
 \hbox{\vrule height.3ex}}
\newcommand{\tae}[2]{#1 \models \Box_{\text{tae}} #2}
\newcommand{\mrae}[3]{#1 \to [#2]  \Box_{\text{tae}} #3}
\newcommand{\rae}[2]{[#1] \Box_{\text{tae}} #2}
\newcommand{\sae}[4]{#1 \vdash [#2] \Box_{{#4}\text{sae}} #3 }
\newcommand{\ntae}[2]{#1 \not \models \Box_{\text{tae}} #2 }
\newcommand{\cl}[1]{\overline{#1}}
\newcommand{\formulaSymbol}{\phi}
\newcommand{\secondFormulaSymbol}{\psi}
\newcommand{\FOLformulaSymbol}{P}
\newcommand{\secondFOLformulaSymbol}{Q}
\newcommand{\tracesymbol}{\sigma}
\newcommand{\firstSubtraceSymbol}{\xi}
\newcommand{\secondSubtraceSymbol}{\eta}
\newcommand{\firstStateSymbol}{\omega}
\newcommand{\secondStateSymbol}{\nu}
\newcommand{\PdTL}{PdTL\xspace}
\newcommand{\reach}{\rho}
\newcommand{\evalInState}[2]{#1(#2)}
\newcommand{\updateState}[3]{#1_{#2}^{#3}}
\newcommand{\evDomainConstraint}{R}
\newcommand{\variableSet}{\Sigma}
\newcommand{\realValuedVariableSet}{\Sigma_{\text{var}}}
\newcommand{\stateSet}[1]{\text{Sta}(#1)}
\newcommand{\termSet}[1]{\text{Trm}(#1)}
\newcommand{\setAssociTo}[1]{\llbracket #1 \rrbracket}
\newcommand{\val}[2]{val(#1, #2)}
\newif\iflongversion
\newcommand{\rref}[2][]{\prettyref{#2}}
\newcommand{\lquantifier}[4]
           {#1%
           \ifthenelse{\equal{#2}{}}
           {#3{\,}}
           {{#3}\,{:}\,{#2}{~}}%
           #4}
\newcommand{\lquantifier}[4]
           {#1%
           #3{\,}
           #4}
\title{Towards Physical Hybrid Systems}
\author{Katherine Cordwell \and
Andr\'e Platzer\thanks{
Computer Science Department, Carnegie Mellon University, Pittsburgh, USA, Fakult\"at f\"ur Informatik, Technische Universit\"at M\"unchen
$\{$kcordwel,aplatzer$\}$@cs.cmu.edu
}
}
\date{}
\begin{document}
\maketitle
\allowdisplaybreaks
\thispagestyle{empty}

\begin{abstract}
%\\[\medskipamount]
Some hybrid systems models are unsafe for mathematically correct but physically unrealistic reasons. For example, mathematical models can classify a system as being unsafe on a set that is too small to have physical importance. In particular, differences in measure zero sets in models of cyber-physical systems (CPS) have significant mathematical impact on the mathematical safety of these models even though differences on measure zero sets have no tangible physical effect in a real system. We develop the concept of ``physical hybrid systems'' (PHS) to help reunite mathematical models with physical reality. We modify a hybrid systems logic (differential temporal dynamic logic) by adding a first-class operator to elide distinctions on measure zero sets of time within CPS models. This approach facilitates modeling since it admits the verification of a wider class of models, including some physically realistic models that would otherwise be classified as mathematically unsafe. We also develop a proof calculus to help with the verification of PHS.\\
\textbf{Keywords:} {hybrid systems, almost everywhere, differential temporal dynamic logic, proof calculus}
\end{abstract}

\section{Introduction}
\textit{Hybrid systems} \cite{DBLP:conf/hybrid/NerodeK92a,DBLP:conf/hybrid/AlurCHH92}, which have interacting discrete and continuous dynamics, provide all the necessary mathematical precision to describe and verify the behavior of safety-critical \textit{cyber-physical systems} (CPS), such as self-driving cars, surgical robots, and drones.
Ironically, however, hybrid systems provide so much mathematical precision that they can distinguish models that exhibit no physically measurable difference.
More specifically, since mathematical models are minutely precise, models can classify systems as being unsafe on minutely small sets---even when these sets have no physical significance.
For example, a mathematical model could classify a system as being mathematically unsafe at a single instant in time---but why should the safety of a model give more weight to such glitches than even the very notion of solutions of differential equations, which is unaffected \cite{walter} by changes on sets of measure zero in time?
Practically speaking, a physical system is only unsafe at a single instant of time if it is also already unsafe at a significantly larger set of times.
In the worst case, such degenerate counterexamples could detract attention from real unsafeties in a model.

That is why this paper calls for a shift in perspective toward \emph{physical hybrid systems} (PHS) that are more attuned to the limitations and necessities of physics than pure mathematical models.
PHS are hybrid systems that behave safely ``almost everywhere'' (in a measure theoretic sense) and thus, physically speaking, are safe systems.
While different flavors of attaining PHS are possible and should be pursued, we propose arguably the tamest one, which merely disregards differences in safety on sets of time of measure zero.
As our ultimate hope is that models of PHS can be (correctly) formally verified without introducing any burden on the user, we introduce the ability to rigorously ignore sets of time of measure zero into logic.
A major difficulty is that there is a delicate tradeoff between the physical practicality of a definition (what real-world behavior it captures) and the logical practicality of a definition (what logical reasoning principles it supports).
Our notion of safety almost everywhere in time not only enjoys a direct link with well-established mathematical principles of differential equations, but also satisfies key logical properties, such as compositionality.

We modify \textit{differential temporal dynamic logic} (\dTL) \cite{DBLP:conf/cade/JeanninP14,Pl} to capture the notion of safety \textit{time almost everywhere} (tae) along the execution trace of a hybrid system.
\dTL extends the hybrid systems logic \textit{differential dynamic logic} (\dL) with the ability to analyze system behavior over time.
We call our new logic \textit{physical differential temporal dynamic logic} (\PdTL) to reflect its purpose.
While \PdTL is closely related to \dTL in style and development, the formalization of safety tae is entirely new, and thus requires new reasoning.
Guiding the development of \PdTL is the following motivating example: Consider a train and a safety condition $v{<}100$ on the velocity of the train. 
Physically speaking, it is fine to allow $v{=}100$ for a split-second, because this has no measurable impact. 
\PdTL is designed to classify the situation where the train continuously accelerates until $v{=}100$ and immediately brakes whenever it reaches $v{=}100$ as tae safe.

\section{Related Work} \label{sec:RelatedWork} 
Since all systems inherently suffer from imprecision, several approaches develop \textit{robust hybrid systems}, which are stable up to small perturbations---for example, in the contexts of decidability \cite{DBLP:conf/lics/AsarinB01,DBLP:conf/csl/Franzle99,DBLP:conf/lics/GaoAC12}, runtime monitoring \cite{DBLP:conf/cav/DonzeFM13,STL}, and controls \cite{MANTHANWAR20051249,mayhew1,DBLP:conf/hybrid/MoorD01}.
If systems are robust, they provide a fair amount of automation \cite{DBLP:conf/lics/GaoAC12,DBLP:conf/tacas/KongGCC15}.
Both our approach and robustness hinge on building in an awareness of physics to hybrid systems verification.
However, robustness is fundamentally different from our deductive verification approach.
The analysis of robust systems often relies on a reachability analysis (see, e.g., \cite{DBLP:journals/tcs/MoggiFDT18}) which loses much of the logical precision present in deductive verification, e.g., decidability of differential equation invariants \cite{DBLP:conf/lics/PlatzerT18}.
Further, by building on \dL, which is a general purpose hybrid systems logic, we are able to handle a wide class of models, whereas tools like dReach \cite{DBLP:conf/tacas/KongGCC15} tend to be slightly more limited in scope.
In particular, our deductive approach for PHS admits an induction principle---making it possible to verify safety properties of controllers that run in loops for any amount of time---whereas robustness approaches are presently limited to bounded model checking.
We advocate for robustness in that models can, and should be, written with an awareness of imprecision.
However, we recognize that modeling is difficult, and even well-intentioned models can suffer from nonphysical glitches---hence PHS.

Non-classical solutions of ODEs \cite{DBLP:journals/scl/Ceragioli02,Cortes2008}, especially Filippov and Carath\'{e}odory solutions, align well with the PHS intuition as they often inherently ignore sets of measure zero.
\textit{Filippov solutions} consider vector fields equivalent up to differences on sets of measure zero.
\textit{Carath\'{e}odory solutions} satisfy a differential equation everywhere except on a set of measure zero.
Hybrid systems models do not usually make use of non-classical solutions, since admitting them would require a relaxed notion of safety.
Non-classical solutions are sometimes used in the context of controls: Goebel et al.\ \cite{GOEBEL20041} generalized the notion of a solution to a hybrid system by using non-classical solutions of ODEs by Filippov and Krasovskii, with a view towards obtaining robustness properties, and a later work \cite{HDS} allows solutions that fit a system of ODEs almost everywhere in a time interval.
The temporal approach of \PdTL naturally admits Carath\'{e}odory solutions.

Eliding sets of measure zero can be computationally significant---notably, in quantifier elimination, which arises in the last step of hybrid systems proofs.
Despite having no physical meaning, measure zero sets have a significant impact on the efficiency of real arithmetic, and thus on the overall hybrid systems proofs.
The enabling factor behind efficient arithmetic \cite{DBLP:journals/cj/McCallum93} is to ignore sets of measure zero and thus remove the need to compute with irrational algebraic numbers. A potential computational benefit is an encouraging motivation for PHS.

\section{Syntax of \PdTL}\label{sec:syntax}
In pursuit of enabling the statement of physical safety properties of hybrid systems, we develop \PdTL, which builds on concepts from \dTL \cite{DBLP:conf/cade/JeanninP14,Pl} to introduce an ``almost everywhere in time'' operator, $\Box_{\text{tae}}$, which makes it possible to disregard minor glitches violating safety conditions on sets of time of measure zero.
When possible, we keep our notation consistent with that of \dTL \cite{Pl}, so that the syntax of \PdTL is very similar to the syntax of \dTL---the key difference being that we eschew \dTL trace formulas $\Box \formulaSymbol$ and $\Diamond \formulaSymbol$ in favor of $\Box_{\text{tae}} \formulaSymbol$.

The new \PdTL formula $\rae{\alpha}{\formulaSymbol}$ expresses that along each run of the hybrid system $\alpha$, the formula $\formulaSymbol$ is true at almost every time (``tae'' stands for ``time almost everywhere'').
This formula remains true even in cases where $\formulaSymbol$ is false at only a measure zero set of points in time along $\alpha$.
Because a hybrid system may exhibit different behaviors, the particular measure zero set of points in time at which $\formulaSymbol$ is false is allowed to depend on the particular run of $\alpha$.

Fix a set $\variableSet$ containing real-valued variables, function symbols, and predicate symbols.  In particular, $\variableSet$ contains the symbols needed for first-order logic of real arithmetic (FOL). We use $\realValuedVariableSet$ to denote the set of real-valued variables in $\variableSet$, and let $\termSet{\variableSet}$ denote the set of (polynomial) terms over $\variableSet$ (as in FOL).

We now define the syntax of \textit{hybrid programs} (which model hybrid systems) and formulas capable of expressing physical properties of hybrid programs.
Hybrid programs \cite{Pl,Pl2} are allowed to assign values to variables (with the $:=$ operator), test the truth of formulas (with the $?$ operator), evolve along systems of differential equations, and branch nondeterministically (with the $\cup$ operator). Hybrid programs are also sequentially composable with the $;$ operator, and can be run in loops with the $^*$ operator.

\begin{definition}\label{def:hybridProgramsGrammar} Hybrid programs are given by a grammar, where $\alpha$ and $\beta$ are hybrid programs, $e \in \termSet{\variableSet}$, $x$ is a variable, and $\FOLformulaSymbol$ and $\evDomainConstraint$ are FOL formulas:
\[\alpha, \beta ~\bebecomes~ x := e \ |\ ?\FOLformulaSymbol \ |\ x' = f(x)\& \evDomainConstraint\ |\ \alpha \cup \beta \ |\ \alpha; \beta\ |\ \alpha^* \]
\end{definition}
As in CTL$^*$ \cite{dam} and \dTL, we split \textit{\PdTL formulas} into state formulas that are true or false in a \textit{state} (i.e., at a snapshot in time) and trace formulas that are true or false along a fixed \textit{trace} that keeps track of the behavior of a system over time.

\begin{definition}\label{def:grammar} The state formulas are given by the following grammar, where $p \in \variableSet$ is a predicate symbol of arity $n{\geq}0$, $e_1, \dots, e_n \in \termSet{\variableSet}$, $\formulaSymbol$ and $\secondFormulaSymbol$ are state formulas, $\alpha$ is a hybrid program, $\kappa$ is a trace formula, and $x$ is a variable:
\[\formulaSymbol, \secondFormulaSymbol ~\bebecomes~ p(e_1, \dots, e_n) \ |\ \lnot \formulaSymbol\ |\ \formulaSymbol \land \secondFormulaSymbol\ |\ \forall x\, \formulaSymbol\ |\ [\alpha]\kappa\ |\ \langle \alpha \rangle \kappa \]
Trace formulas are given by the following grammar, where $\formulaSymbol$ is a state formula:
\[\kappa ~\bebecomes~ \formulaSymbol\ |\ \Box_{\text{tae}} \formulaSymbol\]
\end{definition}
We will also allow the use of the standard logical operators $\lor$, $\to$, and $\leftrightarrow$, which are defined in terms of $\lnot$ and $\land$ as usual in classical logic.
 
Our motivating example can be modeled in \PdTL as follows:
\begin{align*} a{=}0 \land v{=}0 \to  & [\big(((?(v{<}100); a := 1) \cup (?(v{=}100); a := -1)); \\
 & \hspace{2em} \{x'{=}v, v'{=}a \ \&\ 0{\leq}v{\leq}100\}\big)^*] \Box_{\text{tae}}\, v{<}100
\end{align*}
This claims that if the initial velocity and acceleration are both 0, then along any run of the system, $v{<}100$ holds at almost all times.
The train accelerates if $v{<}100$ and brakes if $v{=}100$; it moves according to the system of ODEs $x'{=}v, v'{=}a$.
The evolution domain constraint $v{\leq}100$ indicates an event-triggered controller \cite{Pl2}.

A natural question is why we choose to build in reasoning about $\Box_{\text{tae}}$ by developing \PdTL instead of having the user edit the model by, for example, making the postcondition $v{\leq}100$ instead of $v{<}100$.
Indeed, in this particular case that would make the program safe at every moment in time.
However, in other examples, editing the postcondition in a similar way may be unwise.
For example, although $\rae{x := 0; y := 0; \{x' = 0, y' = 1\}}(y{>}0 \to x{>}1\lor x{<}1)$ is valid, if we relax the inequalities, $\rae{x := 0; y := 0; \{x' = 0, y' = 1\}}(y{\geq}0 \to x{\geq}1\lor x{\leq}1)$ is not.
As reasoning about hybrid systems is so subtle, the most user-friendly approach to eliding sets of measure zero is to specifically introduce the rigorous ability to ignore sets of measure zero into the logic.

\section{Semantics of \PdTL}\label{sec:Semantics}
We now report a trace semantics for hybrid programs, based on which we give meaning to the informal concept of formulas being true almost everywhere in time, first along an individual trace and then along all traces of a hybrid program.

\subsection{Semantics of State Formulas}
State formulas are evaluated at states, which capture the behavior of the hybrid program at an instant in time. Each state contains the values of all relevant variables at a given instant.
We formalize this in the following definition.
\begin{definition} A \textit{state} is a map $\firstStateSymbol: \realValuedVariableSet \to \mathbb{R}$.
We distinguish a separate state $\Lambda$ to indicate the failure of a system run.
The set of all states is $\stateSet{\realValuedVariableSet}$. 
\end{definition}

We now give the semantics of state formulas.
The $\val{\firstStateSymbol}{\formulaSymbol}$ operator determines whether state formula $\formulaSymbol$ is true or false in state $\firstStateSymbol$.
The valuations of the state formulas $[\alpha] \kappa$ and $\langle \alpha \rangle \kappa$ depend on the semantics of traces $\tracesymbol$ (especially the notion of $\text{first }\tracesymbol$), which is explained in \rref{def:dTLTraceDef}, and on the semantics of the trace formula $\kappa$ (i.e., $\val{\firstStateSymbol}{\kappa}$) which is given later, in \rref{def:traceformulasemantics}, and may be undefined.
\begin{definition}[\cite{Pl}]\label{def:stateformulaSemantics} 
The valuation of state formulas with respect to state $\firstStateSymbol$ is defined inductively:
\begin{enumerate}
\item $\val{\firstStateSymbol}{p(\theta_1, \dots, \theta_n)}$ is $p^{\ell}(\val{\firstStateSymbol}{\theta_1}, \dots, \val{\firstStateSymbol}{\theta_n})$ where $p^{\ell}$ is the relation associated with $p$ under the semantics of real arithmetic
\item $\val{\firstStateSymbol}{\lnot \formulaSymbol}$ is true iff $\val{\firstStateSymbol}{\formulaSymbol}$ is false
\item $\val{\firstStateSymbol}{\formulaSymbol \land \secondFormulaSymbol} $ is true  iff $\val{\firstStateSymbol}{\formulaSymbol}$ is true and $\val{\firstStateSymbol}{\secondFormulaSymbol}$ is true
\item $\val{\firstStateSymbol}{\forall x\, \formulaSymbol} $ is true  iff $\val{\updateState{\firstStateSymbol}{x}{d}}{\formulaSymbol} $ is true  for all $d \in \mathbb{R}$, where $\updateState{\firstStateSymbol}{x}{d}$ is the state that is identical to $\firstStateSymbol$, except $x$ has the value $d$.
\item $\val{\firstStateSymbol}{[\alpha] \kappa}$ is true  iff for every trace $\tracesymbol$ of $\alpha$ that starts in $\text{first }\tracesymbol = \firstStateSymbol$, if $\val{\tracesymbol}{\kappa}$ is defined, then $\val{\tracesymbol}{\kappa}$ is true 
\item $\val{\firstStateSymbol}{\langle \alpha \rangle \kappa} $ is true  iff there is some trace $\tracesymbol$ of $\alpha$ where $\text{first }\tracesymbol = \firstStateSymbol$ and $\val{\tracesymbol}{\kappa}$ is true
\end{enumerate}
We write $\firstStateSymbol \models \formulaSymbol$ when $\val{\firstStateSymbol}{\formulaSymbol}$ is true.
We write $\firstStateSymbol \not\models \formulaSymbol$ when $\val{\firstStateSymbol}{\formulaSymbol}$ is false.
\end{definition}

\subsection{Traces of Hybrid Programs}\label{sec:TracesSection} 
Trace formulas are evaluated with respect to an execution trace of a hybrid program.
Intuitively, a trace of a hybrid program views its behavior over time as a sequence of functions, where each function corresponds to a particular discrete or continuous portion of the dynamics.
Most hybrid systems are associated with multiple traces (to reflect the variety of behaviors that a given program can exhibit).
Each function within a trace maps from a time interval to states of the hybrid program.  
Continuous portions of traces are functions from an uncountable time interval, and thus are associated to uncountably many states, whereas discrete portions involve just a single state.
Significantly, traces are allowed to end in an abort state $\Lambda$, which indicates an unsuccessful run of a program.
Aborts are incurred when tests fail and when evolution domain constraints are not initially satisfied.
No program can run past $\Lambda$.
We review the formal definition below.

\begin{definition}[{\cite{Pl}}]\label{def:dTLTraceDef}
A \textit{trace} $\tracesymbol$ of a hybrid program $\alpha$ is a sequence of functions $\tracesymbol = (\tracesymbol_0, \tracesymbol_1, \dots, \tracesymbol_n)$ where $\tracesymbol_i: [0, r_i] \to \stateSet{\realValuedVariableSet}$.
We will denote the length of the interval associated to $\tracesymbol_i$ by $|\tracesymbol_i|$ (so that if $\tracesymbol_i$ maps from $[0, r_i]$ to states of $\alpha$, $|\tracesymbol_i| = r_i$).
A \textit{position} of $\tracesymbol$ is a tuple $(i, \zeta)$ where $i \in \mathbb{N}$, $\zeta \in [0, r_i]$.
Each position $(i, \zeta)$ is associated with the corresponding state $\tracesymbol_i(\zeta)$.
A trace $(\tracesymbol_0, \tracesymbol_1, \dots, \tracesymbol_n)$ is said to \textit{terminate} if it does not end in the abort state, i.e. if $\tracesymbol_n(|\tracesymbol_n|) \neq \Lambda$,
and we write $\text{last } \tracesymbol \equiv \tracesymbol_n(|\tracesymbol_n|)$ in that case.
We write $\text{first } \tracesymbol \equiv \tracesymbol_0(0)$ for the first state.
\end{definition}

We now modify the trace semantics \cite{Pl} using Carath\'{e}odory solutions for ODEs \cite{walter} using notation as in \cite[Definition 2.6]{Pl2}.

\begin{definition}\label{def:CaraDef} The state $\secondStateSymbol$ is \emph{reachable in the extended sense} from initial state $\firstStateSymbol$ by $x_1' = \theta_1, \dots, x_n' = \theta_n\ \&\ \evDomainConstraint$ iff there is a function $\varphi : [0,r] \to \stateSet{\realValuedVariableSet}$ s.t.:
\begin{enumerate}
\item Initial and final states match: $\varphi(0) = \firstStateSymbol, \varphi(r) = \secondStateSymbol$.
\item $\varphi$ is absolutely continuous.
\item $\varphi$ respects the differential equations almost everywhere: For each variable $x_i$, $\evalInState{\varphi(z)}{x_i}$ is continuous in $z$ on $[0, r]$ and if $r {>} 0$, $\evalInState{\varphi(z)}{x_i}$ has a time-derivative of value $\evalInState{\varphi(z)}{\theta_i}$ at all $z \in [0, r]\setminus \mathcal{U}$, for some set $\mathcal{U} \subset [0, r]$ that has Lebesgue measure zero.
\item The value of other variables $y \not \in \{x_1, \dots, x_n\}$ remains constant throughout the continuous evolution, that is $\evalInState{\varphi(z)}{y} = \evalInState{\firstStateSymbol}{y}$ for all times $z \in [0,r]$;
\item $\varphi$ respects the evolution domain at all times: $\varphi(z) \models \evDomainConstraint$ for all $z \in [0, r]$.
\end{enumerate}
If such a $\varphi$ exists, we say that $\varphi \models x_1' = \theta_1 \land \cdots \land x_n' = \theta_n\ \&\ \evDomainConstraint$ almost everywhere.
\end{definition}
This change highlights how the PHS intuition aligns with the intuition behind Carath\'{e}odory solutions.
However, we have left the syntax of hybrid programs unchanged, and any system of differential equations in this syntax has a unique classical solution by Picard-Lindel\"{o}f \cite{Pl2}.
We believe that in order to determine a suitable generalization of the syntax of hybrid programs to allow systems of ODEs with Carath\'{e}odory solutions, one should first develop strategies for reasoning about ODEs in \PdTL that are beyond the scope of this work (for example, a notion of differential invariants---see \cite{Pl2,DBLP:conf/lics/PlatzerT18}).

Note that in condition 5 of \rref{def:CaraDef}, the solution is required to stay within the evolution domain constraint at \emph{all} times. 
This is because evolution domain constraints, when used correctly, are nonnegotiable---for example, a correct use of evolution domain constraints is to reflect some underlying property of physics, like that the speed of a decelerating system is always nonnegative.

We define the trace semantics \cite{Pl}, with the above change for ODEs.
Like evolution domain constraints, we treat tests as nonnegotiable, so as not to interfere with a user's ability to write precise models.
We do not intend to secretly change the meaning of models, but rather to identify physically correct models.
\begin{definition} \label{def:dTLTraceSemantics}  The \textit{trace semantics}, $\tau(\alpha)$, of a hybrid program $\alpha$ is the set of all its possible hybrid traces and is defined inductively as follows (where $e \in \termSet{\variableSet}$, $x' = f(x)$ is a vectorial ODE, $\evDomainConstraint$ and $\FOLformulaSymbol$ are FOL formulas, $\beta$ is a hybrid program, and where for a state $\firstStateSymbol$, $\hat{\firstStateSymbol}$ is the function from $[0, 0] \to \stateSet{\realValuedVariableSet}$ with $\hat{\firstStateSymbol}(0) = \firstStateSymbol)$:
\begin{enumerate}
\item $\tau(x:=e) = \{(\hat{\firstStateSymbol}, \hat{\secondStateSymbol}): \secondStateSymbol = \updateState{\firstStateSymbol}{x}{val(\firstStateSymbol, e)} \text{ for } \firstStateSymbol \in \stateSet{\realValuedVariableSet}\}$
\item $\tau(x'{=}f(x)\ \&\ \evDomainConstraint) = \{(\varphi): \varphi \models x'{=}f(x)\ \&\ \evDomainConstraint \text{ almost everywhere } \} \cup \{(\hat{\firstStateSymbol}, \hat{\Lambda}): \firstStateSymbol \not \models \evDomainConstraint \}$
\item $\tau(\alpha \cup \beta) = \tau(\alpha) \cup \tau(\beta)$
\item $\tau(?\FOLformulaSymbol) = \{(\hat{\firstStateSymbol}): val(\firstStateSymbol, \FOLformulaSymbol) = true\} \cup \{(\hat{\firstStateSymbol}, \hat{\Lambda}) : val(\firstStateSymbol, \FOLformulaSymbol) = false\}$
\item $\tau(\alpha; \beta) = \{\tracesymbol \circ \zeta: \tracesymbol \in \tau(\alpha), \zeta \in \tau(\beta) \text{ when } \tracesymbol \circ \zeta \text{ is defined}\}$; the composition of $\tracesymbol = (\tracesymbol_0, \tracesymbol_1, \tracesymbol_2, \dots)$ and $\zeta = (\zeta_0, \zeta_1, \zeta_2, \dots)$ is 
\[ \tracesymbol~\circ~\zeta := 
\begin{cases} 
(\tracesymbol_0, \dots, \tracesymbol_n, \zeta_0, \zeta_1, \dots) &\text{ if }\tracesymbol \text{ terminates at }\tracesymbol_n \text{ and last }\tracesymbol{=}\text{first }\zeta \\
\tracesymbol &\text{ if }\tracesymbol \text{ does not terminate}\\
\text{not defined} & \text{otherwise}
\end{cases}
\]
\item $\tau(\alpha^*) = \bigcup_{n \in \mathbb{N}}\tau(\alpha^n)$, where $\alpha^{n+1} := (\alpha^n; \alpha)$ for $n{\geq}1$, and $\alpha^0 :=\ ?(true)$
\end{enumerate}

\end{definition}
As an important remark, notice that if we have a trace $\tracesymbol = (\tracesymbol_0, \dots, \tracesymbol_n)$ and $|\tracesymbol_i| {>} 0$, then $\tracesymbol_i \in \tau(x'=f(x)\ \&\ \evDomainConstraint)$ for some (vectorial) ODE $x'=f(x)$ and some evolution domain constraint $\evDomainConstraint$. In other words, only continuous portions of a trace have nonzero duration, and continuous portions are only introduced when our system is evolving subject to a system of differential equations.

We are almost ready to give the semantics of trace formulas, but first we need to take a slight detour to discuss what formulas make ``physical sense''.

\subsection{Physical Formulas} \label{sec:PhysicalFormulasSection}
One feature of our motivating example is that $v{<}100$ is not a physically meaningful postcondition: If $v$ is allowed to get arbitrarily close to $100$, $v{=}100$ should also be allowed, since there is no physically measurable difference between $v{<}100$ and $v{\leq}100$. The postcondition $v{\leq}100$ is, mathematically speaking, less restrictive than $v{<}100$, but also practically speaking, the same as $v{<}100$.  Motivated by this intuition, we define ``physical formulas''.

\subsubsection{Physical Formulas and $\cl{\formulaSymbol}$}
Geometrically, the set of states in which a state formula $\formulaSymbol$ in $n$ variables is true is a subset, $\setAssociTo{\formulaSymbol} = \{(x_1, \dots, x_n) \in \mathbb{R}^n\ |\ \formulaSymbol(x_1, \dots, x_n)\}$, of $\mathbb{R}^n$. We use this correspondence to define the physical version of $\formulaSymbol$.

\begin{definition}\label{def:physicalPSemantic} A state formula  $\formulaSymbol$ is called \emph{physical} iff $\setAssociTo{\formulaSymbol}$ is topologically closed. If $\formulaSymbol$ is a formula in $n$ variables $x_1, \dots, x_n$, then the \textit{physical version} of $\formulaSymbol$ is the closure, denoted by $\cl{\formulaSymbol}$, and is, indeed, definable \cite{BCR} by:
\[\forall \epsilon {>} 0\, \exists y_1, \dots, y_n \ \left(\formulaSymbol(y_1, \dots, y_n) \land (x_1 - y_1)^2 + \cdots + (x_n - y_n)^2{<}\epsilon^2\right)
\]
This satisfies $\setAssociTo{\cl{\formulaSymbol}} = \overline{\setAssociTo{\formulaSymbol}}$, where $\overline{\setAssociTo{\formulaSymbol}}$ is the topological closure of $\setAssociTo{\formulaSymbol}$.
Quantifier elimination can compute a quantifier-free equivalent of $\cl{\formulaSymbol}$ that is often preferable.
\end{definition}

Associating a state formula to a subset of $\mathbb{R}^n$ is useful for identifying which points are ``almost included'', which will be crucial knowledge for our temporal approach. In the train example, $v{=}100$ is ``almost included'' in the postcondition $v{<}100$. These points that are ``almost included'' in formula $\formulaSymbol$ are exactly the limit points of the set associated to $\formulaSymbol$, so $\overline{\setAssociTo{\formulaSymbol}}$ adds in all of these limit points.
We will make use of the following properties of the physical version of $\formulaSymbol$.

\begin{proposition}\label{prop:clProp2} For any state formula $\formulaSymbol$, $\formulaSymbol \to \cl{\formulaSymbol}$ is valid (i.e., true in all states).
\end{proposition}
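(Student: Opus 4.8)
The plan is to prove validity directly from the definition of $\cl{\formulaSymbol}$ in \rref{def:physicalPSemantic} by supplying an explicit witness for the innermost existential quantifier. First I would fix an arbitrary state $\firstStateSymbol$; if $\firstStateSymbol \not\models \formulaSymbol$ the implication holds vacuously, so assume $\firstStateSymbol \models \formulaSymbol$. Since the truth of $\formulaSymbol$ depends only on the variables $x_1, \dots, x_n$ occurring in it, I would reduce to a statement purely about real arithmetic: writing $a_i = \evalInState{\firstStateSymbol}{x_i}$ for the value of $x_i$ in $\firstStateSymbol$, the hypothesis $\firstStateSymbol \models \formulaSymbol$ says exactly that $\formulaSymbol(a_1, \dots, a_n)$ holds, and the goal $\firstStateSymbol \models \cl{\formulaSymbol}$ unfolds to the claim that for every $\epsilon {>} 0$ there exist $y_1, \dots, y_n$ with $\formulaSymbol(y_1, \dots, y_n)$ and $(a_1 - y_1)^2 + \cdots + (a_n - y_n)^2 {<} \epsilon^2$.

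The crucial step is the choice of witness $y_i = a_i$ for each $i$, which makes the distance vanish. With this choice the first conjunct is just $\formulaSymbol(a_1, \dots, a_n)$, true by hypothesis, and the second conjunct becomes $0 {<} \epsilon^2$, which is true for every $\epsilon {>} 0$. Since $\epsilon$ was arbitrary, $\firstStateSymbol \models \cl{\formulaSymbol}$, and since $\firstStateSymbol$ was arbitrary, $\formulaSymbol \to \cl{\formulaSymbol}$ is true in all states, i.e.\ valid.

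As a sanity check, the same fact can be read off geometrically: \rref{def:physicalPSemantic} records that $\setAssociTo{\cl{\formulaSymbol}} = \overline{\setAssociTo{\formulaSymbol}}$, and since every subset of $\mathbb{R}^n$ is contained in its topological closure, $\setAssociTo{\formulaSymbol} \subseteq \overline{\setAssociTo{\formulaSymbol}} = \setAssociTo{\cl{\formulaSymbol}}$, which is exactly the validity of $\formulaSymbol \to \cl{\formulaSymbol}$. I expect essentially no obstacle here; the only point requiring any care is the routine bookkeeping that translates semantic validity (truth in every state) into the real-arithmetic statement over the free variables $x_1, \dots, x_n$, after which the zero-distance witness closes the goal immediately.
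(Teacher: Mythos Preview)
Your proof is correct. Your primary argument unfolds the explicit first-order definition of $\cl{\formulaSymbol}$ from \rref{def:physicalPSemantic} and closes it with the zero-distance witness $y_i = a_i$, whereas the paper's proof goes directly via the topological fact $\setAssociTo{\formulaSymbol} \subseteq \overline{\setAssociTo{\formulaSymbol}} = \setAssociTo{\cl{\formulaSymbol}}$; in fact your ``sanity check'' paragraph is essentially the paper's entire proof. Both routes are immediate; the explicit-witness version has the minor virtue of not appealing to the identity $\setAssociTo{\cl{\formulaSymbol}} = \overline{\setAssociTo{\formulaSymbol}}$ as a black box, while the paper's version is shorter.
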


\begin{proofatend}
By properties of the topological closure of a set, $\setAssociTo{\formulaSymbol} \subseteq \overline{\setAssociTo{\formulaSymbol}}$.
Since $\overline{\setAssociTo{\formulaSymbol}} = \setAssociTo{\cl{\formulaSymbol}}$ from \rref{def:physicalPSemantic}, we have  $\setAssociTo{\formulaSymbol} \subseteq \setAssociTo{\cl{\formulaSymbol}}$,  and so given any state $\firstStateSymbol$, if $\firstStateSymbol \models \formulaSymbol$ is true, then $\firstStateSymbol \models \cl{\formulaSymbol}$ is true.
Thus the formula $\formulaSymbol \to \cl{\formulaSymbol}$ is valid. 
\end{proofatend}

\begin{proposition}\label{prop:clProp1}
The following proof rule is sound for state formulas $\formulaSymbol, \secondFormulaSymbol$ (i.e., the validity of all premises implies the validity of the conclusion):
\begin{center}
\begin{prooftree}
\hypo{\formulaSymbol \to \secondFormulaSymbol}
\infer1[TopCl]{\cl{\formulaSymbol} \to \cl{\secondFormulaSymbol}}
\end{prooftree}
\end{center}
\end{proposition}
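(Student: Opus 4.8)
The plan is to reduce the soundness of TopCl to the monotonicity of the topological closure operator. The bridge between the logical and the geometric pictures is the identity $\setAssociTo{\cl{\formulaSymbol}} = \overline{\setAssociTo{\formulaSymbol}}$ from \rref{def:physicalPSemantic}, which lets me translate validity statements about $\cl{\formulaSymbol}$ and $\cl{\secondFormulaSymbol}$ into inclusions between closures of subsets of $\mathbb{R}^n$, and back again.

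First I would unpack the premise. Saying that $\formulaSymbol \to \secondFormulaSymbol$ is valid means every state satisfying $\formulaSymbol$ also satisfies $\secondFormulaSymbol$, which geometrically is precisely the set inclusion $\setAssociTo{\formulaSymbol} \subseteq \setAssociTo{\secondFormulaSymbol}$. To make both associated sets live in the same ambient $\mathbb{R}^n$, I would let the coordinates $x_1, \dots, x_n$ range over the union of the variables occurring in $\formulaSymbol$ and in $\secondFormulaSymbol$; since the truth of a state formula depends only on the values of its free variables, this change of ambient space does not affect which states satisfy either formula, and the inclusion faithfully mirrors the implication.

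Next I would invoke the monotonicity of topological closure: if $A \subseteq B$ then $\overline{A} \subseteq \overline{B}$. Applying this to the inclusion above yields $\overline{\setAssociTo{\formulaSymbol}} \subseteq \overline{\setAssociTo{\secondFormulaSymbol}}$. Rewriting each side using $\setAssociTo{\cl{\formulaSymbol}} = \overline{\setAssociTo{\formulaSymbol}}$ and $\setAssociTo{\cl{\secondFormulaSymbol}} = \overline{\setAssociTo{\secondFormulaSymbol}}$ from \rref{def:physicalPSemantic} gives $\setAssociTo{\cl{\formulaSymbol}} \subseteq \setAssociTo{\cl{\secondFormulaSymbol}}$. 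Translating this inclusion back into logic, every state $\firstStateSymbol$ with $\firstStateSymbol \models \cl{\formulaSymbol}$ also satisfies $\firstStateSymbol \models \cl{\secondFormulaSymbol}$, which is exactly the validity of the conclusion $\cl{\formulaSymbol} \to \cl{\secondFormulaSymbol}$.

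I do not expect a genuine obstacle here, as the argument is essentially just monotonicity of closure combined with the defining property of $\cl{\cdot}$. The only point demanding a little care is the bookkeeping about variable sets in the second step—ensuring that $\setAssociTo{\formulaSymbol}$ and $\setAssociTo{\secondFormulaSymbol}$ are compared inside one common $\mathbb{R}^n$, and that the closures in the third step are taken in that same ambient space, so that the geometric inclusions correspond exactly to the logical implications. Once the ambient space is fixed, each remaining step is a direct application of a standard topological fact or of \rref{def:physicalPSemantic}.
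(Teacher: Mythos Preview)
Your proposal is correct and follows essentially the same argument as the paper: translate validity of $\formulaSymbol \to \secondFormulaSymbol$ into the set inclusion $\setAssociTo{\formulaSymbol} \subseteq \setAssociTo{\secondFormulaSymbol}$, apply monotonicity of topological closure, and use \rref{def:physicalPSemantic} to translate back. Your additional remark about fixing a common ambient $\mathbb{R}^n$ for the variables of both formulas is a point the paper leaves implicit, but otherwise the two proofs are the same.
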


\begin{proofatend}
Assume $\formulaSymbol \to \secondFormulaSymbol$ is valid.  Then every state $\secondStateSymbol$ with $\secondStateSymbol \models \formulaSymbol$ also satisfies $\secondStateSymbol \models \secondFormulaSymbol$.  Equivalently, $\setAssociTo{\formulaSymbol} \subseteq \setAssociTo{\secondFormulaSymbol}$.  By properties of the closure of a set, we then also have $\overline{\setAssociTo{\formulaSymbol}} \subseteq \overline{\setAssociTo{\secondFormulaSymbol}}$. This is equivalent to $\setAssociTo{\cl{\formulaSymbol}} \subseteq \setAssociTo{\cl{\secondFormulaSymbol}}$ by \rref{def:physicalPSemantic}, and so any state $\secondStateSymbol$ with $\secondStateSymbol \models \cl{\formulaSymbol}$ also satisfies $\secondStateSymbol \models \cl{\secondFormulaSymbol}.$  Thus, $\cl{\formulaSymbol} \to \cl{\secondFormulaSymbol}$ is valid.
\end{proofatend}

\subsection{Semantics of Trace Formulas}\label{sec:LogicalConstructsSection}
Intuitively, we want to say that, for a trace $\tracesymbol$, $\tracesymbol \models \Box_{\text{t.a.e}} \formulaSymbol$ when there is only a ``small'' set of positions $(i, \zeta)$ where $\tracesymbol_i(\zeta) \not \models \formulaSymbol$ and where the discrete portions of $\tracesymbol$ satisfy a reasonable constraint.
To formalize the notion of a ``small'' set of positions, we map positions of $\tracesymbol$ to $\mathbb{R}$, since $\mathbb{R}$ admits the Lebesgue measure.

\begin{definition}\label{def:taef}
Given a trace $\tracesymbol = (\tracesymbol_0, \dots, \tracesymbol_n)$ of a hybrid program $\alpha$ with $|\tracesymbol_i| = r_i$, map each position $(i, \zeta)$ of $\tracesymbol$ to $\zeta + i + \sum_{k=0}^{i-1} |\tracesymbol_k|$, so that the positions $(0, 0), \dots, (0, r_0)$ cover the interval $[0, r_0]$, the positions $(1, 0), \dots, (1, r_1)$ cover the interval $[r_0 + 1, r_0 + r_1 + 1]$, and so on.
In this way we have an injection, which we call $f$, from positions of a trace $\tracesymbol$ to (a subset of) $\mathbb{R}$.
\end{definition}

\rref{fig:taemap} illustrates the mapping $f$, which is obtained by first concatenating the positions between each discrete step (i.e. the positions for each continuous function $\tracesymbol_i$) and then projecting these concatenations onto a single time axis so that the images of the states for $\tracesymbol_i$ and the states for $\tracesymbol_j$ are disjoint when $i \neq j$. To ensure disjointness, our mapping places an open interval of unit length between the images of the states of $\tracesymbol_i$ and the states of $\tracesymbol_{i+1}$ (for all $i$). The unit length was chosen arbitrarily---any nonzero length would work---but it is important that the positions $(i, r_i)$ and $(i+1, 0)$ have different projections, since discrete changes can cause their states $\tracesymbol_i(r_i)$ and $\tracesymbol_{i+1}(0)$ to be different.
\begin{figure}[!ht]
\begin{center}
\includegraphics[scale = .5]{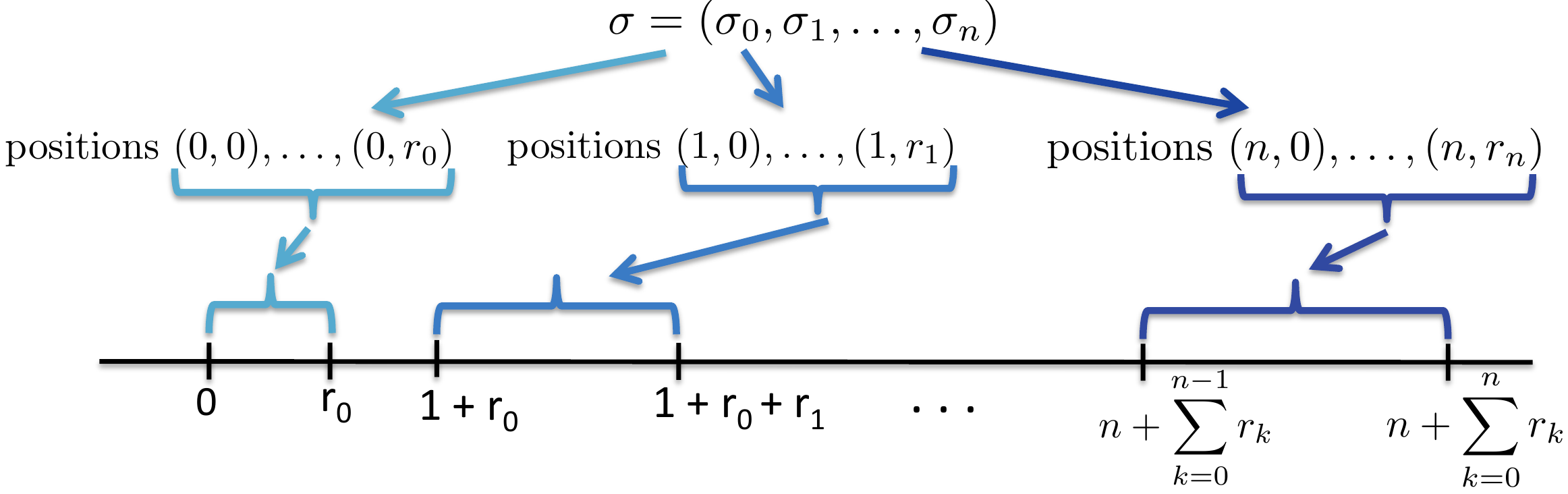}
\caption{Injectively mapping positions of a trace to times in $\mathbb{R}$}
\label{fig:taemap}
\end{center}
\end{figure}

\begin{definition}\label{def:traceformulasemantics} 
The valuation of a trace formula $\kappa$ with respect to trace $\tracesymbol$ is defined as:
\begin{enumerate}
    \item $\val{\tracesymbol}{\formulaSymbol} = \val{\text{last }\tracesymbol}{\formulaSymbol}$ if $\tracesymbol$ terminates.
    If $\tracesymbol$ does not terminate, then $\val{\tracesymbol}{\formulaSymbol}$ is undefined. 
    We write $\tracesymbol \models \formulaSymbol$ when $\val{\tracesymbol}{\formulaSymbol}$ is true.
    We write $\tracesymbol \not \models \formulaSymbol$ when $\val{\tracesymbol}{\formulaSymbol}$ is false.
    \item Let $\mathcal{U}$ be the set of positions $(i, \zeta)$ where corresponding states $\tracesymbol_i(\zeta)$ satisfy $\tracesymbol_i(\zeta) \not \models \formulaSymbol$ and $\tracesymbol_i(\zeta) \neq \Lambda$.
    We say that $\val{\tracesymbol}{\Box_{\text{tae}}\formulaSymbol}$ is true iff the following two conditions are satisfied:
\begin{enumerate}
\item (Discrete condition) For all $i$, if $|\tracesymbol_i| = 0$ and $\tracesymbol_i(0) \neq \Lambda$, then $\tracesymbol_i(0) \models \cl{\formulaSymbol}$.
\item (Continuous condition) $f(\mathcal{U}) \subseteq \mathbb{R}$ has measure zero with respect to the Lebesgue measure, where $f$ is the mapping defined in \rref{def:taef} for $\tracesymbol$; i.e., for all $\epsilon {>} 0$ there exist intervals $I_p = [a_p, b_p]$ so that $f(\mathcal{U}) \subseteq \bigcup_{p = 0}^{\infty} I_p$ and $\sum_{p = 0}^{\infty} |b_p - a_p|{<}\epsilon$ (see \cite[Section 1-1013]{M0Jeffreys}, or \cite{Royden}).
\end{enumerate}
We write $\tae{\tracesymbol}\formulaSymbol$ when $\val{\tracesymbol}{\Box_{\text{tae}}\formulaSymbol} $ is true.
We write $\ntae{\tracesymbol}\formulaSymbol$ when $\val{\tracesymbol}{\Box_{\text{tae}}\formulaSymbol}$ is false.
\end{enumerate}
\end{definition}
\noindent A short primer on the measure theory we need is in \rref{app:MeasureTheory}.

With this mapping, the condition that $f(\mathcal{U})$ has measure zero with respect to the Lebesgue measure enforces the t.a.e. constraint for the continuous portions of our program. The image of the states for $\tracesymbol_i$ is an interval of length $r_i$. In order for the t.a.e constraint to be satisfied, $\tracesymbol_i$ cannot go wrong except at a ``small set'' of states, where we use our mapping to formalize the notion of a ``small set'' in terms of measure zero.

Next, the condition that $\tracesymbol_i(0) \models \cl{\formulaSymbol}$ whenever $|\tracesymbol_i| = 0$ constrains the discrete portions of a program. This constraint will be important for induction; it also ensures that discrete programs behave reasonably within our logic. For example, let $\alpha$ be the fully discrete program $x := 5; (x := x + 1)^*$, and take a trace $\tracesymbol$ of $\alpha$. The states of $\tracesymbol$ map to the points $5, 6, 7, \dots, n$, and \textit{without the discrete condition}, we would be able to show that $\rae{x := 5; (x := x + 1)^*}{x{<}5}$ is valid, even though $x$ is never less than 5 along the trace.

To understand why the discrete condition specifies $\tracesymbol_i(0) \models \cl{\formulaSymbol}$ when $|\tracesymbol_i| = 0$ instead of $\tracesymbol_i(0) \models \formulaSymbol$ when $|\tracesymbol_i| = 0$, recall the motivating train control example.
We want to allow the velocity of the train to evolve from a safe state where $v{<}100$ to an unsafe state where $v = 100$, as long as the train then immediately brakes (sets its acceleration to a negative value).
If we specified that $\tracesymbol_i(0) \models \formulaSymbol$, the train would \textit{not} be allowed to accelerate from $v{<}100$ to $v = 100$ and then brake, because at the discrete braking point, the train would be in a state that is unsafe mathematically (though still safe physically).

Given a hybrid program $\alpha$, postcondition $\formulaSymbol$, and a state $\firstStateSymbol$, we are interested in determining whether $\firstStateSymbol \models \rae{\alpha}{\formulaSymbol}$ holds, because when such a formula is true for a given hybrid program, that indicates that no matter how that particular hybrid program runs, it will be ``safe almost everywhere''. Following \rref{def:stateformulaSemantics}, this is true iff for each trace $\tracesymbol \in \tau(\alpha)$ with $\text{first }\tracesymbol = \firstStateSymbol$, $\tae{\tracesymbol}{\formulaSymbol}$.

\section{Discussion} \label{sec:discussion}
Now that we have developed the semantics, we step back to consider how \PdTL makes progress towards PHS, and why \PdTL is a good way to introduce PHS.

\paragraph{Impact on Modeling.}
\PdTL allows the verification of several classes of physically realistic models that are mathematically not quite safe.
For example, in \rref{sec:Example} we will explain how \PdTL allows the verification of the train control model.
This simple train example is representative of a greater class of examples---tiny glitches are common in event-triggered controllers, since the event that is being detected is often an almost unsafe event that requires the controller to immediately change behavior.

Other examples do not involve time-triggered controllers, but rather suffer from tiny glitches at handover points between the discrete and continuous dynamics within a hybrid program.  Consider the safety postcondition $x^2 + y^2{<}1$ and the hybrid program $x := 0; y := 1; \{x' = -x, y' = -y\}$.
The only glitch is that the program starts ever so slightly outside the safe set.
Since it immediately moves into the safe set, all runs of this hybrid program are safe tae.
\PdTL is designed to classify $\rae{x := 0; y := 1; \{x' = -x, y' = -y\}}{x^2 + y^2{<}1}$ as valid.

In another class of examples, our approach handles tiny glitches within the continuous portion of the program.
This can easily happen if the postcondition is missing some small regions.
For example, consider two robots that are moving, one in front of the other.
Since we do not want the robots to collide, it is unsafe for the second robot to accelerate while the first robot is braking.
Say we model this with safety postcondition $\lnot(a_1{\leq}0 \land a_2{\geq}0)$.
This is a small modeling mistake, because we should allow the point where $a_1{=}0$ and $a_2{=}0$.
Now, if our controller is $a_1 := -1; a_2 := -1; \{a_1' = 1, a_2' = 1\}$, any run of this hybrid program is tae safe, but not safe at all points in time (as some runs will contain the origin).
Notably, the very similar controller $a_1 := -1; a_2 := -1; \{a_1' = 1, a_2' = 2\}$ is \textit{not} tae safe.
\PdTL is designed to distinguish between these two controllers.

\paragraph{Why tae?}\label{sec:sae}
The tae safety notion along the trace of a hybrid system is a natural approach with strong mathematical underpinnings (e.g., from the invariance of Lebesgue integrals up to sets of measure zero, and from Carath\'{e}odory solutions), and with physical motivation from examples like those just discussed.
Introducing tae is a good way to begin PHS, because it may be the closest possible PHS construct to the canonical notion of ``safety everywhere''.
However, this closeness to safety everywhere does make tae more restrictive than some other possible PHS notions---for example, a notion of safety ``space almost everywhere'', or sae.

Consider a self-driving car moving in $\mathbb{R}^3$.
The final states of a hybrid program $\alpha$ modeling the car correspond to positions in $\mathbb{R}^3$, so here we may wish to consider safety almost everywhere with respect to the Lebesgue measure \textit{on the set of all possible final states of $\alpha$} as follows: Given a hybrid program $\alpha$ and precondition $\secondFormulaSymbol$, let $\mathcal{F} = \{\text{last } \tracesymbol \text{ s.t. } \tracesymbol \in \tau(\alpha), \text{first } \tracesymbol \models \secondFormulaSymbol\}$.
Say that $\sae{\secondFormulaSymbol}{\alpha}{\formulaSymbol}{\mu}$ if $\mu(\{\firstStateSymbol \in \mathcal{F}\ |\ \firstStateSymbol \not \models \formulaSymbol\}) = 0$, where $\mu$ is the Lebesgue measure on $\mathbb{R}^3$. 
This modality has an intuitive geometric interpretation and is more permissive than tae.

However, $\Box_{\mu\text{sae}}$ applies only when there is a natural measure $\mu$ on the set of all possible final states of $\alpha$.
Furthermore, $\Box_{\mu\text{sae}}$ is not compositional.
Given $\sae{\FOLformulaSymbol}{\alpha}{\FOLformulaSymbol}{\mu}$ and  $\sae{\FOLformulaSymbol}{\beta}{\FOLformulaSymbol}{\mu}$, in order to conclude that  $\sae{\FOLformulaSymbol}{\alpha; \beta}{\FOLformulaSymbol}{\mu}$, one needs to know that $\beta$ is sae safe when starting in $\setAssociTo{\FOLformulaSymbol} \cup \secondFOLformulaSymbol$, for any measure zero set $\secondFOLformulaSymbol$ reachable from $\alpha$.
This is not always true---for example, let $P$ be $x^2 + y^2{<}1$, $\alpha$ be $\{x' = 1, y' = 1\ \&\ x^2 + y^2{\leq}1\}$ and $\beta$ be $?(x^2 + y^2{=}1); \{x' = 1, y' = 1\}$.
Further, given $\alpha$, it is unclear how to syntactically classify such sets $\secondFOLformulaSymbol$---as sae is more relaxed than tae, it also seems to be less well-behaved.
Thus, although $\Box_{\mu\text{sae}}$ has some advantages, it is not clear how to constrain it to achieve desirable logical properties like compositionality.
Although we hope that future work will develop a notion of safety sae, the challenges therein are no small matter.
In contrast, tae satisfies many nice logical properties, which we now turn our attention to.

\section{Proof Calculus and Properties of \PdTL}\label{sec:properties}
Before developing the proof calculus for \PdTL, we discuss some key properties.
First, \PdTL is a \textit{conservative} extension of \dL, i.e. all valid formulas of \dL are still valid in \PdTL.
The proof of this, discussed in \rref{app:Conservativity}, is essentially the same as the proof that \dTL is a conservative extension of \dL \cite[Proposition 4.1]{Pl}.
This conservativity property is useful, since if we are able to reduce temporal \PdTL formulas to \dL formulas, we can use the extensive machinery built for \dL to close proofs.
Indeed, our proof calculus is designed to reduce temporal \PdTL formulas into nontemporal formulas to rely on \dL's capabilities for the latter.

Key \dL axioms are proved sound for \PdTL in \rref{app:Conservativity}, which is useful as sometimes a \dL axiom is needed to reduce the goal in a proof, as we will see when we prove the train example in \rref{sec:Example}.

Next, we state three properties of temporal \PdTL formulas which underlie some of the soundness proofs for rules in the proof calculus.
These properties hold by construction.

\begin{lemma} \label{lem:laststatelemma} If $\tae{\tracesymbol}{\formulaSymbol}$ for a terminating $\tracesymbol$, then $\text{last } \tracesymbol \models \cl{\formulaSymbol}$.
\end{lemma}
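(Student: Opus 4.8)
The plan is to unfold the definition of $\tae{\tracesymbol}{\formulaSymbol}$ (\rref{def:traceformulasemantics}, item~2) and argue about the final state $\text{last }\tracesymbol = \tracesymbol_n(|\tracesymbol_n|)$, which is a genuine (non-$\Lambda$) state because $\tracesymbol$ terminates (\rref{def:dTLTraceDef}). I would split on the length of the last piece $\tracesymbol_n$. If $|\tracesymbol_n| = 0$, then $\text{last }\tracesymbol = \tracesymbol_n(0)$ is exactly a discrete state, so the Discrete condition of $\tae{\tracesymbol}{\formulaSymbol}$ applies directly (with $i = n$, using $\tracesymbol_n(0) \neq \Lambda$) and yields $\tracesymbol_n(0) \models \cl{\formulaSymbol}$, finishing this case immediately.

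The substantive case is $|\tracesymbol_n| = r_n > 0$. By the remark following \rref{def:dTLTraceSemantics}, a piece of nonzero duration is a continuous ODE piece, so $\tracesymbol_n = \varphi$ is a Carath\'{e}odory solution in the sense of \rref{def:CaraDef}; in particular it is absolutely continuous, hence continuous in each coordinate, and every $\tracesymbol_n(\zeta)$ is an actual state (never $\Lambda$). I would then transfer the Continuous condition to this last piece: the map $f$ of \rref{def:taef}, restricted to positions $(n,\zeta)$ with $\zeta \in [0, r_n]$, is the translation $\zeta \mapsto \zeta + c_n$ where $c_n = n + \sum_{k=0}^{n-1}|\tracesymbol_k|$, and translations preserve Lebesgue measure. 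Since $f(\mathcal{U})$ has measure zero and $\{f(n,\zeta) : \tracesymbol_n(\zeta)\not\models\formulaSymbol\} \subseteq f(\mathcal{U})$, it follows that the set of bad times $B = \{\zeta \in [0,r_n] : \tracesymbol_n(\zeta)\not\models\formulaSymbol\}$ has Lebesgue measure zero.

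From here the argument is topological. Because $B$ has measure zero while every interval $[r_n - \delta, r_n]$ with $\delta \in (0, r_n]$ has positive measure, each such interval contains a ``good'' time $\zeta$ with $\tracesymbol_n(\zeta)\models\formulaSymbol$. Choosing $\delta = 1/k$ yields good times $\zeta_k \to r_n$. Since each $\tracesymbol_n(\zeta_k)$ lies in $\setAssociTo{\formulaSymbol}$ and $\tracesymbol_n$ is continuous, $\tracesymbol_n(\zeta_k) \to \tracesymbol_n(r_n) = \text{last }\tracesymbol$ in the coordinates on which $\formulaSymbol$ depends (variables outside the ODE being constant, hence trivially continuous, by condition~4 of \rref{def:CaraDef}). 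Thus $\text{last }\tracesymbol$ is a limit point of $\setAssociTo{\formulaSymbol}$, so it belongs to $\overline{\setAssociTo{\formulaSymbol}} = \setAssociTo{\cl{\formulaSymbol}}$ by \rref{def:physicalPSemantic}, i.e.\ $\text{last }\tracesymbol \models \cl{\formulaSymbol}$.

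I expect the main obstacle to be the continuous case, specifically making rigorous the two transfers: first, that the measure-zero-ness of $f(\mathcal{U})$ descends to measure-zero-ness of the bad times $B$ on the last piece, which rests on $f$ being a measure-preserving translation there; and second, the limit-point step, where one must combine the density of the full-measure good set near $r_n$ with the continuity of the Carath\'{e}odory solution to place $\text{last }\tracesymbol$ in the closure $\setAssociTo{\cl{\formulaSymbol}}$. Care is also needed to treat states as points in the finitely many coordinates relevant to $\formulaSymbol$, so that the topological closure of \rref{def:physicalPSemantic} is the one being invoked.
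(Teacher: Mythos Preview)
Your proposal is correct and follows essentially the same approach as the paper: split on $|\tracesymbol_n|$, dispatch the discrete case via the Discrete condition, and in the continuous case combine the measure-zero bad set with continuity of the Carath\'{e}odory solution to place $\text{last }\tracesymbol$ in $\setAssociTo{\cl{\formulaSymbol}}$. The only cosmetic difference is that the paper argues the continuous case by contradiction (an open neighborhood of $\text{last }\tracesymbol$ disjoint from $\overline{\setAssociTo{\formulaSymbol}}$ pulls back to a positive-measure interval of bad times), whereas you argue directly via a sequence of good times converging to $r_n$; these are the two standard dual formulations of the same closure argument.
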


\begin{proofatend}
Say that $\tracesymbol = (\tracesymbol_0, \dots, \tracesymbol_n)$ and $\tracesymbol$ terminates.
Then if $|\tracesymbol_n| = 0$, $\tracesymbol_n(0) \models \cl{\formulaSymbol}$ by \rref{def:traceformulasemantics}, which implies the result since $\tracesymbol_n(0) = \text{last }\tracesymbol$.
Else, $r_n = |\tracesymbol_n| \ >\ 0$, then using \rref{def:dTLTraceSemantics}, we see that $\tracesymbol_n \in \tau(x'{=}f(x)\ \&\ \evDomainConstraint)$, i.e. $\tracesymbol_n \models x'{=}f(x)\ \&\ \evDomainConstraint \text{ almost everywhere}$, where $x'{=}f(x)$ is a vectorial ODE and $\evDomainConstraint$ is an evolution domain constraint.

Now, assume for the sake of contradiction that $\text{last } \tracesymbol = \tracesymbol_n(r_n) \not \models \cl{\formulaSymbol}$.
Thus $\tracesymbol_n(r_n) \not \in \setAssociTo{\cl{\formulaSymbol}}$.
Since $\setAssociTo{\cl{\formulaSymbol}} = \overline{\setAssociTo{\formulaSymbol}}$ by \rref{def:physicalPSemantic}, this means that $\tracesymbol_n(r_n) \not \in \overline{\setAssociTo{\formulaSymbol}}$.
Since $\overline{\setAssociTo{\formulaSymbol}}$ is closed, there is some open set $\mathcal{U}$ with $\tracesymbol_n(r_n) \in \mathcal{U}$ and $\mathcal{U} \cap \overline{\setAssociTo{\formulaSymbol}} = \emptyset$.
Recall that Carath\'{e}odory solutions are absolutely continuous. From this, $\sigma_n$ (which is the restriction of the Carath\'{e}odory solution to $x'{=}f(x)$ from time $0$ to time $r_n$) is absolutely continuous on $[0, r_n]$.
By the definition of continuity, for every open set $\mathcal{A}$ containing $\sigma_n(r_n)$, there is an open set $\mathcal{B}$ (with respect to the subspace topology of $[0, r_n]$) with $r_n \in \mathcal{B}$ so that for all $t \in \mathcal{B}$, $\sigma_n(t) \in \mathcal{A}$.
Setting $\mathcal{A} = \mathcal{U}$, we obtain in particular that there is a half-open interval $(t, r_n]$ with $0{\leq}t{<}r_n$ so that $\tracesymbol_n(\zeta) \not \in \overline{\setAssociTo{\formulaSymbol}}$ for all $t{<}\zeta{\leq}r_n$.
From this, by \rref{prop:clProp2}, all $t{<}\zeta{\leq}r_n$ satisfy $\tracesymbol_n(\zeta) \not \models \formulaSymbol$, which contradicts $\tae{\tracesymbol}{\formulaSymbol}$ (because $(t, r_n]$ is an interval of nonzero measure on which $\formulaSymbol$ is false).
Thus $\text{last }\tracesymbol \models \cl{\formulaSymbol}$, as desired.
 \end{proofatend}

\begin{corollary}\label{cor:EM} The formula $\rae{\alpha}{\formulaSymbol} \to [\alpha]\cl{\formulaSymbol}$ is valid.
\end{corollary}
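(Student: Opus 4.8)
The plan is to reduce the corollary directly to \rref{lem:laststatelemma} by unfolding the box semantics. Since validity means truth in every state, I would fix an arbitrary state $\firstStateSymbol$, assume $\val{\firstStateSymbol}{\rae{\alpha}{\formulaSymbol}}$ is true, and aim to show that $\val{\firstStateSymbol}{[\alpha]\cl{\formulaSymbol}}$ is true.

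First I would unfold the succedent using item~5 of \rref{def:stateformulaSemantics}: $\val{\firstStateSymbol}{[\alpha]\cl{\formulaSymbol}}$ is true iff for every trace $\tracesymbol \in \tau(\alpha)$ with $\text{first }\tracesymbol = \firstStateSymbol$ for which $\val{\tracesymbol}{\cl{\formulaSymbol}}$ is defined, $\val{\tracesymbol}{\cl{\formulaSymbol}}$ is true. Because $\cl{\formulaSymbol}$ is a state formula, item~1 of \rref{def:traceformulasemantics} says that $\val{\tracesymbol}{\cl{\formulaSymbol}}$ is defined exactly when $\tracesymbol$ terminates, in which case it equals $\val{\text{last }\tracesymbol}{\cl{\formulaSymbol}}$. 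So it suffices to fix an arbitrary terminating trace $\tracesymbol$ of $\alpha$ with $\text{first }\tracesymbol = \firstStateSymbol$ and establish $\text{last }\tracesymbol \models \cl{\formulaSymbol}$.

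Next I would exploit the hypothesis. Unlike $\cl{\formulaSymbol}$, the trace formula $\Box_{\text{tae}}\formulaSymbol$ is defined for every trace, since its valuation is given outright by item~2 of \rref{def:traceformulasemantics} and never appeals to termination. Hence, unfolding $\val{\firstStateSymbol}{\rae{\alpha}{\formulaSymbol}}$ via item~5 of \rref{def:stateformulaSemantics} yields that $\val{\tracesymbol}{\Box_{\text{tae}}\formulaSymbol}$ is true for every trace from $\firstStateSymbol$, and in particular for the terminating trace $\tracesymbol$ fixed above; that is, $\tae{\tracesymbol}{\formulaSymbol}$. Since $\tracesymbol$ terminates, \rref{lem:laststatelemma} immediately gives $\text{last }\tracesymbol \models \cl{\formulaSymbol}$, which is exactly what remained to be shown.

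Once \rref{lem:laststatelemma} is in hand, the argument is essentially bookkeeping, and the only point demanding care is the mismatch in definedness between the two trace formulas: the box in $[\alpha]\cl{\formulaSymbol}$ silently restricts attention to terminating traces, whereas $\rae{\alpha}{\formulaSymbol}$ constrains all traces. I expect the main obstacle to be tracking this asymmetry, i.e.\ confirming that the terminating traces---on which alone the succedent must be verified---are precisely those to which \rref{lem:laststatelemma} applies, so that no nontermination case is left dangling.
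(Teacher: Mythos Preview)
Your proposal is correct and follows essentially the same approach as the paper: fix a state, unfold the semantics of $[\alpha]\cl{\formulaSymbol}$ to reduce to terminating traces, use the hypothesis to obtain $\tae{\tracesymbol}{\formulaSymbol}$, and apply \rref{lem:laststatelemma}. Your extra care in noting that $\val{\tracesymbol}{\Box_{\text{tae}}\formulaSymbol}$ is always defined is accurate (the paper leaves this implicit), and the ``mismatch in definedness'' you flag is real but, as you yourself observe, amounts only to bookkeeping.
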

\begin{proofatend} Take some state $\firstStateSymbol$ with $\firstStateSymbol \models \rae{\alpha}{\formulaSymbol}$. We need to show that $\val{\firstStateSymbol}{[\alpha]\cl{\formulaSymbol}}$ is true.
By \rref{def:stateformulaSemantics}, $\val{\firstStateSymbol}{[\alpha]\cl{\formulaSymbol}}$ is true iff for every trace $\tracesymbol$ of $\alpha$ with first $\tracesymbol = \firstStateSymbol$, if $\val{\tracesymbol}{\cl{\formulaSymbol}}$ is defined, then $\val{\tracesymbol}{\cl{\formulaSymbol}}$ is $true$. Also by \rref{def:stateformulaSemantics}, $\val{\tracesymbol}{\cl{\formulaSymbol}}$ is $\val{\text{last }\tracesymbol}{\formulaSymbol}$ if $\tracesymbol$ terminates (and undefined else).
Now, from \rref{lem:laststatelemma}, if $\tracesymbol$ terminates, then $\val{\tracesymbol}{\cl{\formulaSymbol}}$ is true.
Otherwise there is nothing to show.
Thus $\val{\firstStateSymbol}{[\alpha]\cl{\formulaSymbol}}$ is true, as desired.
\end{proofatend}

\begin{lemma} \label{lem:taetracecomp} If $\firstSubtraceSymbol$ and $\secondSubtraceSymbol$ are traces of hybrid programs where $\firstSubtraceSymbol$ terminates and $\text{last }\firstSubtraceSymbol = \text{first } \secondSubtraceSymbol$, then $\tae{\firstSubtraceSymbol \circ \secondSubtraceSymbol}{\formulaSymbol}$ iff both $\tae{\firstSubtraceSymbol}{\formulaSymbol}$ and $\tae{\secondSubtraceSymbol}{\formulaSymbol}$.
\end{lemma}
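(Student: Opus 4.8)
The plan is to verify the two clauses of the tae semantics (\rref{def:traceformulasemantics}) separately, exploiting that by \rref{def:dTLTraceSemantics} the composite $\firstSubtraceSymbol \circ \secondSubtraceSymbol$ is literally the concatenation $(\firstSubtraceSymbol_0, \dots, \firstSubtraceSymbol_m, \secondSubtraceSymbol_0, \secondSubtraceSymbol_1, \dots)$ of the component functions. This uses the hypotheses that $\firstSubtraceSymbol$ terminates and $\text{last }\firstSubtraceSymbol = \text{first }\secondSubtraceSymbol$, so the composition is defined and no functions are merged or dropped. First I would record the bijection between functions: those of $\firstSubtraceSymbol \circ \secondSubtraceSymbol$ of index $i \le m$ are exactly those of $\firstSubtraceSymbol$ (same index, identical states), and those of index $i \ge m+1$ are exactly those of $\secondSubtraceSymbol$ under the shift $j \mapsto m+1+j$ (again with identical states). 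Consequently, writing $\mathcal{U}$, $\mathcal{U}_{\firstSubtraceSymbol}$, $\mathcal{U}_{\secondSubtraceSymbol}$ for the respective failure sets of positions, the index-$\le m$ part of $\mathcal{U}$ is precisely $\mathcal{U}_{\firstSubtraceSymbol}$ and the index-$\ge m+1$ part corresponds precisely to $\mathcal{U}_{\secondSubtraceSymbol}$.

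For the discrete condition, the key observation is that it is a pointwise property of each length-zero function (``$\tracesymbol_i(0) \models \cl{\formulaSymbol}$ whenever $|\tracesymbol_i| = 0$ and $\tracesymbol_i(0) \neq \Lambda$''). Since the length-zero functions of $\firstSubtraceSymbol \circ \secondSubtraceSymbol$ are exactly the disjoint union of those of $\firstSubtraceSymbol$ and those of $\secondSubtraceSymbol$, with matching states, the discrete condition holds for the composite if and only if it holds for both components, in both directions.

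For the continuous condition I would unwind the map $f$ of \rref{def:taef}. Because the offsets $\zeta + i + \sum_{k=0}^{i-1}|\tracesymbol_k|$ for the first $m+1$ functions depend only on $\firstSubtraceSymbol_0, \dots, \firstSubtraceSymbol_m$, the composite map $f$ agrees with the map $f_{\firstSubtraceSymbol}$ of $\firstSubtraceSymbol$ on every position of index $\le m$. A direct computation on the remaining positions gives $f(m{+}1{+}j, \zeta) = f_{\secondSubtraceSymbol}(j, \zeta) + C$ with the constant shift $C = (m+1) + \sum_{k=0}^{m}|\firstSubtraceSymbol_k|$. Hence $f(\mathcal{U}) = f_{\firstSubtraceSymbol}(\mathcal{U}_{\firstSubtraceSymbol}) \cup \bigl(f_{\secondSubtraceSymbol}(\mathcal{U}_{\secondSubtraceSymbol}) + C\bigr)$, a disjoint union whose two pieces lie in disjoint regions of $\mathbb{R}$. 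I would then conclude using three standard facts: a subset of a null set is null, a finite union of null sets is null, and Lebesgue measure is translation-invariant. For the forward direction, if $f(\mathcal{U})$ is null then its subsets $f_{\firstSubtraceSymbol}(\mathcal{U}_{\firstSubtraceSymbol})$ and $f_{\secondSubtraceSymbol}(\mathcal{U}_{\secondSubtraceSymbol}) + C$ are null, and translation invariance makes $f_{\secondSubtraceSymbol}(\mathcal{U}_{\secondSubtraceSymbol})$ null; for the backward direction, nullity of both component images makes their (translated) union null. Combining this with the discrete-condition equivalence yields the lemma.

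The main obstacle is purely the bookkeeping around $f$: verifying the shift identity, checking that the two component images occupy disjoint parts of $\mathbb{R}$ (so that nullity splits cleanly), and confirming that the shared boundary state $\text{last }\firstSubtraceSymbol = \text{first }\secondSubtraceSymbol$ is handled consistently. The latter is accounted for because in the composite it occupies \emph{two} distinct positions, $(m, |\firstSubtraceSymbol_m|)$ and $(m+1, 0)$, so it is registered once in $\mathcal{U}_{\firstSubtraceSymbol}$ and once in $\mathcal{U}_{\secondSubtraceSymbol}$ with no mismatch across the split. Everything beyond this is routine measure theory.
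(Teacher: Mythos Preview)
Your proposal is correct and follows essentially the same approach as the paper's proof: split into the discrete and continuous conditions, identify the composite's failure set as the union of the two components' failure sets, observe that $f$ agrees with $f_{\firstSubtraceSymbol}$ on the first block and equals $f_{\secondSubtraceSymbol}$ shifted by $(m{+}1)+\sum_{k=0}^{m}|\firstSubtraceSymbol_k|$ on the second, and conclude via translation invariance, completeness (subsets of null sets are null), and additivity of Lebesgue measure. The only minor remark is that disjointness of the two image pieces in $\mathbb{R}$, while true, is not actually needed for either direction of the continuous-condition equivalence; the subset and finite-union properties of null sets suffice on their own.
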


\begin{proofatend} By \rref{def:dTLTraceSemantics}, since $\firstSubtraceSymbol$ terminates by assumption, $\firstSubtraceSymbol \circ \secondSubtraceSymbol$ is of the form $(\firstSubtraceSymbol_0, \dots, \firstSubtraceSymbol_n, \secondSubtraceSymbol_0, \dots, \secondSubtraceSymbol_m)$. Define $(\firstSubtraceSymbol \circ \secondSubtraceSymbol)_i$ to be $\firstSubtraceSymbol_i$ for $0{\leq}i{\leq}n$ and $\secondSubtraceSymbol_{i - (n+1)}$ for $n+1{\leq}i{\leq}(n + 1) + m$.

First, notice that every $(\firstSubtraceSymbol \circ \secondSubtraceSymbol)_i$ with $|(\firstSubtraceSymbol \circ \secondSubtraceSymbol)_i| = 0$ and $(\firstSubtraceSymbol \circ \secondSubtraceSymbol)_i \neq \Lambda$ satisfies $(\firstSubtraceSymbol \circ \secondSubtraceSymbol)_i(0) \models \cl{\formulaSymbol}$ for $0{\leq}i{\leq}n + m + 1$ iff every $\firstSubtraceSymbol_j$ with $|\firstSubtraceSymbol_j| = 0$ for $0{\leq}j{\leq}n$ satisfies $\firstSubtraceSymbol_j(0) \models \cl{\formulaSymbol}$ and every $\secondSubtraceSymbol_k$ with $|\secondSubtraceSymbol_k| = 0$ and $\secondSubtraceSymbol_k \neq \Lambda$ for $0{\leq}k{\leq}m$ satisfies $\secondSubtraceSymbol_k(0) \models \cl{\formulaSymbol}$. Thus, $\firstSubtraceSymbol \circ \secondSubtraceSymbol$ satisfies the discrete condition of \rref{def:traceformulasemantics} if and only if both $\firstSubtraceSymbol$ and $\secondSubtraceSymbol$ satisfy the discrete condition of \rref{def:traceformulasemantics}.

Next, let $\mathcal{U}$ be the set of positions $(i, \zeta)$ where $(\firstSubtraceSymbol \circ \secondSubtraceSymbol)_i (\zeta) \not \models \formulaSymbol$ and $(\firstSubtraceSymbol \circ \secondSubtraceSymbol)_i (\zeta) \neq \Lambda$, for $0{\leq}i{\leq} n +m + 1$, $0{\leq}\zeta{\leq}|\firstSubtraceSymbol_i|$ when $0{\leq}i{\leq}n$, and $0{\leq}\zeta{\leq}|\secondSubtraceSymbol_i|$ when $n+1{\leq}i{\leq}(n+1) + m$. Then $\mathcal{U} = \mathcal{U}_1 \cup \mathcal{U}_2$ where $\mathcal{U}_1$ is the set of positions $(i, \zeta)$ where $\firstSubtraceSymbol_i(\zeta) \not \models \formulaSymbol$, $0{\leq}i{\leq}n$, $0{\leq}\zeta{\leq}|\firstSubtraceSymbol_i|$ and $\mathcal{U}_2$ is the set of positions $(i + (n + 1), \zeta)$ where $\secondSubtraceSymbol_i(\zeta) \not \models \formulaSymbol$ and $\secondSubtraceSymbol_i(\zeta) \neq \Lambda$, for $0{\leq}i{\leq}m$, $0{\leq}\zeta\leq |\secondSubtraceSymbol_i|$.

Now, consider $f(\mathcal{U}) = f(\mathcal{U}_1 \cup \mathcal{U}_2)$, where $f$ is the mapping as in \rref{def:taef} for $\firstSubtraceSymbol \circ \secondSubtraceSymbol$. Because $\mathcal{U}_1$ and $\mathcal{U}_2$ are disjoint, $f(\mathcal{U}_1 \cup \mathcal{U}_2) = f(\mathcal{U}_1) \cup f(\mathcal{U}_2).$ By construction, $f(\mathcal{U}_1) = f_1(\mathcal{U}_1)$ where $f_1$ is the mapping as in \rref{def:taef} for $\firstSubtraceSymbol$. Now, $f_1(\mathcal{U}_1)$ has measure zero iff $\firstSubtraceSymbol$ satisfies the second condition of \rref{def:traceformulasemantics}.

Also, by construction, $f(\mathcal{U}_2)$ is a translation of $f_2(\widetilde{\mathcal{U}})$ where $f_2$ is the mapping as in \rref{def:taef} for $\secondSubtraceSymbol$ and $\widetilde{\mathcal{U}}$ is the set of positions $(i, \zeta)$ where $\secondSubtraceSymbol_i(\zeta) \not \models \formulaSymbol$ and $\secondSubtraceSymbol_i(\zeta) \neq \Lambda$, for $0{\leq}i{\leq}m$, $0{\leq}\zeta\leq |\secondSubtraceSymbol_i|$.

More precisely, since positions $(j, 0)$ through $(j, |\secondSubtraceSymbol_j|)$ in $\secondSubtraceSymbol$ correspond to positions $((n + 1) + j, 0)$ through $((n + 1) + j, |\secondSubtraceSymbol_j|)$ in $\firstSubtraceSymbol \circ \secondSubtraceSymbol$ for all $0{\leq}j{\leq}m$, the image set $f(\mathcal{U}_2)$ is the image set $f_2(\widetilde{\mathcal{U}})$ translated to the right by $n + 1 + \sum_{i = 0}^n |\firstSubtraceSymbol_i|$ (on the real number line). Now, $f_2(\widetilde{\mathcal{U}})$ has measure zero iff $\secondSubtraceSymbol$ satisfies the continuous condition of \rref{def:traceformulasemantics}. Because the Lebesgue measure is invariant under translation, we also obtain $f(\mathcal{U}_2)$ has measure zero iff $\secondSubtraceSymbol$ satisfies the continuous condition of \rref{def:traceformulasemantics}.

By completeness of the Lebesgue measure, if $f(\mathcal{U})$ has measure zero, then both subsets $f(\mathcal{U}_1)$ and $f(\mathcal{U}_2)$ have measure zero. Since, by additivity of measures, the union of two measure zero sets has measure zero, if both $f(\mathcal{U}_1)$ and $f(\mathcal{U}_2)$ have measure zero, then $f(\mathcal{U})$ has measure zero. So $f(\mathcal{U})$ has measure zero iff both $f(\mathcal{U}_1)$ and $f(\mathcal{U}_2)$ have measure zero. Or, equivalently, $\firstSubtraceSymbol \circ \secondSubtraceSymbol$ satisfies the continuous condition of \rref{def:traceformulasemantics} iff both $\firstSubtraceSymbol$ and $\secondSubtraceSymbol$ satisfy the continuous condition of \rref{def:traceformulasemantics}.

Therefore $\firstSubtraceSymbol \circ \secondSubtraceSymbol$ satisfies both the discrete and continuous conditions of \rref{def:traceformulasemantics} iff both $\firstSubtraceSymbol$ and $\secondSubtraceSymbol$ satisfy the discrete and continuous conditions of \rref{def:traceformulasemantics}, and thus we have $\tae{\firstSubtraceSymbol \circ \secondSubtraceSymbol}{\formulaSymbol}$ iff both $\tae{\firstSubtraceSymbol}{\formulaSymbol}$ and $\tae{\secondSubtraceSymbol}{\formulaSymbol}$.
 \end{proofatend}

\subsection{Proof Calculus}\label{sec:proofcalculus} The proof calculus for \PdTL is shown in \rref{fig:proofRulesTable}.
Intuitively, all of the axioms are designed to successively decompose complicated formulas into structurally simpler formulas while successively reducing trace formulas into state formulas.
The test axiom, assignment axiom, solution axiom, and solution with evolution domain constraint axiom ($[\text{?}]_{\text{tae}}$, $[:=]_{\text{tae}}$, $[']_{\text{tae}}$, and $['\&]_{\text{tae}}$) remove instances of $\Box_{\text{tae}}$.
The nondeterministic choice axiom $[\cup]_{\text{tae}}$ reduces a choice between two hybrid programs to two separate programs.  The induction axiom $\text{I}_{\text{tae}}$ reduces a loop property involving a trace formula to a loop property involving a state formula; $\text{I}_{\text{tae}}$ also allows us to derive two very useful proof rules.

The G\"{o}del generalization rule ($\text{G}_{\text{tae}}$) proves that if formula $\formulaSymbol$ is valid, then it is also true, tae, along the trace of any hybrid program. The modal modus ponens rule ($K_{\text{tae}}$) allows us to derive a monotonicity property.
Our approach occasionally introduces extra premises; for example, the modal modus ponens rule ($K_{\text{tae}}$) has an extra goal $\cl{\formulaSymbol} \to \cl{\secondFormulaSymbol}$ due to the discrete condition of \rref{def:traceformulasemantics}. 
Many of these extra premises will be easy to prove---if our models make use of physical formulas, which are closed, then these extra cases will prove immediately.

\begin{figure}[tbhp]
  \def\arraystretch{1.2}
    \begin{tabular}{l l@{\hspace{0.5cm}} c}
    $[\text{?}]_{\text{tae}}$ & $\rae{?\FOLformulaSymbol}{\formulaSymbol} \leftrightarrow \cl{\formulaSymbol}$ & {\begin{prooftree}
    \hypo{\formulaSymbol}
    \infer1[$\text{G}_{\text{tae}}$]{\rae{\alpha}{\formulaSymbol}}
    \end{prooftree}}
    \\
        $[\cup]_{\text{tae}}$ & $\rae{\alpha \cup \beta}{\formulaSymbol} \leftrightarrow \rae{\alpha}{\formulaSymbol} \land \rae{\beta}{\formulaSymbol}$ & \\ 
        $[:=]_{\text{tae}}$ & $\rae{x := e}{\formulaSymbol} \leftrightarrow \cl{\formulaSymbol} \land [x := e]\cl{\formulaSymbol}$ & {  \begin{prooftree}
    \hypo{\cl{\formulaSymbol} \to \cl{\secondFormulaSymbol}}
    \hypo{\rae{\alpha}{(\formulaSymbol \to \secondFormulaSymbol)}}
    \infer2[$K_{\text{tae}}$]{\rae{\alpha}{\formulaSymbol} \to \rae {\alpha}{\secondFormulaSymbol}}
    \end{prooftree}}
        \\ 
        $[;]_{\text{tae}}$ & $\rae{\alpha; \beta}{\formulaSymbol} \leftrightarrow \left( \rae{\alpha}{\formulaSymbol} \land [\alpha] \rae{\beta}{\formulaSymbol}\right)$ &  \\
        $\text{I}_{\text{tae}}$ & $ \rae{\alpha^*}{\formulaSymbol} \leftrightarrow \left(\cl{\formulaSymbol} \land [\alpha^*](\mrae{\cl{\formulaSymbol}}{\alpha}{\formulaSymbol})\right)$ & {\begin{prooftree} \hypo{\formulaSymbol \to \secondFormulaSymbol} \infer1[TopCl]{\cl{\formulaSymbol} \to \cl{\secondFormulaSymbol}} \end{prooftree}}\\
       $[']_{\text{tae}}$ & $\rae{x' = f(x)}{\FOLformulaSymbol} \leftrightarrow \cl{\FOLformulaSymbol} \land \forall t{\geq}0 Q$ &
      \vspace{.5em} \\  
      $['\&]_{\text{tae}}$ & $\rae{x' = f(x) \& \evDomainConstraint}{\FOLformulaSymbol} \leftrightarrow \cl{\FOLformulaSymbol} \land$ & CGG\ \ $\rae{\alpha}{\formulaSymbol} \to [\alpha]\cl{\formulaSymbol}$ \\
 & \hspace{2em} $\forall t {>}0 \left(\left( \forall 0{\leq}s{\leq}t \ [x := y(s)]\evDomainConstraint \right) \to \secondFOLformulaSymbol \right)$ &
    \end{tabular}
    \caption[Proof calculus for \PdTL]{Proof calculus for \PdTL\footnotemark}
    \label{fig:proofRulesTable}
\end{figure}
   \footnotetext{Here, $\alpha$ and $\beta$ are hybrid programs, $\formulaSymbol$ and $\secondFormulaSymbol$ are state formulas, $\FOLformulaSymbol$ is a FOL formula, $y(t)$ is the unique global polynomial solution to the differential equation $x' = f(x)$, and the formula $Q$ in $[']_{\text{tae}}$ and $['\&]_{\text{tae}}$ is the FOL formula constructed by \rref{prop:definable} for $P(y(t))$. Although the ``for almost all'' quantifier is in general \textit{not} definable in FOL \cite{DBLP:journals/jsyml/Morgenstern79}, \rref{prop:definable} justifies that ``$\text{for almost all }t{\geq}0 [x := y(t)] \FOLformulaSymbol$'' is logically equivalent to ``$\forall t{\geq}0\ Q$''.}

TopCl is from \rref{prop:clProp1} and CGG from \rref{cor:EM}.
The rest of the soundness proofs are in
%a report
\rref{app:ProofCalculusSoundness}.
We discuss a few key high-level ideas below.

\subsubsection{Soundness of $[;]_{\text{tae}}$ and $\text{I}_{\text{tae}}$} Sequential composition and induction are subtly challenging for PHS---since we are allowed to leave the safe set, handover points between hybrid programs are no longer guaranteed to be safe points.
However, sequential composition and induction are crucial for the practicality of verification, which is predicated on having a good way of breaking down complicated formulas into simpler components.
The soundness proof of $[;]_{\text{tae}}$ exploits \rref{lem:taetracecomp}.
The soundness proof of $\text{I}_{\text{tae}}$ is based on \rref{lem:laststatelemma}, which in turn relies on the discrete condition of \rref{def:traceformulasemantics}.

\subsubsection{Differential Equations} Reasoning about differential equations is one of the most challenging aspects of hybrid systems. In this work, we focus on relatively simple reasoning principles for differential equations, as justifying even simple principles is made much more challenging by introducing the notion of ``safety almost everywhere''. We leave the development of more complicated reasoning (for example, a notion of differential invariants for $\Box_{\text{tae}}$) to future work.

For ``sufficiently tame'' systems of ODEs $x' = f(x)$, we might hope to replace $\rae{x' = f(x)}{\FOLformulaSymbol}$ with an equivalent expression without the $\Box_{\text{tae}}$ modality. The cleanest case is when $x' = f(x)$ has a unique global polynomial solution, $y(t)$. 
Although we think of $y$ as being a polynomial in $t$, $y$ can involve any of the other parameters, call them $x_1, \dots, x_n$, in $f$, from its dependency on initial values. 
We require that $y$ is also polynomial in $x_1, \dots, x_n$.
This is the case handled by axiom $[']_{\text{tae}}$: $\rae{x' = f(x)}{\FOLformulaSymbol} \leftrightarrow \cl{\FOLformulaSymbol}\ \land\ \forall t{\geq}0\, Q$, where $\secondFOLformulaSymbol$ is a FOL formula constructed so that ``$\forall t{\geq} 0 \, \secondFOLformulaSymbol$'' expresses ``$\text{for almost all } t{\geq}0\, [x := y(t)] \FOLformulaSymbol$''. Axiom $['\&]_{\text{tae}}$ generalizes $[']_{\text{tae}}$ to ODEs with evolution domain constraints.

In particular, we get $\secondFOLformulaSymbol$ by applying \rref{prop:definable} to $\FOLformulaSymbol(y(t))$, which is the formula obtained using the assignment axiom $[:=]$ of \dL on $[x := y(t)] \FOLformulaSymbol$ (it is a FOL formula because $y(t)$ is polynomial and polynomials are closed under composition).
Given any FOL formula $\FOLformulaSymbol$, \rref{prop:definable} constructs a FOL formula $\secondFOLformulaSymbol$ so that ``for almost all $t{\geq}0\, \FOLformulaSymbol$'' is semantically equivalent to $\forall t{\geq} 0\, \secondFOLformulaSymbol$ (i.e., ``for almost all $t{\geq}0\, \FOLformulaSymbol$'' is true in a state $\firstStateSymbol$ iff $\forall t{\geq}0\, \secondFOLformulaSymbol$ is true in $\firstStateSymbol$).
\begin{proposition}\label{prop:definable}
Let $\FOLformulaSymbol$ be a FOL formula.  Using quantifier elimination \cite{Ta}, put it into one of the following normal forms: $e{=}0$, $e{\geq}0$, $e{<}0$, $\FOLformulaSymbol_1 \land \FOLformulaSymbol_2$, and $\FOLformulaSymbol_1 \lor \FOLformulaSymbol_2$, where $e$ is a polynomial and $\FOLformulaSymbol_1, \FOLformulaSymbol_2$ are FOL formulas.  Construct the FOL formula $\secondFOLformulaSymbol = g(P)$ by structural induction on $P$ as follows: $g(e{=}0)$ is $e{=}0$,  
$g(e{\geq}0)$ is $e{\geq}0$, $g(e{<}0)$ is $e{\leq}0 \land ((a_n{=}0 \land \cdots \land a_1{=}0) \to e{<}0)$, $g(\FOLformulaSymbol_1 \land \FOLformulaSymbol_2)$ is $g(\FOLformulaSymbol_1) \land g(\FOLformulaSymbol_2)$, and $g(\FOLformulaSymbol_1 \lor \FOLformulaSymbol_2)$ is $g(\FOLformulaSymbol_1) \lor g(\FOLformulaSymbol_2)$.

Then, for any state $\firstStateSymbol$, the following hold:
\begin{enumerate}
    \item Locally false: If $\updateState{\firstStateSymbol}{t}{k} \not \models \secondFOLformulaSymbol$ for some $k{\geq}0$, then there is a nonempty interval $[k, \ell)$ so that for all $q \in [k, \ell)$, $\updateState{\firstStateSymbol}{t}{q} \not \models \FOLformulaSymbol$. 
    Further, if $k {>} 0$, then there is an interval $(\ell_1, \ell_2)$ with $\ell_1{<}k{<}\ell_2$ so that for all $q \in (\ell_1, \ell_2)$, $\updateState{\firstStateSymbol}{t}{q} \not \models \FOLformulaSymbol$.
    \item Finite difference: There are only finitely many values $k{\geq}0$ where $\updateState{\firstStateSymbol}{t}{k} \models \secondFOLformulaSymbol \land \lnot \FOLformulaSymbol$.
\end{enumerate}
\end{proposition}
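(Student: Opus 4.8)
The plan is to prove both the \emph{Locally false} and \emph{Finite difference} claims by structural induction on the normal form of $\FOLformulaSymbol$ produced by quantifier elimination, i.e., on the five cases $e{=}0$, $e{\geq}0$, $e{<}0$, $\FOLformulaSymbol_1 \land \FOLformulaSymbol_2$, $\FOLformulaSymbol_1 \lor \FOLformulaSymbol_2$ that also drive the definition of $g$. The two atomic cases $e{=}0$ and $e{\geq}0$ are immediate since there $\secondFOLformulaSymbol = \FOLformulaSymbol$: for \emph{Finite difference}, $\secondFOLformulaSymbol \land \lnot \FOLformulaSymbol$ is unsatisfiable, so there are zero offending $k$; for \emph{Locally false}, $\updateState{\firstStateSymbol}{t}{k} \not\models \FOLformulaSymbol$ means the polynomial $z \mapsto \evalInState{\updateState{\firstStateSymbol}{t}{z}}{e}$ is nonzero (resp.\ negative) at $k$, so by continuity it stays nonzero (resp.\ negative) on a whole neighborhood of $k$ on which $\FOLformulaSymbol$ fails; intersecting with $[k,\infty)$ yields the one-sided $[k,\ell)$, and the full neighborhood yields the two-sided $(\ell_1,\ell_2)$ when $k{>}0$.

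The crux is the atomic case $e{<}0$, the only one where $\secondFOLformulaSymbol \neq \FOLformulaSymbol$. Here I would write $e = a_n t^n + \cdots + a_1 t + a_0$ and note that the coefficients $a_i$ are evaluated in $\firstStateSymbol$ and do \emph{not} depend on $t$, so the guard $a_n{=}0 \land \cdots \land a_1{=}0$ decides, purely from $\firstStateSymbol$, whether $z \mapsto \evalInState{\updateState{\firstStateSymbol}{t}{z}}{e}$ is constant in $z$. For \emph{Locally false}, negating $\secondFOLformulaSymbol$ gives $e{>}0$ or (guard $\land\ e{\geq}0$); in the first case continuity gives $e{>}0$, hence $\lnot(e{<}0)$, on a neighborhood, and in the second case $e$ is the constant $a_0 \geq 0$, so $\lnot(e{<}0)$ holds identically. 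For \emph{Finite difference}, $\secondFOLformulaSymbol \land \lnot\FOLformulaSymbol$ simplifies to $(e{=}0) \land (\text{guard} \to e{<}0)$; this is nonempty only when the guard is false, in which case $e$ is a nonzero polynomial in $t$ and the condition reduces to $e(k){=}0$, which holds at only finitely many $k$ because a nonzero polynomial has finitely many roots.

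For the inductive cases I would push the set operations through $g$ using $g(\FOLformulaSymbol_1 \land \FOLformulaSymbol_2) = g(\FOLformulaSymbol_1)\land g(\FOLformulaSymbol_2)$ and $g(\FOLformulaSymbol_1 \lor \FOLformulaSymbol_2) = g(\FOLformulaSymbol_1)\lor g(\FOLformulaSymbol_2)$. For \emph{Locally false} with $\FOLformulaSymbol_1 \land \FOLformulaSymbol_2$, failure of $\secondFOLformulaSymbol$ at $k$ means one conjunct $g(\FOLformulaSymbol_i)$ fails, and the induction hypothesis supplies an interval on which $\FOLformulaSymbol_i$, hence $\FOLformulaSymbol_1 \land \FOLformulaSymbol_2$, fails; for $\FOLformulaSymbol_1 \lor \FOLformulaSymbol_2$, both $g(\FOLformulaSymbol_i)$ fail at $k$, so intersecting the two hypothesis-intervals gives one $[k,\ell)$ (and one $(\ell_1,\ell_2)$) on which both $\FOLformulaSymbol_i$, hence $\FOLformulaSymbol_1 \lor \FOLformulaSymbol_2$, fail. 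For \emph{Finite difference} I would rewrite $\secondFOLformulaSymbol \land \lnot \FOLformulaSymbol$ via De Morgan so that in both cases it is contained in $(g(\FOLformulaSymbol_1)\land \lnot \FOLformulaSymbol_1) \cup (g(\FOLformulaSymbol_2)\land \lnot \FOLformulaSymbol_2)$, a union of two finite sets by the induction hypothesis.

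I expect the main obstacle to be the $e{<}0$ atomic case together with the disjunction case of \emph{Locally false}: one must verify that the guard $a_n{=}0 \land \cdots \land a_1{=}0$ correctly isolates the ``$e$ constant in $t$'' subcase (which requires reading off the coefficients of $e$ in $t$ as $t$-independent terms), and that intersecting the two hypothesis-intervals still yields a nonempty $[k,\ell)$ and a genuine two-sided $(\ell_1,\ell_2)$ straddling $k$. Once continuity of polynomial evaluations and finiteness of roots of nonzero polynomials are invoked, the remaining steps are routine.
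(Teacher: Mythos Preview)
Your proposal is correct and follows essentially the same approach as the paper's proof: structural induction on the normal form of $\FOLformulaSymbol$, with the atomic cases handled by continuity of polynomial evaluations and finiteness of roots, the crucial $e{<}0$ case split according to whether the guard $a_n{=}0\land\cdots\land a_1{=}0$ holds (i.e., whether $e$ is constant in $t$), and the compound cases handled by pushing through the Boolean structure and intersecting intervals (taking min/max of endpoints) exactly as you describe. The paper's proof contains no additional ideas beyond what you have outlined.
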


The proof is by induction on the structure of $\FOLformulaSymbol$.
Details are given in \rref{app:ProofCalculusSoundness}. 
The soundness proofs of $[']_{\text{tae}}$ and $['\&]_{\text{tae}}$ are also in \rref{app:ProofCalculusSoundness}.

\subsection{Derived Rules}
We highlight some of the most useful derived rules for \PdTL formulas in \rref{fig:derivedRulesTable}.
Monotonicity properties are fundamental in logic. Our rule $M_{\text{tae}}$ intuitively says that if $\secondFormulaSymbol \to \formulaSymbol$ is valid, then if $\secondFormulaSymbol$ is true almost everywhere along every trace of a hybrid program, then $\formulaSymbol$ is also true almost everywhere along every trace of that hybrid program.
The rule $\text{Ind}_{\text{tae}}$ reduces proving a safety property of hybrid program $\alpha^*$ to proving a safety property of program $\alpha$.
When its premise proves, it effectively removes the need to reason about loops.
The rule $\text{loop}_{\text{tae}}$ provides us with a loop invariant rule.
The rule $\text{Comp}_{\text{tae}}$ reduces a property of $\alpha; \beta$ to individual properties of $\alpha$ and $\beta$.
The derivations are given in \rref{app:DerivedSoundness}.

\begin{figure}[tbhp]
    \begin{tabular}{l@{\hspace{0.3cm}} c}
    {\begin{prooftree}
    \hypo{\secondFormulaSymbol \to \formulaSymbol}
    \infer1[$\text{M}_{\text{tae}}$]{\rae{\alpha}{\secondFormulaSymbol} \to \rae {\alpha}{\formulaSymbol}}
    \end{prooftree}}  &  {\begin{prooftree}
    \hypo{\cl{\formulaSymbol} \vdash \rae{\alpha}{\formulaSymbol}}
    \infer1[$\text{Ind}_{\text{tae}}$]{\cl{\formulaSymbol} \vdash \rae {\alpha^*}{\formulaSymbol}}
    \end{prooftree}}\vspace{.6em} \\
    {\begin{prooftree}
\hypo{\Gamma \vdash \cl{\secondFormulaSymbol}, \Delta}
\hypo{\cl{\secondFormulaSymbol} \vdash \rae{\alpha}{\secondFormulaSymbol}}
\hypo{\secondFormulaSymbol \vdash \formulaSymbol}
\infer3[$\text{loop}_{\text{tae}}$]{\Gamma \vdash \rae{\alpha^*}{\formulaSymbol}, \Delta}
\end{prooftree}} & {\begin{prooftree}
\hypo{\secondFormulaSymbol \to \rae {\alpha}{\formulaSymbol}}
\hypo{\cl{\formulaSymbol} \to \rae {\beta}{\formulaSymbol}}
\infer2[$\text{Comp}_{\text{tae}}$]{\mrae{\secondFormulaSymbol}{\alpha; \beta}{\formulaSymbol}}
\end{prooftree}}
        \end{tabular}
    \caption{Derived rules for \PdTL}
    \label{fig:derivedRulesTable}
\end{figure}

\section{Proof of Motivating Example}\label{sec:Example} 
We now apply our proof calculus to the model of the train example (\rref{sec:syntax}).
Full details are in \rref{app:Example}.
Using structural rule ${\to}R$ and our induction proof rule $\text{loop}_{\text{tae}}$ with invariant $v{<}100$, the proof reduces to showing $a{=}0 \land v{=}0 \vdash v{\leq}100$ (which holds by real arithmetic), $v{<}100 \vdash v{<}100$ (identically true), and 
\begin{align*}v{\leq}100 \vdash &[\big((?(v{<}100); a := 1) \cup (?(v = 100); a := -1)\big); \\
 & \hspace{2em} \{x' = v, v' = a \ \&\  0{\leq}v{\leq}100\}] \Box_{\text{tae}} v{<}100.
\end{align*}
Axiom $[;]_{\text{tae}}$ splits this into goals (\rref{eqn:MG1p}) and (\rref{eqn:MG2p}):
\begin{equation}\label{eqn:MG1p}
 v{\leq}100 \vdash [(?(v{<}100); a := 1) \cup (?(v = 100); a := -1)]\Box_{\text{tae}} v{<}100
 \end{equation}
 \\
\begin{equation}\label{eqn:MG2p}
\begin{split}
v{\leq}100 \vdash [(?(v{<}100); a := 1) &\cup (?(v = 100); a := -1)] \\
[ \{x' = v, v' = a 
& \&\  0{\leq}v{\leq}100\}]\Box_{\text{tae}} v{<}100.
\end{split}
\end{equation}
(\rref{eqn:MG1p}) is straighforward.
(\rref{eqn:MG2p}) is more complicated because it involves ODEs reasoning.
The \dL axioms $[\cup]$ and $\land R$ split the proof of (\rref{eqn:MG2p}) into (\rref{eqn:SG3p}) and (\rref{eqn:SG4p}):
\begin{align} \label{eqn:SG3p}
v{\leq}100 &\vdash [?(v{<}100); a := 1] [ \{x' = v, v' = a\ \&\  0{\leq}v{\leq}100\}]\Box_{\text{tae}} v{<}100
\\
\label{eqn:SG4p}
    v{\leq}100 &\vdash [?(v = 100); a := -1] [ \{x' = v, v' = a\ \&\  0{\leq}v{\leq}100\}]\Box_{\text{tae}} v{<}100
\end{align}
(\rref{eqn:SG3p}) and (\rref{eqn:SG4p}) require similar reasoning, so we focus on (\rref{eqn:SG3p}).
The \dL axioms $[;]$, $[:=]$, and $[?]$ reduce (\rref{eqn:SG3p}) to 
\begin{equation} \label{eqn:ODE1p}
v{\leq}100, v{<}100 \vdash [ \{x' = v, v' = 1\ \&\  0{\leq}v{\leq}100\}]\Box_{\text{tae}} v{<}100
\end{equation}
To prove (\rref{eqn:ODE1p}), we need to use axiom $['\&]_{\text{tae}}$, which says:
\begin{equation*}
    \rae{x' = f(x) \& \evDomainConstraint}{\FOLformulaSymbol} \leftrightarrow \cl{\FOLformulaSymbol} \land \forall t {>}0 \left(\left( \forall 0{\leq}s{\leq}t \ [x := y(s)]\evDomainConstraint \right) \to \secondFOLformulaSymbol \right)
\end{equation*}
For clarity, we use $v_0$ for the value of $v$ in the initial state before it starts evolving along the ODEs, and similarly we use $x_0$ for the value of $x$ in the initial state.
Following \rref{prop:definable}, $Q$ is $(1 = 0 \to t + v_0{<}100) \land t + v_0 {\leq}100$. Since applying $['\&]_{tae}$ reduces our goal to a \dL formula, and since \PdTL is a conservative extension of \dL, we can use the contextual equivalence rules of \dL to replace $Q$ with the logically equivalent formula $t+v_0{\leq}100$, obtaining:
\begin{equation}\label{eqn:DL1p}
\begin{split}
    & v_0{\leq}100, v_0{<}100 \vdash  v_0{\leq}100 \land \forall t {>}0 \\
    &\left(\left( \forall 0{\leq}s{\leq}t \ [x := .5s^2 + v_0s + x_0][v := s + v_0] 0{\leq}v{\leq}100 \right) \to t+v_0{\leq}100 \right)
    \end{split}
\end{equation}
After using the \dL axiom $[:=]$, the proof closes by real arithmetic.

\section{Conclusions and Future Work}\label{sec:FutureWork}
We introduce PHS to help narrow the gap between mathematical models and physical reality.
To enable logic to begin to distinguish between true unsafeties of systems and physically unrealistic unsafeties, we develop the notion of safety tae along the execution trace of a system.
Our new logic, \PdTL, contains the logical operator $\Box_{\text{tae}}$, which elides sets of time that have measure zero.

A cornerstone of our approach is its logical practicality---in order to support verification, we develop a proof calculus for \PdTL.
We demonstrate the capability of the proof calculus by applying it to a motivating example.
We think it is an interesting and challenging problem for future work to develop new ways of thinking about PHS, such as the notion of space almost everywhere discussed in \rref{sec:discussion}, while maintaining this logical practicality.

Future work could continue to develop \PdTL.
It would be especially interesting to develop further differential equations reasoning, including an appropriate generalization of the syntax of hybrid programs to admit Carath\'{e}odory solutions. 

\section*{Acknowledgments}
We very much appreciate Yong Kiam Tan and Brandon Bohrer for many useful discussions and for feedback on the paper. Thank you also to the anonymous CADE'19 reviewers for their thorough feedback.

This material is based upon work supported by the National Science Foundation Graduate Research Fellowship under Grant No. DGE-1252522. Any opinions, findings, and conclusions or recommendations expressed in this material are those of the authors and do not necessarily reflect the views of the National Science Foundation. This research was also sponsored by the AFOSR under grant number FA9550-16-1-0288 and by the Alexander von Humboldt Foundation. The views and conclusions contained in this document are those of the authors and should not be interpreted as representing the official policies, either expressed or implied, of any sponsoring institution, the U.S. government or any other entity.

\renewcommand{\doi}[1]{doi:\href{https://doi.org/#1}{\nolinkurl{#1}}}
\bibliographystyle{plainurl}
\bibliography{PHS}

\appendix
%\section{Proofs} \label{app:Proofs}
%\printproofs
\appendix
\section{Measure Theory Background}\label{app:MeasureTheory}
Although our paper does not require a detailed understanding of measure theory, we do make use of certain properties of the Lebesgue measure and of measure zero.
For the interested reader, there are many good books on measure theory; for example \cite{Royden}.
We give a very short overview of the basics of measure theory that we require.

%	this is a measure here is measure zero here are some properties
Intuitively, a \textit{measure} on a set $X$ maps subsets of $X$ to real numbers (or to infinity) in a way designed to capture information regarding the ``size'' of these subsets. Perhaps the most familiar measures are length, area, and volume (on $\mathbb{R}$, $\mathbb{R}^2$, and $\mathbb{R}^3$, respectively). Measures are always nonnegative, so that if $\mu$ is a measure on set $X$ and $A$ is a measurable subset of $X$, then $\mu(A) \geq 0$. Further, measures are countably additive, so that the measure of a countable union of subsets is the sum of the measures of each subset---i.e. if $\mu$ is a measure on $X$ and $A_i \subseteq X$ is measurable for every $i$, then $\mu(\cup_{i = 1}^{\infty} A_i) = \sum_{i = 1}^{\infty} \mu(A_i)$. Finally, the measure of the empty set is zero. In order to satisfy these properties, it may be necessary that some subsets of $X$ do not have an associated measure (Banach-Tarski Paradox).
In particular, not every subset of a set that has a measure is guaranteed to have a measure itself.
%Intuitively, a \textit{measure} on a set $X$ maps subsets of $X$ to numbers in a way designed to capture information regarding the ``sizes'' of these subsets. Perhaps the most familiar measures are length, area, and volume (on $\mathbb{R}$, $\mathbb{R}^2$, and $\mathbb{R}^3$, respectively). Measures are always nonnegative, so that if $\mu$ is a measure on set $X$ and $A$ is a measurable subset of $X$, then $\mu(A) \geq 0$. Further, measures are countably additive, so that the measure of a countable union of subsets is the sum of the measures of each subset---i.e. if $\mu$ is a measure on $X$ and $A_i \subseteq X$ is measurable for every $i$, then $\mu(\cup_{i = 1}^{\infty} A_i) = \sum_{i = 1}^{\infty} \mu(A_i)$. Finally, the measure of the empty set is zero. In order to satisfy these properties, it may be the case that some subsets of $X$ do not have an associated measure.

Our paper uses the \textit{Lebesgue measure} $\mu$ on $\mathbb{R}$.
This measure is designed to rigorize the natural notion of length.
Every $\mathbb{R}^n$ has an associated Lebesgue measure.
The Lebesgue measures on $\mathbb{R}^2$ and $\mathbb{R}^3$ rigorize the notions of area and volume, respectively.
We are only interested in the case where $n = 1$, since we are mapping traces of hybrid programs to a single time axis (see Section \ref{sec:Semantics}).
Most reasonable subsets of $\mathbb{R}$ have an associated Lebesgue measure. Notably, any interval $(a, b)$ with $a < b$ has nonzero Lebesgue measure $a - b \neq 0$ in $\mathbb{R}$ and likewise for $[a,b]$.

We focus in particular on Lebesgue measure zero. A subset $\mathcal{U}$ of $\mathbb{R}$ has \textit{Lebesgue measure zero} iff for all $\epsilon {>} 0$ there exist intervals $I_p = [a_p, b_p]$ so that $\mathcal{U} \subseteq \bigcup_{p = 0}^{\infty} I_p$ and $\sum_{p = 0}^{\infty} |b_p - a_p|{<}\epsilon$ (see \cite[Section 1-1013]{M0Jeffreys}).
Measure zero sets satisfy many nice properties.
Since measures are additive, the union of two measure zero sets has measure zero; i.e. if $\mu(A) = 0$ and $\mu(B) = 0$, then $\mu(A \cup B) = \mu(A) + \mu(B) = 0 + 0 = 0$.
Also, the Lebesgue measure on $\mathbb{R}$ is \textit{complete}, i.e., each subset of any set of measure zero is itself measurable and has measure zero.
To see this, consider $\mathcal{S}_1 \subset \mathbb{R}$ with Lebesgue measure zero and $\mathcal{S}_2 \subseteq \mathcal{S}_1$.
Then, to see that $\mathcal{S}_2$ is measurable and has Lebesgue measure zero, simply choose the same intervals $I_p$ for $\mathcal{S}_2$ as for $\mathcal{S}_1$ in the definition of measure zero.
Finally, notice that if $(a, b) \subseteq \mathcal{S}$ for any subset $\mathcal{S}\subseteq\mathbb{R}$ of the reals and any interval $(a, b)$, then $\mathcal{S}$ cannot have measure zero.

\section{Conservativity}\label{app:Conservativity}
We first show that we have a \textit{conservative} extension of \dL: i.e. that all valid \dL formulas are valid in \PdTL.
In \dL, the semantics of differential equations is as follows \cite{Pl,Pl2}:

\begin{definition}\label{def:classicalDef} State $\secondStateSymbol$ is \textit{reachable} from initial state $\firstStateSymbol$ by $x_1'{=}\theta_1, \dots, x_n'{=}\theta_n\ \&\ \evDomainConstraint$ iff there is a function $\varphi : [0,r] \to \stateSet{\realValuedVariableSet}$ such that:
\begin{enumerate}
\item Initial and final states match: $\varphi(0) = \firstStateSymbol, \varphi(r) = \secondStateSymbol$.
\item $\varphi$ respects the differential equations: For each variable $x_i$, $\evalInState{\varphi(z)}{x_i}$ is continuous in $z$ on $[0, r]$ and if $r {>} 0$, $\evalInState{\varphi(z)}{x_i}$ has a time-derivative of value $\evalInState{\varphi(z)}{\theta_i}$ \emph{at all $z \in [0, r]$}.
\item The value of other variables $y \not \in \{x_1, \dots, x_n\}$ remains constant throughout the continuous evolution, that is $\evalInState{\varphi(z)}{y} = \evalInState{\firstStateSymbol}{y}$ for all times $z \in [0,r]$;
\item $\varphi$ respects the evolution domain at all times: $\varphi(z) \models \evDomainConstraint$ for all $z \in [0, r]$.
\end{enumerate}
\end{definition}

Because we have not generalized the syntax to admit systems of differential equations with non-classical solutions, \rref{def:CaraDef} and \rref{def:classicalDef} are equivalent.
In other words, $\secondStateSymbol$ is reachable from $\firstStateSymbol$ as in \rref{def:classicalDef} iff $\secondStateSymbol$ is reachable in the extended sense from $\firstStateSymbol$ (as in \rref{def:CaraDef}).
In \rref{def:dTLTraceSemantics}, the only change we made from \cite[Definition 4.3]{Pl} was to the traces of ODEs.
Since the traces of ODEs have not actually changed, the semantics of hybrid programs in \PdTL and \dTL is the same.

In \dL, the transition semantics of hybrid programs is given by the following reachability relation $\reach$ \cite{Pl,Pl2}:

\begin{definition}\label{def:reachDef} For a hybrid program $\alpha$, $\reach(\alpha)$ is defined inductively by:
\begin{enumerate}
\item $\reach(x:=e) = \{(\firstStateSymbol, \secondStateSymbol): \secondStateSymbol = \updateState{\firstStateSymbol}{x}{val(\firstStateSymbol, e)} \text{ for } \firstStateSymbol \in \stateSet{\realValuedVariableSet}\}$
\item $\reach(x' = f(x)\ \&\ \evDomainConstraint) = \{(\firstStateSymbol, \secondStateSymbol) \text{ where } \firstStateSymbol \text{ is reachable from } \secondStateSymbol \text{ as in \rref{def:classicalDef}}\}$
\item $\reach(\alpha \cup \beta) = \reach(\alpha) \cup \reach(\beta)$
\item $\reach(?\FOLformulaSymbol) = \{(\firstStateSymbol, \firstStateSymbol) : val(\firstStateSymbol, \FOLformulaSymbol) = true\}$
\item $\reach(\alpha; \beta) = \{(\firstStateSymbol_1, \firstStateSymbol_2)\ |\ (\firstStateSymbol_1, \secondStateSymbol) \in \reach(\alpha), (\secondStateSymbol, \firstStateSymbol_2) \in \reach(\beta)  \text{ for a state } \secondStateSymbol\}$
\item $\tau(\alpha^*) = \bigcup_{n \in \mathbb{N}}\reach(\alpha^n)$, where $\alpha^{n+1}$ is defined as $(\alpha^n; \alpha)$ for $n{\geq}1$, and $\alpha^0$ is defined as $?(true)$
\end{enumerate}
\end{definition}

We have the following correspondence between the reachability relation $\rho$ and our trace semantics $\tau$.
This result corresponds to \cite[Lemma 4.1]{Pl} for \dTL, and the proof is the same, since the proof in \cite{Pl} only requires reasoning about the semantics of hybrid programs and, as discussed above, our semantics of hybrid programs is the same as that of \dTL.

\begin{lemma}\label{lem:reachTrace}
For hybrid programs $\alpha$, we have 
$$\reach(\alpha) = \{(\text{first }\tracesymbol, \text{last }\tracesymbol)\ |\ \tracesymbol \in \tau(\alpha)\text{ terminates}\},$$
where $\reach(\alpha)$ is the reachability relation for hybrid programs.
\end{lemma}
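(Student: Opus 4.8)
The plan is to proceed by structural induction on the hybrid program $\alpha$, establishing in each case that a pair $(\firstStateSymbol, \secondStateSymbol)$ lies in $\reach(\alpha)$ exactly when there is a terminating trace $\tracesymbol \in \tau(\alpha)$ with $\text{first }\tracesymbol = \firstStateSymbol$ and $\text{last }\tracesymbol = \secondStateSymbol$. Comparing \rref{def:reachDef} with \rref{def:dTLTraceSemantics} case by case, the atomic programs should fall out by direct inspection.

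For the base cases, consider $x := e$ first: every trace $(\hat{\firstStateSymbol}, \hat{\secondStateSymbol})$ terminates (its last state is $\secondStateSymbol \neq \Lambda$), and its first and last states are precisely the pair recorded in $\reach(x := e)$. For $?\FOLformulaSymbol$, the terminating traces are exactly the single-state traces $(\hat{\firstStateSymbol})$ that arise when $\FOLformulaSymbol$ holds---whose first and last states both equal $\firstStateSymbol$---while the aborting traces $(\hat{\firstStateSymbol}, \hat{\Lambda})$ that arise when $\FOLformulaSymbol$ fails do not terminate and so contribute nothing, matching $\reach(?\FOLformulaSymbol)$. The ODE case $x' = f(x) \& \evDomainConstraint$ is where I would invoke the remark preceding this lemma: since the syntax has not been generalized, \rref{def:CaraDef} and \rref{def:classicalDef} define the same reachability relation (by uniqueness of classical solutions via Picard-Lindel\"{o}f), so the terminating traces $(\varphi)$ correspond exactly to the reachable pairs $(\varphi(0), \varphi(r))$, and the aborting traces $(\hat{\firstStateSymbol}, \hat{\Lambda})$ obtained when $\firstStateSymbol \not\models \evDomainConstraint$ again fail to terminate.

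For the inductive step, nondeterministic choice is immediate: both $\reach$ and $\tau$ distribute over $\cup$, so the correspondence is inherited directly from the induction hypotheses for $\alpha$ and $\beta$. The sequential composition case is the one I expect to require the most care. Here I would analyze the composition $\tracesymbol \circ \zeta$ from \rref{def:dTLTraceSemantics}: a composed trace terminates only when $\tracesymbol$ itself terminates (otherwise $\tracesymbol \circ \zeta = \tracesymbol$ does not terminate), in which case $\text{last }\tracesymbol = \text{first }\zeta$ is forced and $\zeta$ must terminate too; moreover $\text{first}(\tracesymbol \circ \zeta) = \text{first }\tracesymbol$ and $\text{last}(\tracesymbol \circ \zeta) = \text{last }\zeta$. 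The intermediate state $\secondStateSymbol = \text{last }\tracesymbol = \text{first }\zeta$ then plays exactly the role of the witnessing middle state in the definition of $\reach(\alpha; \beta)$, and combining this with the induction hypotheses for $\alpha$ and $\beta$ yields both inclusions. Finally, the loop case follows by unwinding $\reach(\alpha^*) = \bigcup_n \reach(\alpha^n)$ and $\tau(\alpha^*) = \bigcup_n \tau(\alpha^n)$ and applying the sequential-composition argument to each $\alpha^{n+1} = \alpha^n; \alpha$, with base case $\alpha^0 = \ ?(true)$; taking the union over $n$ preserves the correspondence.

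The main obstacle is the bookkeeping in the sequential composition case---specifically, verifying that the termination condition on $\tracesymbol \circ \zeta$ correctly synchronizes with the existence of a witnessing intermediate state in $\reach(\alpha; \beta)$, and checking both directions (that every composed terminating trace yields a reachable pair, and that every reachable pair is realized by gluing terminating subtraces at a shared middle state). Since \rref{def:dTLTraceSemantics} depends only on the semantics of hybrid programs, which is unchanged from \dTL, this argument is identical to that of \cite[Lemma 4.1]{Pl}.
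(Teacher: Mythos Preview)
Your proposal is correct and aligns with the paper's approach: the paper simply defers the entire proof to \cite[Lemma 4.1]{Pl}, and you both spell out the structural induction that reference would contain and explicitly note the identification with \cite[Lemma 4.1]{Pl} at the end.
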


\begin{proof} Identical to \cite[Lemma 4.1]{Pl}.
\end{proof}
We also have the following proposition. The proof is the same as \cite[Proposition 4.1]{Pl}, because it only requires reasoning about state formulas, and the semantics of state formulas is unchanged from \dTL.
\begin{proposition}\label{conserv}
\PdTL is a conservative extension of \dL, i.e. all valid \dL formulas are valid in the logic for \PdTL.
\end{proposition}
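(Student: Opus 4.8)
The plan is to exploit the fact that \dL formulas are exactly those \PdTL formulas that avoid the $\Box_{\text{tae}}$ modality, so that it suffices to show that every such formula receives the same truth value in each state under the \PdTL semantics as under the \dL semantics. Since validity means truth in all states, matching truth values in all states immediately yields that validity in \dL coincides with validity in \PdTL on this fragment, and in particular every valid \dL formula remains valid in \PdTL.

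First I would establish that the two semantics agree on hybrid programs. The only change \rref{def:dTLTraceSemantics} makes relative to the \dTL trace semantics is to replace classical ODE solutions with Carath\'eodory solutions (\rref{def:CaraDef}). However, since the syntax of hybrid programs has not been generalized to admit genuinely non-classical solutions, Picard--Lindel\"of guarantees a unique classical solution, so \rref{def:CaraDef} and \rref{def:classicalDef} describe the same reachable states. Hence the set $\tau(\alpha)$ of traces, and the induced reachability relation, is identical to that of \dTL (and \dL). \rref{lem:reachTrace} records precisely the correspondence I need: $\reach(\alpha) = \{(\text{first }\tracesymbol, \text{last }\tracesymbol) : \tracesymbol \in \tau(\alpha)\text{ terminates}\}$.

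Next I would prove, by induction on formula structure, that $\val{\firstStateSymbol}{\formulaSymbol}$ in \PdTL equals the \dL valuation of $\formulaSymbol$ for every state $\firstStateSymbol$ and every $\Box_{\text{tae}}$-free state formula $\formulaSymbol$. The atomic, Boolean, and quantifier cases are immediate, since \rref{def:stateformulaSemantics} reproduces the \dL clauses verbatim. The interesting cases are $[\alpha]\formulaSymbol$ and $\langle\alpha\rangle\formulaSymbol$, where the trace formula is the non-temporal $\formulaSymbol$. By \rref{def:traceformulasemantics}, $\val{\tracesymbol}{\formulaSymbol}$ equals $\val{\text{last }\tracesymbol}{\formulaSymbol}$ on terminating traces and is undefined otherwise; substituting this into the clauses of \rref{def:stateformulaSemantics} for $[\alpha]$ and $\langle\alpha\rangle$ turns them into quantification over the last states of terminating traces. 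Applying \rref{lem:reachTrace} together with the inductive hypothesis on $\formulaSymbol$ then shows these match the \dL modal clauses over $\reach(\alpha)$ exactly.

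The step I expect to require the most care is the modal case, specifically tracking the role of undefinedness on non-terminating (aborting) traces: the \PdTL box only constrains traces whose valuation is defined, and one must verify that this matches the \dL convention of quantifying over $\reach(\alpha)$, which by \rref{lem:reachTrace} comprises exactly the terminating traces. Once this bookkeeping is in place, the induction closes, and since every valid \dL formula is true in all states, it remains true in all states under \PdTL and is therefore valid in \PdTL. As the excerpt notes, this mirrors the conservativity argument of \dTL over \dL, since only the program semantics and the state-formula semantics---both unchanged---are involved.
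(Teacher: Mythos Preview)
Your proposal is correct and follows essentially the same approach as the paper, which simply defers to the analogous \dTL conservativity result by observing that only the hybrid-program semantics and the state-formula semantics are involved, both of which are unchanged. You have spelled out in detail the structural induction that the paper leaves implicit in its citation, including the key use of \rref{lem:reachTrace} to mediate between terminating traces and the \dL reachability relation in the modal cases.
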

\begin{proof}
Identical to \cite[Proposition 4.1]{Pl}.
\end{proof}

Now we show that the \dL axioms in \rref{fig:dLaxioms} are sound for state formulas. Intuitively, this works because the proofs of the \dL axioms do not in any way depend on the actual particulars of the postconditions; only on the semantics of the hybrid programs, which is unchanged for \PdTL.

\begin{proposition} The axioms of \dL in \rref{fig:dLaxioms} (as developed in \cite{Pl2}) are sound for state formulas of \PdTL:
\begin{figure}[tbhp]
  \def\arraystretch{1.4}
    \begin{tabular}{l c} 
   $[:=]\ \  [x := e]\formulaSymbol(x) \leftrightarrow \formulaSymbol(e)$ & 
    \\
    $[?]\ \ [?\secondFOLformulaSymbol]\formulaSymbol \leftrightarrow (\secondFOLformulaSymbol \to \formulaSymbol)$ & \\ 
    $[']\ \ [x' = f(x)]\formulaSymbol(x) \leftrightarrow \forall t{\geq}0 [x := y(t)]\formulaSymbol(x)\ \  (y'(t) = f(y))$ & \\
    $['\&]\ \ [x' = f(x)\&q(x)]\formulaSymbol(x) \leftrightarrow \forall t{\geq}0 ((\forall 0{\leq}s{\leq}t\ q(y(s)) \to [x := y(t)]\formulaSymbol(x))\ \  (y'(t) = f(y))$ 
    \\
    $[\cup]\ \  [\alpha \cup \beta] \formulaSymbol \leftrightarrow [\alpha]\formulaSymbol \land [\beta]\formulaSymbol$ & \\
    $[;]\ \ [\alpha; \beta]\formulaSymbol \leftrightarrow [\alpha][\beta]\formulaSymbol$ & \\
    $[^*]\ \  [\alpha^*]\formulaSymbol \leftrightarrow \formulaSymbol \land [\alpha][\alpha^*]\formulaSymbol$ & \\
    $\langle \cdot \rangle\ \ \langle \alpha \rangle \formulaSymbol \leftrightarrow \lnot [\alpha] \lnot \formulaSymbol$
    \end{tabular}
        \caption{\dL Axioms for State Formulas}
    \label{fig:dLaxioms}
\end{figure}

\end{proposition}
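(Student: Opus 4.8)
The plan is to reduce the soundness of each axiom in \rref{fig:dLaxioms} to its already-established soundness in \dL \cite{Pl2}, exploiting the fact that none of these axioms mentions the $\Box_{\text{tae}}$ modality: each is an ordinary \dL formula built purely from state formulas and hybrid programs. Thus it suffices to show that \PdTL and \dL assign the same truth value to every \dL formula in every state, after which validity in \dL transfers to validity in \PdTL. This is in essence the content of the conservativity result (Proposition~\ref{conserv}), so the work is to make the semantic coincidence explicit at the level of the box and diamond over state formulas.

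First I would establish the coincidence by structural induction on a state formula $\formulaSymbol$ containing no $\Box_{\text{tae}}$: for every state $\firstStateSymbol$, the \PdTL valuation $\val{\firstStateSymbol}{\formulaSymbol}$ agrees with its \dL valuation. The atomic, Boolean, and quantifier cases are immediate because \rref{def:stateformulaSemantics} is copied verbatim from \dL. The only cases needing work are $[\alpha]\formulaSymbol$ and $\langle\alpha\rangle\formulaSymbol$. For the box I unfold clause 5 of \rref{def:stateformulaSemantics}: $\val{\firstStateSymbol}{[\alpha]\formulaSymbol}$ is true iff for every trace $\tracesymbol$ of $\alpha$ with $\text{first }\tracesymbol = \firstStateSymbol$ for which $\val{\tracesymbol}{\formulaSymbol}$ is defined, $\val{\tracesymbol}{\formulaSymbol}$ is true. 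Since $\formulaSymbol$ here is a state formula, clause 1 of \rref{def:traceformulasemantics} tells us $\val{\tracesymbol}{\formulaSymbol}$ is defined exactly when $\tracesymbol$ terminates, and then equals $\val{\text{last }\tracesymbol}{\formulaSymbol}$. By \rref{lem:reachTrace}, the pairs $(\text{first }\tracesymbol, \text{last }\tracesymbol)$ ranging over terminating traces of $\alpha$ are exactly the reachability relation $\reach(\alpha)$. Hence $\val{\firstStateSymbol}{[\alpha]\formulaSymbol}$ is true iff $\val{\secondStateSymbol}{\formulaSymbol}$ is true for all $\secondStateSymbol$ with $(\firstStateSymbol,\secondStateSymbol)\in\reach(\alpha)$, which is precisely the \dL relational semantics of the box, and the induction hypothesis identifies the inner valuations of $\formulaSymbol$. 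The diamond case is dual, using clause 6.

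Here I would also record the one place where the ODE semantics could a priori have diverged: \rref{lem:reachTrace} rests on the trace semantics of $x' = f(x)\ \&\ \evDomainConstraint$, which \PdTL defines through Carath\'eodory solutions (\rref{def:CaraDef}) rather than the classical solutions of \dL (\rref{def:classicalDef}). Because the hybrid-program syntax was not generalized, Picard--Lindel\"of guarantees a unique classical solution, so the two notions of reachability coincide; this is exactly what makes the hybrid-program semantics of \PdTL identical to that of \dL and what \rref{lem:reachTrace} silently uses.

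With the coincidence in hand, each axiom of \rref{fig:dLaxioms} is a \dL formula that is valid in \dL by \cite{Pl2}; since its \PdTL and \dL valuations agree in every state, it is valid in \PdTL as well, establishing soundness for state formulas. The main obstacle is not depth but care: one must verify that the trace-based box and diamond of \rref{def:stateformulaSemantics}, together with the termination-sensitive valuation of state formulas in \rref{def:traceformulasemantics}, collapse to the relational semantics of \dL. This matching is precisely the content of \rref{lem:reachTrace}, so once that lemma is invoked the remaining argument is bookkeeping through the structural induction, with no new measure-theoretic or topological subtleties entering (the $\Box_{\text{tae}}$ trace formulas simply never occur in \dL formulas and so play no role here).
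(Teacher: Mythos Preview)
Your reduction via conservativity is clean but has a scope mismatch with what the proposition actually asserts. The axioms in \rref{fig:dLaxioms} are \emph{schemas}: $\formulaSymbol$ ranges over state formulas of \PdTL, which by \rref{def:grammar} may contain $\Box_{\text{tae}}$ nested inside boxes (e.g.\ $\formulaSymbol = [\gamma]\Box_{\text{tae}} P$). When such a $\formulaSymbol$ is substituted, the resulting instance of, say, $[\cup]$ is not a \dL formula at all, so your coincidence-on-\dL-formulas argument and the appeal to validity in \dL do not cover it. The paper needs precisely these instances: in the train proof of \rref{sec:Example}, axiom $[\cup]$ is applied with $\formulaSymbol = [\{x'{=}v, v'{=}a \ \&\ 0{\leq}v{\leq}100\}]\Box_{\text{tae}}\, v{<}100$.

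The fix is already latent in your argument. Your key observation---that for any \PdTL state formula $\formulaSymbol$, clause~5 of \rref{def:stateformulaSemantics} together with clause~1 of \rref{def:traceformulasemantics} and \rref{lem:reachTrace} collapses the trace-based box to the relational condition ``$\secondStateSymbol \models \formulaSymbol$ for all $\secondStateSymbol$ with $(\firstStateSymbol,\secondStateSymbol)\in\reach(\alpha)$''---does not require $\formulaSymbol$ to be free of $\Box_{\text{tae}}$; it only requires $\formulaSymbol$ to be a state (rather than trace) formula. Once this relational characterization of the \PdTL box is in hand, the \dL soundness arguments from \cite{Pl2} go through verbatim, because they are schematic in $\formulaSymbol$ and use nothing about $\formulaSymbol$ beyond the relational semantics of $[\alpha]$. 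The paper takes exactly this route: it establishes the relational characterization and then verifies each axiom case by case directly, never invoking \dL validity of the instantiated formulas. So rather than proving a \PdTL--\dL coincidence on \dL formulas and transferring validity, you should isolate the relational characterization of $[\alpha]\formulaSymbol$ for arbitrary \PdTL state formulas $\formulaSymbol$ and then either carry out the short case analysis (as the paper does) or explicitly invoke the schematic nature of the \dL soundness proofs.
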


\begin{proof}
Case $[:=]$: By \rref{def:stateformulaSemantics}, $\firstStateSymbol \models [x := e]\formulaSymbol(x)$ iff for every terminating trace $\tracesymbol$ of $x := e$ with $\text{first } \tracesymbol = \firstStateSymbol$, $\val{\tracesymbol}{\formulaSymbol(x)}$ is true. Using \rref{lem:reachTrace} together with \rref{def:traceformulasemantics}, this holds iff for all states $\secondStateSymbol$ with $(\firstStateSymbol, \secondStateSymbol) \in \reach(x := e)$, $\secondStateSymbol \models \formulaSymbol(x)$. By \rref{def:reachDef}, $(\firstStateSymbol, \secondStateSymbol) \in \reach(x := e)$ iff $\secondStateSymbol = \updateState{\firstStateSymbol}{x}{\val{\firstStateSymbol}{e}}$. So $\firstStateSymbol \models [x := e]\formulaSymbol(x)$ iff $\updateState{\firstStateSymbol}{x}{\val{\firstStateSymbol}{e}} \models \formulaSymbol(x)$, which holds iff $\firstStateSymbol \models \formulaSymbol(e)$.
\\ \\
Case $[?]$: By \rref{def:stateformulaSemantics}, $\firstStateSymbol \models [?\secondFOLformulaSymbol]\formulaSymbol$ iff for every terminating trace $\tracesymbol$ of $?\secondFOLformulaSymbol$ with $\text{first } \tracesymbol = \firstStateSymbol$, $\val{\tracesymbol}{\formulaSymbol}$ is true. Using \rref{lem:reachTrace} together with \rref{def:traceformulasemantics}, this holds iff for all states $\secondStateSymbol$ with $(\firstStateSymbol, \secondStateSymbol) \in \reach(?\secondFOLformulaSymbol)$, $\secondStateSymbol \models \formulaSymbol$. By \rref{def:reachDef}, $(\firstStateSymbol, \secondStateSymbol) \in \reach(?\secondFOLformulaSymbol)$ iff $\secondStateSymbol = \firstStateSymbol$ and $\firstStateSymbol \models \secondFOLformulaSymbol$. So  $\firstStateSymbol \models [?\secondFOLformulaSymbol]\formulaSymbol$ iff $\firstStateSymbol \models \secondFOLformulaSymbol \to \formulaSymbol$, and thus the axiom holds.
\\ \\
Case $[']$: Assume that $x' = f(x)$ has the (classical unique global) solution $y(t)$ where $y'(t) = f(y)$ and where $t$ is a fresh variable. By \rref{def:stateformulaSemantics}, $\firstStateSymbol \models [x' = f(x)]\formulaSymbol(x)$ iff for every terminating trace $\tracesymbol$ of $x' = f(x)$ with $\text{first } \tracesymbol = \firstStateSymbol$, $\val{\tracesymbol}{\formulaSymbol(x)}$ is true. Using \rref{lem:reachTrace} together with \rref{def:traceformulasemantics}, this holds iff for all states $\secondStateSymbol$ with $(\firstStateSymbol, \secondStateSymbol) \in \reach(x' = f(x))$, $\secondStateSymbol \models \formulaSymbol(x)$. By \rref{def:reachDef}, $(\firstStateSymbol, \secondStateSymbol) \in \reach(x' = f(x))$ iff there is a function $\varphi: [0, r] \to \stateSet{\realValuedVariableSet}$ satisfying the conditions of \rref{def:classicalDef}.

Since we are assuming that $y(t)$ is a classical unique global solution of $x'{=}f(x)$, we have that for all $r \geq 0$ there is a function $\varphi: [0, r] \to \stateSet{\realValuedVariableSet}$ that satisfies the conditions of \rref{def:classicalDef}, and so $\reach(x'{=}f(x)) = \{(\firstStateSymbol, \varphi(r))\ |\ r \geq 0\}$. Further, we have the following semantic correspondence between $y(t)$ and $\varphi$, namely, $\varphi(r) \models \secondStateSymbol$ iff $\updateState{\firstStateSymbol}{t}{r} \models [x := y(t)]\formulaSymbol(x)$. Thus $\firstStateSymbol \models [x'{=}f(x)]\formulaSymbol(x)$ iff $\firstStateSymbol \models \forall t{\geq}0\ [x:=y(t)]\formulaSymbol(x)$.
\\ \\
Case $['\&]$:  Assume that $x'{=}f(x)$ has the (classical unique global) solution $y(t)$ where $y'(t) = f(y)$ and where $t$ is a fresh variable. By \rref{def:stateformulaSemantics}, $\firstStateSymbol \models [x'{=}f(x)\&q(x)]\formulaSymbol(x)$ iff for every terminating trace $\tracesymbol$ of $x'{=}f(x) \& q(x)$ with $\text{first } \tracesymbol = \firstStateSymbol$, $\val{\tracesymbol}{\formulaSymbol(x)}$ is true. Using \rref{lem:reachTrace} together with \rref{def:traceformulasemantics}, this holds iff for all states $\secondStateSymbol$ with $(\firstStateSymbol, \secondStateSymbol) \in \reach(x'{=}f(x))$, $\secondStateSymbol \models \formulaSymbol(x)$. By \rref{def:reachDef}, $(\firstStateSymbol, \secondStateSymbol) \in \reach(x'{=}f(x))$ iff there is a function $\varphi: [0, r] \to \stateSet{\realValuedVariableSet}$ satisfying the conditions of \rref{def:classicalDef}.

Since we are assuming that $y(t)$ is a classical unique global solution of $x'{=}f(x)$, we have that for all $r \geq 0$ there is a function $\varphi: [0, r] \to \stateSet{\realValuedVariableSet}$ that satisfies the first three conditions of \rref{def:classicalDef}, and so $\reach(x'{=}f(x)) = \{(\firstStateSymbol, \varphi(r))\ |\ r \geq 0, \forall 0{\leq}s{\leq}r\ \ \varphi(s) \models q(x)\}$, where $\forall 0{\leq}s{\leq}r\ \ \varphi(s) \models q(x)$ ensures the fourth condition of \rref{def:classicalDef}. Further, we have the following semantic correspondence between $y(t)$ and $\varphi$, namely, $\varphi(r) \models \secondStateSymbol$ iff $\updateState{\firstStateSymbol}{t}{r} \models [x := y(t)]\formulaSymbol(x)$. Using this semantic correspondence, we achieve $\firstStateSymbol \models [x'{=}f(x)]\formulaSymbol(x)$ iff $\firstStateSymbol \models \forall t{\geq}0 ((\forall 0{\leq}s{\leq}t\ \ q(y(s))) \to [x := y(t)]\formulaSymbol(x))$.
\\ \\
Case $[\cup]$:  By \rref{def:stateformulaSemantics}, $\firstStateSymbol \models [\alpha \cup \beta]\formulaSymbol$ iff for every terminating trace $\tracesymbol$ of $\alpha \cup \beta$ with $\text{first } \tracesymbol = \firstStateSymbol$, $\val{\tracesymbol}{\formulaSymbol}$ is true. Using \rref{lem:reachTrace} together with \rref{def:traceformulasemantics}, this holds iff for all states $\secondStateSymbol$ with $(\firstStateSymbol, \secondStateSymbol) \in \reach(\alpha \cup \beta)$, $\secondStateSymbol \models \formulaSymbol$. By \rref{def:reachDef}, $(\firstStateSymbol, \secondStateSymbol) \in \reach(\alpha \cup \beta)$ iff $(\firstStateSymbol, \secondStateSymbol) \in \reach(\alpha) \cup \reach(\beta)$, and thus $\firstStateSymbol \models [\alpha \cup \beta]\formulaSymbol$ iff for all states $\secondStateSymbol$ with $(\firstStateSymbol, \secondStateSymbol) \in \reach(\alpha)$, $\secondStateSymbol \models \formulaSymbol$ and for all states $\secondStateSymbol$ with $(\firstStateSymbol, \secondStateSymbol) \in \reach(\beta)$, $\secondStateSymbol \models \formulaSymbol$; i.e. $\firstStateSymbol \models [\alpha \cup \beta]\formulaSymbol$ iff $\firstStateSymbol \models [\alpha]\formulaSymbol \land [\beta]\formulaSymbol$.
\\ \\
Case $[;]$: By \rref{def:stateformulaSemantics}, $\firstStateSymbol \models [\alpha;\beta]\formulaSymbol$ iff for every terminating trace $\tracesymbol$ of $\alpha; \beta$ with $\text{first } \tracesymbol = \firstStateSymbol$, $\val{\tracesymbol}{\formulaSymbol}$ is true. Using \rref{lem:reachTrace} together with \rref{def:traceformulasemantics}, this holds iff for all states $\secondStateSymbol$ with $(\firstStateSymbol, \secondStateSymbol) \in \reach(\alpha; \beta)$, $\secondStateSymbol \models \formulaSymbol$. By \rref{def:reachDef}, $(\firstStateSymbol, \secondStateSymbol) \in \reach(\alpha; \beta)$ iff there is some $\secondStateSymbol_1$ with $(\firstStateSymbol, \secondStateSymbol_1) \in \reach(\alpha)$ and $(\secondStateSymbol_1, \secondStateSymbol) \in \reach(\beta)$. 

Then, using \rref{def:stateformulaSemantics}, $\firstStateSymbol \models [\alpha;\beta]\formulaSymbol$ iff for all states $\secondStateSymbol_1$ with $(\firstStateSymbol, \secondStateSymbol_1) \in \reach(\alpha)$, $\secondStateSymbol_1 \models [\beta]\formulaSymbol$. Using \rref{def:stateformulaSemantics}, again, $\firstStateSymbol \models [\alpha;\beta]\formulaSymbol$ iff $\firstStateSymbol \models [\alpha][\beta]\formulaSymbol$.
\\ \\
Case $[^*]$: By \rref{def:stateformulaSemantics}, $\firstStateSymbol \models [\alpha^*]\formulaSymbol$ iff for every terminating trace $\tracesymbol$ of $\alpha^*$ with $\text{first } \tracesymbol = \firstStateSymbol$, $\val{\tracesymbol}{\formulaSymbol}$ is true. Using \rref{lem:reachTrace} together with \rref{def:traceformulasemantics}, this holds iff for all states $\secondStateSymbol$ with $(\firstStateSymbol, \secondStateSymbol) \in \reach(\alpha^*)$, $\secondStateSymbol \models \formulaSymbol$. By \cite{Pl2}, $\reach(\alpha^*) = \reach(?true) \cup \reach(\alpha; \alpha^*)$. So,  $\firstStateSymbol \models [\alpha^*]\formulaSymbol$ iff for all states $\secondStateSymbol$ with $(\firstStateSymbol, \secondStateSymbol) \in \reach(?true)\cup \reach(\alpha;\alpha^*)$, $\secondStateSymbol \models \formulaSymbol$. 

Thus, $\firstStateSymbol \models [\alpha^*]\formulaSymbol$ iff $\firstStateSymbol \models [(?true)]\formulaSymbol \land [\alpha; \alpha^*]\formulaSymbol$, and using axiom $[?]$ and $[;]$ this holds iff $\firstStateSymbol \models (\text{true} \to \formulaSymbol) \land [\alpha][\alpha^*]\formulaSymbol$ or iff $\firstStateSymbol \models \formulaSymbol \land [\alpha][\alpha^*]\formulaSymbol$.
\\ \\
Case $\langle \cdot \rangle$: By \rref{def:stateformulaSemantics}, $\firstStateSymbol \models \langle \alpha \rangle \formulaSymbol$ iff for some terminating trace $\tracesymbol$ of $\alpha^*$ with $\text{first } \tracesymbol = \firstStateSymbol$, $\val{\tracesymbol}{\formulaSymbol}$ is true. Using \rref{lem:reachTrace} together with \rref{def:traceformulasemantics}, this holds iff there exists a $\secondStateSymbol$ with $(\firstStateSymbol, \secondStateSymbol) \in \reach(\alpha)$ and $\secondStateSymbol \models \formulaSymbol$. That holds iff it is not the case that for all states $\secondStateSymbol$ with $(\firstStateSymbol, \secondStateSymbol) \in \reach(\alpha)$,  $\secondStateSymbol \models \lnot \formulaSymbol$, i.e. iff $\firstStateSymbol \not \models [\alpha]\lnot \formulaSymbol$, or $\firstStateSymbol \models \lnot [\alpha]\lnot \formulaSymbol$.
\end{proof}

\section{Soundness of Proof Calculus}\label{app:ProofCalculusSoundness}

\subsection{Soundness of $[\text{?}]_{\text{tae}}$}
\begin{proof}
Fix a state $\firstStateSymbol$ and take a trace $\tracesymbol$ of $?\FOLformulaSymbol$ with $\text{first }\tracesymbol = \firstStateSymbol$.
Using \rref{def:dTLTraceSemantics}, we see that $\tracesymbol$ either has the form $(\hat{\firstStateSymbol}, \Lambda)$ or $(\hat{\omega})$.
As $|\hat{\omega}| = 0$, using \rref{def:traceformulasemantics}, $\tae{\tracesymbol}{\formulaSymbol}$ holds iff $\firstStateSymbol \models \cl{\formulaSymbol}$.
Thus for any trace $\tracesymbol$ of $?\FOLformulaSymbol$ with $\text{first }\tracesymbol = \firstStateSymbol$, $\tae{\tracesymbol}{\formulaSymbol}$ holds iff $\firstStateSymbol \models \cl{\formulaSymbol}$.
Therefore by \rref{def:stateformulaSemantics}, $\firstStateSymbol \models \rae{?\FOLformulaSymbol}{\formulaSymbol} \leftrightarrow \firstStateSymbol \models \cl{\formulaSymbol}$.
\end{proof}

\subsection{Soundness of $[\cup]_{\text{tae}}$}
\begin{proof} Take any state $\firstStateSymbol$ with $\firstStateSymbol \models \rae{\alpha \cup \beta}{\formulaSymbol}$. Then, using \rref{def:stateformulaSemantics}, any trace $\tracesymbol$ of $\alpha \cup \beta$ with $\text{first } \tracesymbol = \firstStateSymbol$ satisfies $\tae{\tracesymbol}{\formulaSymbol}$. By \rref{def:dTLTraceSemantics}, $\tau(\alpha \cup \beta) = \tau(\alpha) \cup \tau(\beta)$. Thus, we have that any trace $\tracesymbol \in \tau(\alpha) \cup \tau(\beta)$ with $\text{first } \tracesymbol = \firstStateSymbol$ satisfies $\tae{\tracesymbol}{\formulaSymbol}$. In other words, each trace $\firstSubtraceSymbol \in \tau(\alpha)$ with $\text{first } \firstSubtraceSymbol = \firstStateSymbol$ satisfies $\tae{\firstSubtraceSymbol}{\formulaSymbol}$ and each trace $\secondSubtraceSymbol \in \tau(\beta)$ with $\text{first } \secondSubtraceSymbol = \firstStateSymbol$ satisfies $\tae{\secondSubtraceSymbol}{\formulaSymbol}$. By \rref{def:stateformulaSemantics}, this means that $\firstStateSymbol \models \rae{\alpha}{\formulaSymbol} \land \rae{\beta}{\formulaSymbol}$.

Conversely, if $\firstStateSymbol \models \rae{\alpha}{\formulaSymbol} \land \rae{\beta}{\formulaSymbol}$, then by \rref{def:stateformulaSemantics}, any trace $\tracesymbol$ of $\alpha$ with $\text{first } \tracesymbol = \firstStateSymbol$ satisfies $\tae{\tracesymbol}{\formulaSymbol}$ and any trace $\tracesymbol$ of $\beta$ with $\text{first } \tracesymbol = \firstStateSymbol$ satisfies $\tae{\tracesymbol}{\formulaSymbol}$. Thus we have that any trace $\tracesymbol \in \tau(\alpha) \cup \tau(\beta)$ with $\text{first } \tracesymbol = \firstStateSymbol$ satisfies $\tae{\tracesymbol}{\formulaSymbol}$. Again using that $\tau(\alpha) \cup \tau(\beta) = \tau(\alpha \cup \beta)$, this means that any trace $\tracesymbol$ of $\alpha \cup \beta$ with $\text{first } \tracesymbol = \firstStateSymbol$ satisfies $\tae{\tracesymbol}{\formulaSymbol}$. Thus $\firstStateSymbol \models \rae{\alpha \cup \beta}{\formulaSymbol}$.
 \end{proof}
 
\subsection{Soundness of $\text{G}_{\text{tae}}$}
\begin{proof} Since $\formulaSymbol$ is valid by assumption and $\formulaSymbol \to \cl{\formulaSymbol}$ is valid by \rref{prop:clProp2}, $\cl{\formulaSymbol}$ is valid. 

Now, take any trace $\tracesymbol = (\tracesymbol_0, \tracesymbol_1, \dots, \tracesymbol_n)$ of $\alpha$. Because $\cl{\formulaSymbol}$ is valid, for any $\tracesymbol_j$ with $r_j = 0$ and $\tracesymbol_j(0) \neq \Lambda$, $\tracesymbol_j(0) \models \cl{\formulaSymbol}$. Next, let $\mathcal{U}$ be the set of positions $(i, \zeta)$ with $\tracesymbol_i(\zeta) \neq \Lambda$ and $\tracesymbol_i(\zeta) \not \models \formulaSymbol$. Because $\formulaSymbol$ is valid by assumption, $\mathcal{U} = \emptyset$, and thus if $f$ is the mapping defined in \rref{def:taef} for $\tracesymbol$, $f(\mathcal{U}) = f(\emptyset)$ has measure zero. Thus $\tae{\tracesymbol}{\formulaSymbol}$, and so $\rae{\alpha}{\formulaSymbol}$ is valid as $\tracesymbol$ was arbitrary.
 \end{proof}
 
 \subsection{Soundness of $K_{\text{tae}}$}
\begin{proof} Assume $\rae{\alpha}(\formulaSymbol \to \secondFormulaSymbol)$ is valid and $\cl{\formulaSymbol} \to \cl{\secondFormulaSymbol}$ is valid. We need to show that in any state $\firstStateSymbol$ where $\rae{\alpha}{\formulaSymbol}$ is true, $\rae{\alpha}{\secondFormulaSymbol}$ is true. So, take a state $\firstStateSymbol$ where $\firstStateSymbol \models \rae{\alpha}{\formulaSymbol}$ and consider a trace $\tracesymbol = (\tracesymbol_0, \tracesymbol_1, \dots, \tracesymbol_n)$ of $\alpha$ where $\text{first } \tracesymbol = \firstStateSymbol$. By Definition 4, $\tae{\tracesymbol}{\formulaSymbol}$. Now, for any $i$ where $|\tracesymbol_i| = 0$ and $\tracesymbol_i(0) \neq \Lambda$, it holds that $\tracesymbol_i(0) \models \cl{\formulaSymbol}$, and thus the premise that $\cl{\formulaSymbol} \to \cl{\secondFormulaSymbol}$ is valid implies that $\tracesymbol_i(0) \models \cl{\secondFormulaSymbol}$ is true.

Now, let $f$ be the mapping as in \rref{def:taef} for $\tracesymbol$. Next, consider the set of positions 
$$\mathcal{U}_1 = \{(i, \zeta)\ |\ \tracesymbol_i(\zeta) \not \models \formulaSymbol \text{ and }  \tracesymbol_i(\zeta) \neq \Lambda\}.$$
 Since by assumption $\firstStateSymbol \models \rae{\alpha}{\formulaSymbol}$ holds, $f(\mathcal{U}_1)$ has Lebesgue measure zero by \rref{def:traceformulasemantics} and \rref{def:stateformulaSemantics}. Also consider the set of positions 
 $$\mathcal{U}_2 = \{(i, \zeta)\ |\ \tracesymbol_i(\zeta) \not \models (\formulaSymbol \to \secondFormulaSymbol) \text{ and }  \tracesymbol_i(\zeta) \neq \Lambda \}.$$
Since $\rae{\alpha}{(\formulaSymbol \to \secondFormulaSymbol)}$ is valid by assumption, $f(\mathcal{U}_2)$ has Lebesgue measure zero by \rref{def:traceformulasemantics} and \rref{def:stateformulaSemantics}. 

Let $\mathcal{U}$ be the following set of positions:
$$\mathcal{U} = \{(i, \zeta)\ |\ \tracesymbol_i(\zeta) \not \models \secondFormulaSymbol \text{ and } \tracesymbol_i(\zeta)\neq \Lambda \}.$$

Now, $\mathcal{U} = \widetilde{\mathcal{U}}_1 \cup \widetilde{\mathcal{U}}_2$, where $\widetilde{\mathcal{U}}_1 = \{(i, \zeta)\ |\  \tracesymbol_i(\zeta)\neq \Lambda \text{ and }\tracesymbol_i(\zeta) \not \models \secondFormulaSymbol \text{ and } \tracesymbol_i(\zeta) \not \models \formulaSymbol\}$ and $ \widetilde{\mathcal{U}}_2 = \{(i,\zeta)\ |\  \tracesymbol_i(\zeta)\neq \Lambda \text{ and } \tracesymbol_i(\zeta) \not \models \secondFormulaSymbol \text{ and } \tracesymbol_i(\zeta) \models \formulaSymbol\}$. Notice that $ \widetilde{\mathcal{U}}_2  = \mathcal{U}_2$ since $\lnot(\formulaSymbol \to \secondFormulaSymbol)$ is logically equivalent to $\formulaSymbol \land \lnot \secondFormulaSymbol$, and also notice that $\widetilde{\mathcal{U}}_1 \subseteq \mathcal{U}_1$. By completeness of Lebesgue measure, $f(\widetilde{\mathcal{U}}_1)$ and $f(\widetilde{\mathcal{U}}_2)$ both have measure zero. Thus $f(\widetilde{\mathcal{U}}_1) \cup f(\widetilde{\mathcal{U}}_2)  = f(\widetilde{\mathcal{U}}_1 \cup \widetilde{\mathcal{U}}_2)$ has measure zero. \end{proof}

\subsection{Soundness of $[:=]_{\text{tae}}$}
\begin{proof} Fix a state $\firstStateSymbol$, and take any trace $\tracesymbol$ of $x := e$ with $\text{first } \tracesymbol = \firstStateSymbol$.
By \rref{def:dTLTraceSemantics}, $\tracesymbol$ has the form $(\hat{\firstStateSymbol}, \hat{\secondStateSymbol}),$ where $\secondStateSymbol = \updateState{\firstStateSymbol}{x}{val(\firstStateSymbol, e)}$.
By \rref{def:stateformulaSemantics}, $\firstStateSymbol \models \rae{x := e}{\formulaSymbol}$ iff $\tae{\tracesymbol}{\formulaSymbol}$.

Again using \rref{def:dTLTraceSemantics}, $\hat{\firstStateSymbol}$ and $\hat{\secondStateSymbol}$ have duration 0 and satisfy $\hat{\firstStateSymbol}(0) = \firstStateSymbol$ and $\hat{\secondStateSymbol}(0) = \secondStateSymbol$, and so by \rref{def:traceformulasemantics}, $\tae{\tracesymbol}{\formulaSymbol}$ iff both $\hat{\firstStateSymbol}(0) \models \cl{\formulaSymbol}$ and $\hat{\secondStateSymbol}(0) \models \cl{\formulaSymbol}$, i.e. iff both $\firstStateSymbol \models \cl{\formulaSymbol}$ and $\secondStateSymbol \models \cl{\formulaSymbol}$ hold. 
Now, $\secondStateSymbol \models \cl{\formulaSymbol}$ iff $\firstStateSymbol \models [x := e] \cl{\formulaSymbol}$, because $\secondStateSymbol$ is the unique successor of $\firstStateSymbol$ under $x:= e$.
Thus for any state $\firstStateSymbol$, $\firstStateSymbol \models  \rae{x := e}{\formulaSymbol}$ iff $\firstStateSymbol \models \cl{\formulaSymbol} \land [x := e] \cl{\formulaSymbol}$. 
 \end{proof}
 
 \subsection{Soundness of $[;]_{\text{tae}}$}
\begin{proof} Fix a state $\firstStateSymbol$ and assume $\firstStateSymbol \models \rae{\alpha}{\formulaSymbol}\ \land\ [\alpha] \rae{\beta}{\formulaSymbol}$. Take a trace $\tracesymbol$ of $\alpha; \beta$ where $\text{first } \tracesymbol = \firstStateSymbol$.  By \rref{def:dTLTraceSemantics}, $\tracesymbol = \firstSubtraceSymbol \circ \secondSubtraceSymbol$, where $\firstSubtraceSymbol$ is a trace of $\alpha$ with $\text{first } \firstSubtraceSymbol = \firstStateSymbol$ and $\secondSubtraceSymbol$ is a trace of $\beta$.  If $\firstSubtraceSymbol$ does not terminate, then $\firstSubtraceSymbol \circ \secondSubtraceSymbol = \firstSubtraceSymbol$ by \rref{def:dTLTraceSemantics}.  By assumption, since $\firstStateSymbol \models \rae{\alpha}{\formulaSymbol}$, $\tae{\firstSubtraceSymbol}{\formulaSymbol}$, and thus $\tae{\firstSubtraceSymbol \circ \secondSubtraceSymbol}{\formulaSymbol}$ in that case.  Else, if $\firstSubtraceSymbol$ terminates, then $\text{last } \firstSubtraceSymbol = \text{first }\secondSubtraceSymbol$. From our assumption that $\firstStateSymbol \models \rae{\alpha}{\formulaSymbol}$, we have that $\tae{\firstSubtraceSymbol}{\formulaSymbol}$, since $\firstSubtraceSymbol \in \tau(\alpha)$. Next, by our assumption that $\firstStateSymbol \models [\alpha] \rae{\beta}{\formulaSymbol}$, we have that $\text{last } \firstSubtraceSymbol \models \rae{\beta}{\formulaSymbol}.$ Then because $\text{last } \firstSubtraceSymbol = \text{first } \secondSubtraceSymbol$, we have that $\tae{\secondSubtraceSymbol}{\formulaSymbol}$. Therefore we have that both $\tae{\firstSubtraceSymbol}{\formulaSymbol}$ and $\tae{\secondSubtraceSymbol}{\formulaSymbol}$, and by \rref{lem:taetracecomp}, we obtain $\tae{\firstSubtraceSymbol \circ \secondSubtraceSymbol}{\formulaSymbol}$. Since $\firstSubtraceSymbol \circ \secondSubtraceSymbol = \tracesymbol$ was an arbitrary trace of $\alpha; \beta$ with $\text{first } \tracesymbol = \firstStateSymbol$, it follows that $\firstStateSymbol \models \rae{\alpha; \beta}{\formulaSymbol}$, as desired.

Conversely, fix a state $\firstStateSymbol$ with $\firstStateSymbol \models \rae{\alpha; \beta}{\formulaSymbol}.$   If $\firstStateSymbol \not \models \rae{\alpha}{\formulaSymbol}$, then there is some trace $\firstSubtraceSymbol$ of $\alpha$ with $\text{first } \firstSubtraceSymbol = \firstStateSymbol$ and $\ntae{\firstSubtraceSymbol}{\formulaSymbol}$.  Since $\firstSubtraceSymbol$ is a prefix of a trace of $\alpha; \beta$ by \rref{def:dTLTraceSemantics} \cite{DBLP:conf/cade/JeanninP14}, this contradicts our assumption that $\firstStateSymbol \models \rae{\alpha; \beta}{\formulaSymbol}$. Thus $\firstStateSymbol \models \rae{\alpha}{\formulaSymbol}$.  

Also, if $\firstStateSymbol \not \models [\alpha]\rae{\beta}{\formulaSymbol}$, then by \rref{def:stateformulaSemantics} there is some terminating trace $\firstSubtraceSymbol$ of $\alpha$ with $\text{first } \firstSubtraceSymbol  = \firstStateSymbol$ where $\val{\firstSubtraceSymbol}{\rae{\beta}{\formulaSymbol}}$ is false. Now, by \rref{def:traceformulasemantics}, that means that $\val{\text{last } \firstSubtraceSymbol}{\rae{\beta}{\formulaSymbol}}$ is false. By \rref{def:stateformulaSemantics}, that means that there is a trace $\secondSubtraceSymbol$ of $\beta$ with $\text{first }\secondSubtraceSymbol = \text{last }\firstSubtraceSymbol$ with $\ntae{\secondSubtraceSymbol}{\formulaSymbol}$. Then by \rref{def:dTLTraceSemantics}, $\firstSubtraceSymbol \circ \secondSubtraceSymbol \in \tau(\alpha; \beta)$. Since $\ntae{\secondSubtraceSymbol}{\formulaSymbol}$, by \rref{lem:taetracecomp} we have $\ntae{\firstSubtraceSymbol \circ \secondSubtraceSymbol}{\formulaSymbol}$, and this contradicts our assumption that $\firstStateSymbol \models \rae{\alpha; \beta}{\formulaSymbol}$. Thus $\firstStateSymbol \models \rae{\beta}{\formulaSymbol}$, and in particular, $\firstStateSymbol \models \rae{\alpha}{\formulaSymbol} \land \rae{\beta}{\formulaSymbol}$, as desired.
 \end{proof}

\subsection{Soundness of $\text{I}_{\text{tae}}$}
\begin{proof} Take a state $\firstStateSymbol$ with $\firstStateSymbol \models \cl{\formulaSymbol} \land [\alpha^*](\mrae{\cl{\formulaSymbol}}{\alpha}{\formulaSymbol})$. We proceed by induction on the number of iterations of $\alpha$. Because $\firstStateSymbol \models \cl{\formulaSymbol}$, $\firstStateSymbol \models \rae{(?true)}{\formulaSymbol}$, and so $\firstStateSymbol \models \rae{\alpha^0}{\formulaSymbol}$. Now, assume that $\firstStateSymbol \models \rae{\alpha^n}{\formulaSymbol}$. Take any trace $\tracesymbol$ of $\alpha^{n+1}$ with $\text{first }\tracesymbol =\firstStateSymbol$. By \rref{def:dTLTraceSemantics}, $\tracesymbol$ is of the form $\firstSubtraceSymbol \circ \secondSubtraceSymbol$ where $\firstSubtraceSymbol$ is a trace of $\alpha^n$ where $\text{first } \firstSubtraceSymbol = \firstStateSymbol$ and $\secondSubtraceSymbol$ is a trace of $\alpha$.

If $\firstSubtraceSymbol$ does not terminate, then since by assumption $\firstStateSymbol \models \rae{\alpha^n}{\formulaSymbol}$, we have that $\tae{\firstSubtraceSymbol}{\formulaSymbol}$ and thus since $\firstSubtraceSymbol \circ \secondSubtraceSymbol = \firstSubtraceSymbol$ by \rref{def:dTLTraceSemantics}, we have $\tae{\firstSubtraceSymbol \circ \secondSubtraceSymbol}{\formulaSymbol}$.

Else, if $\firstSubtraceSymbol$ terminates, then by \rref{def:dTLTraceSemantics}, $\text{last }\firstSubtraceSymbol = \text{first }\secondSubtraceSymbol$. From our assumption that $\firstStateSymbol \models \rae{\alpha^n}{\formulaSymbol}$ and \rref{lem:laststatelemma}, we have that $\text{last } \firstSubtraceSymbol \models \cl{\formulaSymbol}$, and so $\text{first }\secondSubtraceSymbol \models \cl{\formulaSymbol}$.  Now, because $\firstStateSymbol \models [\alpha^*](\mrae{\cl{\formulaSymbol}}{\alpha}{\formulaSymbol})$, in particular $\firstStateSymbol \models [\alpha^n](\mrae{\cl{\formulaSymbol}}{\alpha}{\formulaSymbol})$. Thus, any trace $\Upsilon$ of $\alpha$ that starts in a state $\gamma$ where $\gamma$ is a final state of $\alpha^n$ and $\gamma \models \cl{\formulaSymbol}$ satisfies $\tae{\Upsilon}{\formulaSymbol}$. Thus in particular, $\tae{\secondSubtraceSymbol}{\formulaSymbol}$. By our inductive hypothesis that $\firstStateSymbol \models \rae{\alpha^n}{\formulaSymbol}$, we also have $\tae{\firstSubtraceSymbol}{\formulaSymbol}$, and thus by \rref{lem:taetracecomp}, $\tae{\firstSubtraceSymbol \circ \secondSubtraceSymbol}{\formulaSymbol}$. 

In both cases, $\tae{\firstSubtraceSymbol \circ \secondSubtraceSymbol}{\formulaSymbol}$. Since $\tracesymbol = \firstSubtraceSymbol \circ \secondSubtraceSymbol$ was an arbitrary trace of $\alpha^{n+1}$, it follows that $\firstStateSymbol \models \rae{\alpha^{n+1}}{\formulaSymbol}$. By induction, $\firstStateSymbol \models \rae{\alpha^*}{\formulaSymbol}$, as desired.

Conversely, take a state $\firstStateSymbol$ with $\firstStateSymbol \models \rae{\alpha^*}{\formulaSymbol}$. Then $\firstStateSymbol \models \rae{\alpha^0}{\formulaSymbol}$ by assumption. Now, $\tau(\alpha^0) = \tau(?true)$ by \rref{def:dTLTraceSemantics}, and so each trace $\tracesymbol$ in $\tau(\alpha^0)$ with $\text{first } \tracesymbol = \firstStateSymbol$ is of the form $(\tracesymbol_0)$ where $|\tracesymbol_0| = 0$, and so $\text{first } \tracesymbol = \text{last } \tracesymbol = \firstStateSymbol$. Using \rref{lem:laststatelemma}, $\text{last }\tracesymbol \models \cl{\formulaSymbol}$, and so $\firstStateSymbol \models \cl{\formulaSymbol}$.

Next, we wish to show that $\firstStateSymbol \models [\alpha^*](\mrae{\cl{\formulaSymbol}}{\alpha}{\formulaSymbol})$. This is equivalent to showing that $\firstStateSymbol \models [\alpha^n](\mrae{\cl{\formulaSymbol}}{\alpha}{\formulaSymbol})$ for each $n{\geq}0$.
By \rref{def:stateformulaSemantics}, we must show that for all terminating traces $\tracesymbol$ of $\alpha^n$ with $\text{first } \tracesymbol = \firstStateSymbol$, $\text{last } \tracesymbol \models \mrae{\cl{\formulaSymbol}}{\alpha}{\formulaSymbol}$.
Because $\firstStateSymbol \models \rae{\alpha^*}{\formulaSymbol}$ by assumption, we know $\firstStateSymbol \models \rae{\alpha^{n+1}}{\formulaSymbol}$.
Now, for any terminating trace $\tracesymbol$ of $\alpha^n$ with $\text{first } \tracesymbol = \firstStateSymbol$, if $\text{last }\tracesymbol \not \models \rae{\alpha}{\formulaSymbol}$, then there is some trace $\firstSubtraceSymbol$ of $\alpha$ with $\text{first } \firstSubtraceSymbol = \text{last } \tracesymbol$ and $\ntae{\firstSubtraceSymbol}{\formulaSymbol}$, and thus $\ntae{\tracesymbol \circ \firstSubtraceSymbol}{\formulaSymbol}$ by \rref{lem:taetracecomp}.
But because $\tracesymbol \circ \firstSubtraceSymbol$ is a trace of $\alpha^{n+1}$ with $\text{first }\tracesymbol \circ \firstSubtraceSymbol = \firstStateSymbol$, this contradicts that $\firstStateSymbol \models \rae{\alpha^{n+1}}{\formulaSymbol}$.

Therefore, for any terminating trace $\tracesymbol$ of $\alpha^n$ with $\text{first } \tracesymbol = \firstStateSymbol$, $\text{last }\tracesymbol \models \rae{\alpha}{\formulaSymbol}$. Thus $\text{last }\tracesymbol \models (\cl{\formulaSymbol} \to \rae{\alpha}{\formulaSymbol})$, and so $\firstStateSymbol \models [\alpha^n](\mrae{\cl{\formulaSymbol}}{\alpha}{\formulaSymbol})$, as desired.
 \end{proof}

\subsection{Proof of \rref{prop:definable}}\label{app:propositionDefinable}
\begin{proof}
Case: $\FOLformulaSymbol$ is $e = 0$. Define $Q = g(\FOLformulaSymbol)$ to be $e = 0$. 
\begin{enumerate}
    \item If for some $k{\geq}0$, $\updateState{\firstStateSymbol}{t}{k} \not \models \secondFOLformulaSymbol$, then $\updateState{\firstStateSymbol}{t}{k} \not \models e = 0$, so $\updateState{\firstStateSymbol}{t}{k} \models e {>} 0$ or $\updateState{\firstStateSymbol}{t}{k} \models e{<}0$. Either way, since $e$ is a polynomial (and thus continuous in $t$), there exists $\ell {>} k$ so that $\updateState{\firstStateSymbol}{t}{q} \not \models e = 0$ holds for all $q \in [k, \ell)$. Further, if $k {>} 0$, then there exist $\ell_1, \ell_2$ with $\ell_1{<}k{<}\ell_2$ so that $\updateState{\firstStateSymbol}{t}{q} \not \models e = 0$ holds for all $q \in (\ell_1, \ell_2)$.
    \item Since $\secondFOLformulaSymbol$ and $\FOLformulaSymbol$ are the same formula, there is no $k{\geq}0$ where $\updateState{\firstStateSymbol}{t}{k} \models \secondFOLformulaSymbol \land \lnot \FOLformulaSymbol$.
\end{enumerate}
Case: $\FOLformulaSymbol$ is $e{\geq}0$. Define $\secondFOLformulaSymbol = g(\FOLformulaSymbol)$ to be $ e{\geq}0$.
\begin{enumerate}
    \item If for some $k{\geq}0$, $\updateState{\firstStateSymbol}{t}{q} \not \models \secondFOLformulaSymbol$, then $\updateState{\firstStateSymbol}{t}{q} \models e{<}0$. But since $e$ is a polynomial, $e$ is continuous in $t$, and so there exists $\ell {>} k$ such that $\updateState{\firstStateSymbol}{t}{q} \not \models e{\geq}0$ holds for all $q \in [k, \ell)$. Further, if $k {>} 0$, then there exist $\ell_1, \ell_2$ with $\ell_1{<}k{<}\ell_2$ so that $\updateState{\firstStateSymbol}{t}{q} \not \models e{\geq}0$ holds for all $q \in (\ell_1, \ell_2)$.
    \item Since $\secondFOLformulaSymbol$ and $\FOLformulaSymbol$ are the same formula, there is no $k{\geq}0$ where $\updateState{\firstStateSymbol}{t}{k} \models \secondFOLformulaSymbol \land \lnot \FOLformulaSymbol$.
\end{enumerate}
Case: $\FOLformulaSymbol$ is $e{<}0$. We may regard $e$ as a polynomial in $t$---i.e., there is some polynomial $a_nt^n + \cdots + a_1t + a_0$, where $a_n, \dots, a_0$ are FOL terms possibly involving variables other than $t$, such that $\firstStateSymbol \models e = a_nt^n + \cdots + a_1t + a_0$ for all states $\firstStateSymbol$.  Now, let $\secondFOLformulaSymbol = g(\FOLformulaSymbol)$ be defined as follows:
\begin{equation}
 \secondFOLformulaSymbol \equiv e{\leq}0 \land ((a_n = 0 \land \cdots \land a_{1} = 0) \to e{<}0).
\end{equation}
\begin{enumerate}
    \item If for some $k{\geq}0$, $\updateState{\firstStateSymbol}{t}{k} \not \models \secondFOLformulaSymbol$, then either $\updateState{\firstStateSymbol}{t}{k} \models (a_n = 0 \land \cdots \land a_{1} = 0)$ and $\updateState{\firstStateSymbol}{t}{k} \not \models e{<}0$ or $\updateState{\firstStateSymbol}{t}{k} \not \models e{\leq}0$. In the first case, since $a_1, \dots, a_n$ do not depend on $t$, $\updateState{\firstStateSymbol}{t}{q} \models (a_n = 0 \land \cdots \land a_1 = 0)$ holds for all $q{\geq}0$. This means that $\updateState{\firstStateSymbol}{t}{q} \models e = a_0$ holds for all $q{\geq}0$. Because $a_0$ does not depend on $t$, if $\updateState{\firstStateSymbol}{t}{k} \not \models e{<}0$, then $\updateState{\firstStateSymbol}{t}{k} \not \models a_0{<}0$, which implies that $\updateState{\firstStateSymbol}{t}{q} \not \models a_0{<}0$ holds for all $q{\geq}0$. It follows that $\updateState{\firstStateSymbol}{t}{q} \not \models \FOLformulaSymbol$ holds for all $q{\geq}0$.
    
    In the second case, since $\updateState{\firstStateSymbol}{t}{k} \not \models e{\leq}0$, $\updateState{\firstStateSymbol}{t}{k} \models e {>} 0$, and since $e$ is a polynomial, $e$ is continuous in $t$. From this continuity, we have that there is a half-open interval $[k, \ell)$ so that for all $q \in [k, \ell)$, $\updateState{\firstStateSymbol}{t}{q} \models e {>} 0$. Thus we see that for all $q \in [k, \ell)$, $\updateState{\firstStateSymbol}{t}{q} \not \models \FOLformulaSymbol$ holds. Further, if $k {>} 0$, then there is an interval $(\ell_1, \ell_2)$ with $\ell_1{<}k{<}\ell_2$ such that for all $q \in (\ell_1, \ell_2)$, $\updateState{\firstStateSymbol}{t}{q} \not \models \FOLformulaSymbol$ holds.
    \item If $\updateState{\firstStateSymbol}{t}{k} \models \secondFOLformulaSymbol \land \lnot \FOLformulaSymbol$, then we must have that $\updateState{\firstStateSymbol}{t}{k} \models e{\leq}0$ and also $\updateState{\firstStateSymbol}{t}{k} \not \models e{<}0$. 
    From these, it must hold that $\updateState{\firstStateSymbol}{t}{k} \models e = 0$.
    Also, we need $\updateState{\firstStateSymbol}{t}{k} \models (a_n = 0\land \cdots \land a_1 = 0) \to e{<}0$, so it must hold that $\updateState{\firstStateSymbol}{t}{k} \not \models (a_n = 0 \land \cdots \land a_1 = 0)$.
    From this, $\updateState{\firstStateSymbol}{t}{k} \models a_d \neq 0 \land e = a_dt^d + \cdots + a_0$ for some $1{\leq}d{\leq}n$, and so in order for $\updateState{\firstStateSymbol}{t}{k} \models a_d \neq 0 \land a_dt^d + \cdots + a_0 = 0$ to be true, $k$ must be one of the roots of $\evalInState{\omega}{a_d}t^d + \cdots + \evalInState{\omega}{a_0}$.
    Thus there are at most $d$ (and so only finitely many) values of $k$ where $\updateState{\firstStateSymbol}{t}{k} \models \secondFOLformulaSymbol \land \lnot \FOLformulaSymbol$.
\end{enumerate}
Case: $\FOLformulaSymbol$ is $\FOLformulaSymbol_1 \land \FOLformulaSymbol_2$.
Define $\secondFOLformulaSymbol = g(\FOLformulaSymbol)$ to be $g(\FOLformulaSymbol_1) \land g(\FOLformulaSymbol_2)$, and denote $g(\FOLformulaSymbol_1) = \secondFOLformulaSymbol_1$, $g(\FOLformulaSymbol_2) = \secondFOLformulaSymbol_2$.
\begin{enumerate}
    \item If for some $k{\geq}0$, $\updateState{\firstStateSymbol}{t}{k} \not \models \secondFOLformulaSymbol$, then $\updateState{\firstStateSymbol}{t}{k} \not \models \secondFOLformulaSymbol_1$ or $\updateState{\firstStateSymbol}{t}{k} \not \models \secondFOLformulaSymbol_2$.
    In the first case, by the induction hypothesis there is an interval $[k, \hat{\ell})$ with $\hat{\ell} {>} k$ so that for any $q \in [k, \hat{\ell})$, $\updateState{\firstStateSymbol}{t}{q} \not \models \FOLformulaSymbol_1$.
    Further, if $k{>}0$, then by the induction hypothesis there is an interval $(\hat{\ell}_1, \hat{\ell}_2)$ with $\hat{\ell}_1{<}k{<}\hat{\ell}_2$ and for any $q \in (\hat{\ell}_1, \hat{\ell}_2)$, $\updateState{\firstStateSymbol}{t}{q} \not \models \FOLformulaSymbol_1$.
    Because $\FOLformulaSymbol$ is $\FOLformulaSymbol_1 \land \FOLformulaSymbol_2$, this means that for any $q \in [k, \hat{\ell})$, $\updateState{\firstStateSymbol}{t}{q} \not \models \FOLformulaSymbol$ holds; and if $k{>}0$, then for any $q \in (\hat{\ell}_1, \hat{\ell}_2)$, $\updateState{\firstStateSymbol}{t}{q} \not \models \FOLformulaSymbol$.
    
    Similarly, in the second case, by the induction hypothesis there is an interval $[k, \widetilde{\ell})$ with $\widetilde{\ell} {>} k$ so that for any $q \in [k, \widetilde{\ell})$, $\updateState{\firstStateSymbol}{t}{q} \not \models \FOLformulaSymbol_2$.
    Further, if $k{>}0$ then by the induction hypothesis there is an interval $(\widetilde{\ell}_1, \widetilde{\ell}_2)$ with $\widetilde{\ell}_1{<}k{<}\widetilde{\ell}_2$ so that $\updateState{\firstStateSymbol}{t}{q} \not \models \FOLformulaSymbol_2$ for any $q \in (\widetilde{\ell}_1, \widetilde{\ell}_2)$.
    Because $\FOLformulaSymbol$ is $\FOLformulaSymbol_1 \land \FOLformulaSymbol_2$, this means that for any $q \in [k, \widetilde{\ell})$, $\updateState{\firstStateSymbol}{t}{q} \not \models \FOLformulaSymbol$ holds, and if $k{>}0$ then $\updateState{\firstStateSymbol}{t}{q} \not \models \FOLformulaSymbol$ for any $q \in (\widetilde{\ell}_1, \widetilde{\ell}_2)$.
    \item By the induction hypothesis, $\updateState{\firstStateSymbol}{t}{k} \models \secondFOLformulaSymbol_1 \land \lnot \FOLformulaSymbol_1$ only for finitely many $k$, say for $k \in \{u_1, \dots, u_r\}$, and $\updateState{\firstStateSymbol}{t}{k} \models \secondFOLformulaSymbol_2 \land \lnot \FOLformulaSymbol_2$ only for finitely many $k$, say for $k \in \{w_1, \dots, w_s\}$.
    Now, if we have $\updateState{\firstStateSymbol}{t}{k} \models \secondFOLformulaSymbol \land \lnot (\FOLformulaSymbol)$, then we have $\updateState{\firstStateSymbol}{t}{k} \models (\secondFOLformulaSymbol_1 \land \secondFOLformulaSymbol_2) \land \lnot (\FOLformulaSymbol_1 \land \FOLformulaSymbol_2)$, or $\updateState{\firstStateSymbol}{t}{k} \models (\secondFOLformulaSymbol_1 \land \secondFOLformulaSymbol_2) \land (\lnot \FOLformulaSymbol_1 \lor \lnot \FOLformulaSymbol_2)$, which can only happen for $k \in \{u_1, \dots, u_r, w_1, \dots, w_s\}$, i.e. for finitely many $k$.
\end{enumerate}
Case: $\FOLformulaSymbol$ is $\FOLformulaSymbol_1 \lor \FOLformulaSymbol_2$.
Define $\secondFOLformulaSymbol = g(\FOLformulaSymbol)$ to be $\secondFOLformulaSymbol_1 \lor \secondFOLformulaSymbol_2$, and denote $g(\FOLformulaSymbol_1) = \secondFOLformulaSymbol_1$, $g(\FOLformulaSymbol_2) = \secondFOLformulaSymbol_2$.
\begin{enumerate} 
    \item If for some $k{\geq}0$, $\updateState{\firstStateSymbol}{t}{k} \not \models \secondFOLformulaSymbol$, then $\updateState{\firstStateSymbol}{t}{k} \not \models \secondFOLformulaSymbol_1$ and $\updateState{\firstStateSymbol}{t}{k} \not \models \secondFOLformulaSymbol_2$.
    By the induction hypothesis, then, there are intervals $[k, \hat{\ell})$ and $[k, \widetilde{\ell})$ with $\hat{\ell} {>} k$, $\widetilde{\ell} {>} k$ so that for any $q \in [k, \hat{\ell})$, $\updateState{\firstStateSymbol}{t}{q} \not \models \FOLformulaSymbol_1$ and for any $q \in [k, \widetilde{\ell})$, $\updateState{\firstStateSymbol}{t}{q} \not \models \FOLformulaSymbol_2$. Letting $\ell = \text{min}\{\hat{\ell}, \widetilde{\ell}\}$, we see that $\ell {>} k$ and $\updateState{\firstStateSymbol}{t}{q} \not \models \FOLformulaSymbol_1 \lor \FOLformulaSymbol_2$ holds for all $q \in [k, \ell)$. 
    
    Further, if $k > 0$, then by the induction hypothesis there are intervals $(\hat{\ell}_1, \hat{\ell}_2$ and $(\widetilde{\ell}_1, \widetilde{\ell}_2$ with $\hat{\ell}_1{<}k{<}\hat{\ell}_2)$ and  $\widetilde{\ell}_1{<}k{<}\widetilde{\ell}_2)$ so that $\updateState{\firstStateSymbol}{t}{q} \not \models \FOLformulaSymbol_1$ for any $q \in (\hat{\ell}_1, \hat{\ell}_2)$ and $\updateState{\firstStateSymbol}{t}{q} \not \models \FOLformulaSymbol_2$ for any $q \in (\widetilde{\ell}_1, \widetilde{\ell}_2)$. Letting $\ell_1 = \text{max}\{\hat{\ell}_1, \widetilde{\ell}_1\}$ and $\ell_2 = \text{min}\{\hat{\ell}_2, \widetilde{\ell}_2\}$, we see that $\ell_1{<}k{<}\ell_2$ and $\updateState{\firstStateSymbol}{t}{q} \not \models \FOLformulaSymbol_1 \lor \FOLformulaSymbol_2$ holds for all $q \in (\ell_1, \ell_2)$. 
    \item By the induction hypothesis, $\updateState{\firstStateSymbol}{t}{k} \models \secondFOLformulaSymbol_1 \land \lnot \FOLformulaSymbol_1$ only for finitely many values of $k$, say for $k \in \{u_1, \dots, u_r\}$ and $\updateState{\firstStateSymbol}{t}{k} \models \secondFOLformulaSymbol_2 \land \lnot \FOLformulaSymbol_2$ only for finitely many values of $k$, say for $k \in \{w_1, \dots, w_s\}$. Now, if we have $\updateState{\firstStateSymbol}{t}{k} \models \secondFOLformulaSymbol \land \lnot (\FOLformulaSymbol)$, then we have $\updateState{\firstStateSymbol}{t}{k} \models (\secondFOLformulaSymbol_1 \lor \secondFOLformulaSymbol_2) \land \lnot (\FOLformulaSymbol_1 \lor \FOLformulaSymbol_2)$, so $\updateState{\firstStateSymbol}{t}{k} \models (\secondFOLformulaSymbol_1 \lor \secondFOLformulaSymbol_2) \land (\lnot \FOLformulaSymbol_1 \land \lnot \FOLformulaSymbol_2)$, which can only happen for $k \in \{u_1, \dots, u_r, w_1, \dots, w_s\}$, i.e. for finitely many $k$.
\end{enumerate} \end{proof}

\subsection{Soundness of $[']_{\text{tae}}$}
\begin{proof}
Fix a start state $\firstStateSymbol$. By \rref{prop:definable}, the following conditions hold:
\begin{enumerate}
    \item Locally false: If $\updateState{\firstStateSymbol}{t}{k} \not \models \secondFOLformulaSymbol$ for some $k{\geq}0$, then there is a nonempty interval $[k, \ell)$ so that for all $q \in [k, \ell)$, $\updateState{\firstStateSymbol}{t}{q} \not \models [x :=  y(t)] \FOLformulaSymbol$.
    \item Finite difference: There are only finitely many values $k{\geq}0$ where $\updateState{\firstStateSymbol}{t}{k} \models \secondFOLformulaSymbol \land \lnot [x := y(t)] \FOLformulaSymbol$.
\end{enumerate}

From \rref{def:dTLTraceSemantics}, each trace of $x' = f(x)$ that starts at state $\firstStateSymbol$ is of the form $(\varphi)$ where $\varphi$ is a Carath\'{e}odory solution of $x' = f(x)$ of some nonnegative duration $r$ with $\varphi(0) = \firstStateSymbol$.
For clarity, we write $\varphi_r$ to indicate that the solution is of duration $r{\geq}0$.

Recall that we are assuming that $x' = f(x)$ has a unique polynomial global solution $y(t)$, and so each $\varphi_r$ follows $y(t)$ for time $r$.
More precisely, we have the following semantic correspondence between $\varphi_r$ and the polynomial solution $y$, namely that $\varphi_r(k) \models \FOLformulaSymbol$ for some $k$ with $0{\leq}k{\leq}r$ iff $\updateState{\varphi_r(0)}{t}{k} \models [x := y(t)] \FOLformulaSymbol$, i.e. iff $\updateState{\firstStateSymbol}{t}{k}\models [x := y(t)] \FOLformulaSymbol$.

Now, $\firstStateSymbol \models \rae{x' = f(x)}{\FOLformulaSymbol}$ iff the two conditions of \rref{def:traceformulasemantics} hold for all $\varphi_r$.

\begin{enumerate}
\item Case ($r = 0$): The discrete condition of \rref{def:traceformulasemantics} demands $\varphi_0(0) \models \cl{\FOLformulaSymbol}$, and since $\varphi_0(0) = \firstStateSymbol$, this holds iff $\firstStateSymbol \models \cl{\FOLformulaSymbol}$.

\item Case ($r {>} 0$): For every $r {>} 0$, the set of positions where $\varphi_r(\zeta) \not \models P$, $0{\leq}\zeta{\leq}r$ must have Lebesgue measure zero.
Equivalently, this means that for almost all $0{\leq}k{\leq}r$, $\updateState{\varphi_r(0)}{t}{k}\models [x := y(t)] \FOLformulaSymbol$.
If $\firstStateSymbol \models \forall t{\geq}0~Q$, then by the finite difference condition, there are only finitely many $k{\geq}0$ so that $\updateState{\firstStateSymbol}{t}{k}\not\models [x := y(t)] \FOLformulaSymbol$, i.e., for all $r$, it holds that for almost all $0{\leq}k{\leq}r$, $\updateState{\varphi_r(0)}{t}{k}\models [x := y(t)] \FOLformulaSymbol$.

If, however,  $\firstStateSymbol \not\models \forall t{\geq}0~Q$, then there is some $k$ where $\updateState{\firstStateSymbol}{t}{k} \not \models Q$.
Then, by the locally false condition, there is an interval $[k, \ell)$ with $\ell {>} k$ so that for all $q \in [k, \ell), \firstStateSymbol \not \models [x := y(t)] \FOLformulaSymbol$.
Since $[k, \ell)$ has measure $\ell{-}k {>} 0$, for any $r {>} \ell$, it does not hold that the set of positions where $\varphi_r(\zeta) \not \models [x := y(t)] P$ has Lebesgue measure zero.
Thus, the continuous condition of \rref{def:traceformulasemantics} holds iff $\firstStateSymbol \models \forall t{\geq}0~\secondFOLformulaSymbol$.
\end{enumerate}

\noindent So, for any state $\firstStateSymbol$, $\firstStateSymbol \models \rae{x' = f(x)}{\FOLformulaSymbol}$ holds iff $\firstStateSymbol \models \cl{\FOLformulaSymbol} \land \forall t{\geq}0~\secondFOLformulaSymbol$.
Therefore $\rae{x' = f(x)}{\FOLformulaSymbol} \leftrightarrow \cl{\FOLformulaSymbol} \land \forall t{\geq}0 \secondFOLformulaSymbol$ is valid.
 \end{proof}

\subsection{Soundness of $['\&]_{\text{tae}}$}
\begin{proof} The proof is very similar to that of the solution axiom $[']_{\text{tae}}$ without evolution domain constraints. The key difference is that as per \rref{def:CaraDef}, our trace semantics requires solutions to satisfy the evolution domain constraint at all points in time.

Fix a start state $\firstStateSymbol$.
By \rref{def:dTLTraceSemantics}, if $\firstStateSymbol \not \models \evDomainConstraint$, then there is only one trace $\tracesymbol$ of $x'=f(x) \& \evDomainConstraint$ with $\text{first } \tracesymbol = \firstStateSymbol$, and it has the form $\tracesymbol = (\hat{\firstStateSymbol}, \Lambda)$.
In this case, to satisfy the discrete condition of \rref{def:traceformulasemantics}, we must have $\firstStateSymbol \models \cl{\FOLformulaSymbol}$.
Since $\firstStateSymbol \not \models \evDomainConstraint$, and since $y(0) = x$, $\firstStateSymbol \not \models [x := y(0)] \evDomainConstraint$.
So, we have that for each $t{\geq}0$, $\firstStateSymbol \not \models (\forall 0{\leq}s{\leq}t [x := y(s)] \evDomainConstraint)$, and therefore $\firstStateSymbol \models \left(\forall t {\geq}0 \left( \forall 0{\leq}s{\leq}t \ [x := y(s)]\evDomainConstraint \right) \to \secondFOLformulaSymbol \right)$ holds vacuously. 

Otherwise, if $\firstStateSymbol \models \evDomainConstraint$, then all of the traces of $x' = f(x)\& \evDomainConstraint$ are of the form $(\varphi)$ where $\varphi$ is a Carath\'{e}odory solution of $x' = f(x) \& \evDomainConstraint$ of some nonnegative duration $r$, with $\varphi(0) = \firstStateSymbol$.
For clarity, we write $\varphi_r$ to indicate that the solution has duration $r{\geq}0$.
By the semantics of ODEs, (\rref{def:CaraDef}), $\varphi_r(t) \models \evDomainConstraint$ for all $0{\leq}t{\leq}r$.

Recall that we are assuming that $x' = f(x)$ has a unique polynomial global solution $y(t)$, which leads to the following semantic correspondence between $\varphi_r$ and the polynomial solution $y$, namely that $\varphi_r(k) \models \FOLformulaSymbol$ for some $k$ with $0{\leq}k{\leq}r$ iff $\updateState{\varphi_r(0)}{t}{k} \models [x := y(t)] \FOLformulaSymbol$.

In particular, for each trace $\varphi_r$ of $x' = f(x) \& \evDomainConstraint$, since $\varphi_r(t) \models \evDomainConstraint$ for all $0{\leq}t{\leq}r$, we have that for all $0{\leq}s{\leq}r$, $\updateState{\omega}{t}{s} \models [x := y(t)] \evDomainConstraint$, and so $\omega \models \forall 0{\leq}s{\leq}r [x := y(s)] \evDomainConstraint.$
Also, if $\omega \models \forall 0{\leq}s{\leq}r [x := y(s)] \evDomainConstraint,$ then $\varphi_r$ is a trace of $x' = f(x) \& \evDomainConstraint$.
This is to say exists a trace of duration $r \geq 0$ of $x' = f(x) \& \evDomainConstraint$ that starts in $\firstStateSymbol$ iff $\omega \models (\forall 0{\leq}s{\leq}r [x := y(s)] \evDomainConstraint).$

Now, $\firstStateSymbol \models \rae{x' = f(x)}{\FOLformulaSymbol}$ iff the two conditions of \rref{def:traceformulasemantics} hold.

\begin{enumerate}
\item Case $(r = 0)$: Because we know $\omega \models R$, we know that there is a trace of $x' = f(x) \& \evDomainConstraint$ of duration $0$. Then in order to satisfy the discrete condition, $\varphi_0(0) \models \cl{\FOLformulaSymbol}$ must hold, and since $\varphi_0(0) = \firstStateSymbol$, this holds iff $\firstStateSymbol \models \cl{\FOLformulaSymbol}$.

\item Case $(r > 0)$: Second, as long as $r > 0$ is such that $\omega \models (\forall 0{\leq}s{\leq}r [x := y(s)] \evDomainConstraint),$ the ODE $x' = f(x) \& \evDomainConstraint$ has a trace of duration $r$, and then the set of positions where $\varphi_r(\zeta) \not \models P$, $0{\leq}\zeta{\leq}r$ must have Lebesgue measure zero. Fix such an $r$; call it $\hat{r}$.

Then we need that for almost all $0{\leq}k{\leq}\hat{r}$, $\updateState{\varphi_{\hat{r}}(0)}{t}{k}\models [x := y(t)] \FOLformulaSymbol$.
If for all $0{\leq}k{\leq}\hat{r}$, $\updateState{\firstStateSymbol}{t}{k} \models \secondFOLformulaSymbol$, then by the finite difference condition, there are only finitely many $0{\leq}q{\leq}\hat{r}$ where $\updateState{\firstStateSymbol}{t}{q}\not\models [x := y(t)] \FOLformulaSymbol$, so in particular it holds that for almost all $0{\leq}k{\leq}\hat{r}$, $\updateState{\varphi_{\hat{r}}(0)}{t}{k}\models [x := y(t)] \FOLformulaSymbol$.

If, however, for some $0{\leq}q{\leq}\hat{r}$, $\updateState{\firstStateSymbol}{t}{q} \not \models \secondFOLformulaSymbol$, then there are two cases: $q = 0$ or $q > 0$. If $q = 0$, then using the locally false condition, there is an interval $[q, \ell)$ with $q{<}\ell{<}\hat{r}$ so that for all $k \in [q, \ell), \updateState{\firstStateSymbol}{t}{k} \not \models [x := y(t)] \FOLformulaSymbol$, and because $[q, \ell)$ has measure $\ell - q > 0$, it does not hold that for almost all $0{\leq}k{\leq}\hat{r}$, $\updateState{\varphi_{\hat{r}}(0)}{t}{k}\models [x := y(t)] \FOLformulaSymbol$.

If instead $q > 0$, then by the locally false condition there is an interval $(\ell_1, \ell_2)$ with $\ell_1{<}q{<}\ell_2$ so that for all $k \in (\ell_1, \ell_2), \updateState{\firstStateSymbol}{t}{k} \not \models [x := y(t)] \FOLformulaSymbol$, and thus in particular for all $k \in (\ell_1, q]$, $ \updateState{\firstStateSymbol}{t}{k} \not \models [x := y(t)] \FOLformulaSymbol$. Because $(\ell_1, q]$ has measure $q - \ell_1 > 0$, it does not hold that for almost all $0{\leq}k{\leq}\hat{r}$, $\updateState{\varphi_{\hat{r}}(0)}{t}{k}\models [x := y(t)] \FOLformulaSymbol$.

Thus, to satisfy the continuous condition, for every $r$ where  $\omega \models (\forall 0{\leq}s{\leq}r [x := y(s)] \evDomainConstraint)$, we must have $\firstStateSymbol \models \forall 0{\leq}t{\leq}r\ Q$.

Equivalently, we can write this as $\omega \models \forall t {\geq}0\left( \left( \forall 0{\leq}s{\leq}t \ [x := y(s)]\evDomainConstraint \right) \to \secondFOLformulaSymbol \right)$.
\end{enumerate}

So, for any state $\firstStateSymbol$, $\firstStateSymbol \models \rae{x' = f(x)\& \evDomainConstraint}{\FOLformulaSymbol}$ holds iff $$\firstStateSymbol \models \cl{\FOLformulaSymbol} \land \forall t {>}0 \left(\left( \forall 0{\leq}s{\leq}t \ [x := y(s)]\evDomainConstraint \right) \to \secondFOLformulaSymbol \right).$$
Therefore $\rae{x' = f(x)}{\FOLformulaSymbol} \leftrightarrow \cl{\FOLformulaSymbol} \land \forall t {>}0\left( \left( \forall 0{\leq}s{\leq}t \ [x := y(s)]\evDomainConstraint \right) \to \secondFOLformulaSymbol \right)$ is valid.

 \end{proof}
 
\section{Soundness of Derived Rules}\label{app:DerivedSoundness}
\subsection{Inherited Structural Properties}
The following structural properties will be useful in our derivations.
\begin{proposition} \label{prop:structuralprop}
All proof rules of propositional sequent calculus are sound, including the following (where P, Q, and C are \PdTL formulas):
\begin{center}
\begin{minipage}[t]{0.45\textwidth}
%\[
\begin{prooftree}
\hypo{\Gamma \vdash P, \Delta}
\hypo{\Gamma, Q \vdash \Delta}
\infer2[$\to$L]{\Gamma, P \to Q \vdash \Delta}
\end{prooftree}
%\]
\end{minipage}
\begin{minipage}[t]{0.45\textwidth}
%\[
\begin{prooftree}
\hypo{\Gamma, P \vdash Q, \Delta}
\infer1[$\to$R]{\Gamma \vdash P \to Q, \Delta}
\end{prooftree}
%\]
\end{minipage}
\end{center}
\begin{center}
\begin{minipage}[t]{0.45\textwidth}
%\[
\begin{prooftree}
\hypo{\Gamma \vdash P, \Delta}
\hypo{\Gamma \vdash Q, \Delta}
\infer2[$\land$R]{\Gamma \vdash P \land Q, \Delta}
\end{prooftree}
%\]
\end{minipage}
\begin{minipage}[t]{0.45\textwidth}
%\[
\begin{prooftree}
\hypo{\Gamma \vdash C, \Delta}
\hypo{\Gamma, C \vdash \Delta}
\infer2[cut]{\Gamma \vdash \Delta}
\end{prooftree}
%\]
\end{minipage}
\end{center}
\begin{center}
\begin{minipage}[t]{0.45\textwidth}
\begin{prooftree}
\hypo{\Gamma \vdash \Delta}
\infer1[WR]{\Gamma \vdash P, \Delta}
\end{prooftree}
\end{minipage}
\begin{minipage}[t]{0.45\textwidth}
\begin{prooftree}
\hypo{\Gamma \vdash \Delta}
\infer1[WL]{\Gamma, P \vdash \Delta}
\end{prooftree}
\end{minipage}
\end{center}
\end{proposition}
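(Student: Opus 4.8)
The plan is to reduce the soundness of every rule to classical propositional reasoning carried out pointwise, state by state. First I would fix the semantic reading of a sequent: $\Gamma \vdash \Delta$ is \emph{valid} iff for every state $\firstStateSymbol$, whenever $\firstStateSymbol \models P$ for all $P \in \Gamma$ we also have $\firstStateSymbol \models Q$ for some $Q \in \Delta$; equivalently, $\firstStateSymbol \models \bigwedge_{P \in \Gamma} P \to \bigvee_{Q \in \Delta} Q$ in every state. The crucial preliminary observation is that, by \rref{def:stateformulaSemantics}, the valuation $\val{\firstStateSymbol}{\formulaSymbol}$ of a \emph{state} formula is always bivalent---it is either true or false in each state, never undefined---and the clauses for $\lnot$ and $\land$ (hence for the derived $\lor$ and $\to$) are precisely the classical truth tables. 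The ``undefined'' case of \rref{def:traceformulasemantics} arises only for trace formulas evaluated along nonterminating traces, and never for the state formulas $P$, $Q$, $C$ and the members of $\Gamma, \Delta$ that occur in a sequent (a bare $\Box_{\text{tae}}\formulaSymbol$ is not a state formula and so cannot appear there). This bivalence is exactly what lets the law of excluded middle, and thus classical propositional logic, transfer verbatim.

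Given this, I would prove each listed rule by fixing an arbitrary state $\firstStateSymbol$ in which the antecedent of the conclusion holds and deriving its succedent from the premises evaluated in the \emph{same} state $\firstStateSymbol$. For $\to$R, assume $\firstStateSymbol \models \bigwedge\Gamma$ and $\firstStateSymbol \not\models \bigvee\Delta$; a case split on whether $\firstStateSymbol \models P$ yields $\firstStateSymbol \models P \to Q$ either trivially or via the premise $\Gamma, P \vdash Q, \Delta$, which forces $\firstStateSymbol \models Q$. For cut, under $\firstStateSymbol \models \bigwedge\Gamma$ the first premise gives $\firstStateSymbol \models C$ or $\firstStateSymbol \models \bigvee\Delta$; in the latter case we are done, and in the former the second premise $\Gamma, C \vdash \Delta$ supplies $\firstStateSymbol \models \bigvee\Delta$. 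The rules $\to$L and $\land$R unfold analogously by a case split on the principal formula, and the weakenings WR and WL are immediate, since enlarging $\Delta$ (respectively strengthening $\Gamma$) only weakens the implication that must hold at $\firstStateSymbol$. Because these arguments are uniform, the same pointwise template justifies the general claim that \emph{all} propositional sequent rules are sound, not merely the ones displayed.

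Since the meaning of the propositional connectives is state-local and classical, no facts about hybrid programs, traces, measure zero, or the $\Box_{\text{tae}}$ modality are needed; each rule instance is just a propositional tautology checked state by state. There is essentially no main obstacle here: the one point requiring care is the bivalence observation above, confirming that state formulas never take the ``undefined'' value that trace formulas can, so that every case split invoked in the derivations is licensed.
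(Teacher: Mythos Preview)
Your proposal is correct and follows the same idea as the paper, which dispatches the whole proposition in a single line: ``The semantics of propositional connectives is classical.'' Your write-up simply unpacks this by making the bivalence of state formulas explicit and then verifying each rule pointwise at an arbitrary state, which is exactly the content behind the paper's one-sentence proof.
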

\begin{proof} The semantics of propositional connectives is classical.
\end{proof}

\subsection{Some Proof Rules from \dL}
We show the soundness of the following proof rules from \dL because we will make use of them in our derivations.

\begin{proposition} The following proof rules of \dL (as developed in \cite{Pl2}) are sound for state formulas of \PdTL:
\begin{center}
 {\begin{prooftree} \hypo{\formulaSymbol}\infer1[G]{[\alpha]\formulaSymbol} \end{prooftree}} \hspace{5em}{\begin{prooftree} \hypo{\secondFormulaSymbol \to \formulaSymbol}\infer1[M]{[\alpha]\secondFormulaSymbol \to [\alpha]\formulaSymbol} \end{prooftree}}
\end{center}
\end{proposition}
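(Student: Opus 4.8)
The plan is to argue directly from the trace semantics of the box modality, exploiting the fact that both rules range only over \emph{state} formulas $\formulaSymbol, \secondFormulaSymbol$, never over the genuinely new trace formula $\Box_{\text{tae}}\formulaSymbol$. By \rref{def:stateformulaSemantics}, $\firstStateSymbol \models [\alpha]\formulaSymbol$ holds iff every trace $\tracesymbol$ of $\alpha$ with $\text{first }\tracesymbol = \firstStateSymbol$ for which $\val{\tracesymbol}{\formulaSymbol}$ is defined satisfies $\val{\tracesymbol}{\formulaSymbol}$ true; and by \rref{def:traceformulasemantics}, $\val{\tracesymbol}{\formulaSymbol}$ is defined exactly when $\tracesymbol$ terminates, in which case it equals $\val{\text{last }\tracesymbol}{\formulaSymbol}$. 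Thus the box over a state formula reduces to a quantification over last states of terminating traces, which by \rref{lem:reachTrace} is precisely the ordinary \dL reachability semantics, and this is the only fact the argument needs.

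For the generalization rule G, I would assume $\formulaSymbol$ is valid and fix an arbitrary state $\firstStateSymbol$. Take any trace $\tracesymbol$ of $\alpha$ starting at $\firstStateSymbol$. If $\tracesymbol$ does not terminate, $\val{\tracesymbol}{\formulaSymbol}$ is undefined and the defining condition of $[\alpha]\formulaSymbol$ is met vacuously; if $\tracesymbol$ terminates, then $\val{\tracesymbol}{\formulaSymbol} = \val{\text{last }\tracesymbol}{\formulaSymbol}$, which is true since $\formulaSymbol$ is valid. Hence $\firstStateSymbol \models [\alpha]\formulaSymbol$, and as $\firstStateSymbol$ was arbitrary, $[\alpha]\formulaSymbol$ is valid.

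For the monotonicity rule M, I would assume $\secondFormulaSymbol \to \formulaSymbol$ is valid and take a state $\firstStateSymbol$ with $\firstStateSymbol \models [\alpha]\secondFormulaSymbol$, aiming to show $\firstStateSymbol \models [\alpha]\formulaSymbol$. For any trace $\tracesymbol$ of $\alpha$ from $\firstStateSymbol$ on which $\val{\tracesymbol}{\formulaSymbol}$ is defined, $\tracesymbol$ terminates; crucially, the definedness condition of \rref{def:traceformulasemantics} depends only on whether $\tracesymbol$ terminates and not on the particular state formula, so $\val{\tracesymbol}{\secondFormulaSymbol}$ is defined as well. From $\firstStateSymbol \models [\alpha]\secondFormulaSymbol$ I obtain $\val{\text{last }\tracesymbol}{\secondFormulaSymbol}$ true, and then validity of $\secondFormulaSymbol \to \formulaSymbol$ yields $\val{\text{last }\tracesymbol}{\formulaSymbol} = \val{\tracesymbol}{\formulaSymbol}$ true. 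Therefore $\firstStateSymbol \models [\alpha]\formulaSymbol$.

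I expect no serious obstacle: the entire content is that the box over a state formula is semantically identical to its \dL reading, so these are the \dL soundness proofs verbatim, mirroring the earlier state-formula cases of the appendix. The only point requiring care, and the closest thing to a subtlety, is the \emph{partiality} of $\val{\tracesymbol}{\cdot}$ on non-terminating traces: for G this renders non-terminating traces vacuous, while for M it is essential that $\formulaSymbol$ and $\secondFormulaSymbol$ share the same domain of definedness (both undefined exactly on non-terminating traces), so that the hypothesis $[\alpha]\secondFormulaSymbol$ constrains precisely the traces relevant to $[\alpha]\formulaSymbol$.
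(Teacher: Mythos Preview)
Your proposal is correct and follows essentially the same approach as the paper: unpack the semantics of $[\alpha]\formulaSymbol$ via \rref{def:stateformulaSemantics} and \rref{def:traceformulasemantics}, reduce to evaluation at $\text{last }\tracesymbol$ for terminating traces, and invoke validity of the premise. Your explicit remark that definedness depends only on termination (not on the particular state formula) is exactly the observation the paper uses for M, and your mention of \rref{lem:reachTrace} is harmless but unnecessary since, as your detailed argument shows, the proof works directly from the trace semantics without passing through the reachability relation.
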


\begin{proof}  We start with rule G. Assume that state formula $\formulaSymbol$ is valid; i.e. true in all states. Fix a state $\firstStateSymbol$. We must show $\firstStateSymbol \models [\alpha]\formulaSymbol$. By \rref{def:stateformulaSemantics}, this holds iff for every trace $\tracesymbol$ of $\alpha$ with $\text{first }\tracesymbol = \firstStateSymbol$, if $\val{\tracesymbol}{\formulaSymbol}$ is defined, then $\val{\tracesymbol}{\formulaSymbol}$ is true. By definition \rref{def:traceformulasemantics}, if $\val{\tracesymbol}{\formulaSymbol}$ is defined, then $\tracesymbol$ terminates and  $\val{\tracesymbol}{\formulaSymbol} = \val{\text{last }\tracesymbol, \formulaSymbol}$. Because $\formulaSymbol$ is valid, $\text{last }\tracesymbol \models \formulaSymbol$ and thus $ \val{\text{last }\tracesymbol, \formulaSymbol}$ is true. Therefore $\firstStateSymbol \models [\alpha]\formulaSymbol$ and so G is sound.

Next, for rule M, assume that state formula $\secondFormulaSymbol \to \formulaSymbol$ is valid. Fix a state $\firstStateSymbol$. We must show $\firstStateSymbol \models [\alpha]\secondFormulaSymbol \to [\alpha]\formulaSymbol$. So, assume  $\firstStateSymbol \models [\alpha]\secondFormulaSymbol$. We must show  $\firstStateSymbol \models [\alpha]\formulaSymbol$. By \rref{def:stateformulaSemantics}, this holds iff for every trace $\tracesymbol$ of $\alpha$ with $\text{first }\tracesymbol = \firstStateSymbol$, if $\val{\tracesymbol}{\formulaSymbol}$ is defined, then $\val{\tracesymbol}{\formulaSymbol}$ is true. By \rref{def:traceformulasemantics}, $\val{\tracesymbol}{\formulaSymbol}$ is defined iff $\tracesymbol$ terminates---in that case, $\val{\tracesymbol}{\secondFormulaSymbol}$ is also defined, and thus must be true since we have assumed $\firstStateSymbol \models [\alpha]\secondFormulaSymbol$. Since $\val{\tracesymbol}{\secondFormulaSymbol}$ is $ \val{\text{last }\tracesymbol}{\secondFormulaSymbol}$ by \rref{def:traceformulasemantics}, that means that $\text{last }\tracesymbol \models \secondFormulaSymbol$. But from our assumption that $\secondFormulaSymbol \to \formulaSymbol$ is valid, we have $\text{last }\tracesymbol \models \formulaSymbol$, and thus $\val{\tracesymbol}{\formulaSymbol}$ is true, as desired.
\end{proof}

\subsection{Soundness of $M_{\text{tae}}$}
\begin{proof}
We have the following prooftree.
The premises are at the top level of the tree and the conclusion is at the bottom level.
The only acceptable ``steps'' in prooftrees are via sound rules.
This has the effect that validity propagates downwards, so that when the premises are valid, the conclusion is also valid.

\begin{center}
\begin{prooftree}
\hypo{\secondFormulaSymbol \to \formulaSymbol}
\infer1[TopCl]{\cl{\secondFormulaSymbol} \to \cl{\formulaSymbol}}
\hypo{\secondFormulaSymbol \to \formulaSymbol}
\infer1[$G_{\text{tae}}$]{\rae{\alpha}{(\secondFormulaSymbol \to \formulaSymbol)}}
\infer2[$K_{\text{tae}}$]{\rae{\alpha}{\secondFormulaSymbol} \to \rae {\alpha}{\formulaSymbol}}
\end{prooftree}
\end{center}
 \end{proof}
 
 \subsection{Soundness of $\text{Comp}_{\text{tae}}$}
\begin{proof}
We construct the following prooftree, which uses the monotonicity property (M) from \dL:

\scalebox{.8}{
\begin{prooftree}
\hypo{\secondFormulaSymbol \to \rae{\alpha}{\formulaSymbol}}

\hypo{*}
\infer1[id]{\rae{\alpha}{\formulaSymbol} \to \rae{\alpha}{\formulaSymbol}}

\hypo{*}
\infer1[CGG]{\rae{\alpha}{\formulaSymbol} \to [\alpha]\cl{\phi}}

\hypo{\cl{\formulaSymbol} \to \rae{\beta}{\formulaSymbol}}
\infer1[M]{[\alpha]\cl{\formulaSymbol} \to [\alpha]\rae{\beta}{\formulaSymbol}}

\infer2[cut]{\rae{\alpha}{\formulaSymbol} \to [\alpha]\rae{\beta}{\formulaSymbol}}

\infer2[$\land R$]{\rae{\alpha}{\formulaSymbol}  \to \rae{\alpha}{\formulaSymbol} \land [\alpha]\rae{\beta}{\formulaSymbol}}

\infer2[cut]{\secondFormulaSymbol \to \rae{\alpha}{\formulaSymbol} \land [\alpha]\rae{\beta}{\formulaSymbol}}
\infer1[$[;]_{\text{tae}}$]{\secondFormulaSymbol \to \rae{\alpha; \beta}{\formulaSymbol}}

\end{prooftree}
}

 \end{proof}
Note that $\text{Comp}_{\text{tae}}$ is incomplete---consider, for example, a case in which $\formulaSymbol = \secondFormulaSymbol$ and $\alpha = \beta$ is an ODE for which both $\formulaSymbol$ and $\cl{\formulaSymbol} \setminus \formulaSymbol$ are invariant regions.  More concretely, say $\formulaSymbol$ is the interior of the circle of radius 1, i.e. $x^2 + y^2{<}1$ and say $\alpha$ is the vector field $(x', y') = (-y, x)$.  Then $\mrae{\secondFormulaSymbol}{\alpha; \beta}{\formulaSymbol}$, since $\formulaSymbol$ is an invariant region.  However, since $\cl{\formulaSymbol} \setminus \formulaSymbol$ is an invariant region, $\cl{\formulaSymbol} \to \rae {\beta}{\formulaSymbol}$ does not hold.
 
\subsection{Soundness of $\text{Ind}_{\text{tae}}$}
\begin{proof} 
We construct the following prooftree, which uses the G\"{o}del generalization rule (G) from \dL:
\[
\begin{prooftree}
\hypo{*}
\infer1[id]{\cl{\formulaSymbol} \vdash \cl{\formulaSymbol}}
\hypo{\cl{\formulaSymbol} \vdash \rae{\alpha}{\formulaSymbol}}
\infer1[$\to$ R]{\vdash \mrae{\cl{\formulaSymbol}}{\alpha}{\formulaSymbol}}
\infer1[G]{\cl{\formulaSymbol} \vdash [\alpha^*](\mrae{\cl{\formulaSymbol}}{\alpha}{\formulaSymbol})}
\infer2[$\land$ R]{\cl{\formulaSymbol} \vdash \cl{\formulaSymbol} \land [\alpha^*](\mrae{\cl{\formulaSymbol}}{\alpha}{\formulaSymbol})}
\infer1[$\text{I}_{\text{tae}}$]{\cl{\formulaSymbol} \vdash \rae{\alpha^*}{\formulaSymbol}}
\end{prooftree}
\]
 \end{proof}
 
\subsection{Soundness of $\text{loop}_{\text{tae}}$}
\begin{proof}
We construct the following prooftree (using WR and WL implicitly):
\[
\scalebox{.9}{
\begin{prooftree}
\hypo{\cl{\secondFormulaSymbol} \vdash \rae{\alpha}{\secondFormulaSymbol}}
\infer1[$\text{Ind}_{\text{tae}}$]{\cl{\secondFormulaSymbol} \vdash \rae{\alpha^*}{\secondFormulaSymbol}}
\infer1[$\to R$]{ \Gamma \vdash \cl{\secondFormulaSymbol} \to \rae{\alpha^*}{\secondFormulaSymbol}, \Delta}
\hypo{\secondFormulaSymbol \vdash \formulaSymbol}
\infer1[$\text{M}_{\text{tae}}$] {\rae{\alpha^*}{\secondFormulaSymbol} \to \rae{\alpha^*}{\formulaSymbol}}
\hypo{\Gamma \vdash \cl{\secondFormulaSymbol}, \Delta}
\infer2[$\to L$]{\Gamma, \cl{\secondFormulaSymbol} \to \rae{\alpha^*}{\secondFormulaSymbol} \vdash \rae{\alpha^*}{\formulaSymbol}, \Delta}
\infer2[cut]{\Gamma \vdash \rae{\alpha^*}{\formulaSymbol}, \Delta}
\end{prooftree}
}
\]
\end{proof}

\section{Motivating Example}\label{app:Example} So that the full proof of the motivating train example is contained in this appendix, we repeat some of the material found in \rref{sec:Example}.

Recall that we model the train example as:
\begin{align*} a = 0 \land v = 0 \to  & [\big(((?(v{<}100); a := 1) \cup (?(v = 100); a := -1)); \\
 & \hspace{2em} \{x' = v, v' = a \ \&\ 0{\leq}v{\leq}100\}\big)^*] \Box_{\text{tae}} v{<}100
\end{align*}
If the initial velocity and acceleration are both 0, then $v{<}100$ almost everywhere.
The train accelerates while $v{<}100$ and brakes if $v = 100$; it moves according to the system of differential equations $x' = v, v' = a$.
The evolution domain constraint $v{\leq}100$ indicates an event-triggered controller (see \cite{Pl2}).

We can use the structural rule ${\to}R$ and our induction proof rule $\text{loop}_{\text{tae}}$ with invariant $v{<}100$ to reduce this to showing $a = 0 \land v = 0 \vdash v{\leq}100$ (which holds by real arithmetic), $v{<}100 \vdash v{<}100$ (which holds identically), and 
\begin{align*}v{\leq}100 \vdash &[\big((?(v{<}100); a := 1) \cup (?(v = 100); a := -1)\big); \\
 & \hspace{2em} \{x' = v, v' = a \ \&\  0{\leq}v{\leq}100\}] \Box_{\text{tae}} v{<}100.
\end{align*}
We can then use axiom $[;]_{\text{tae}}$ to split this into the two main goals 
\begin{equation}\label{eqn:MG1}
 v{\leq}100 \vdash [(?(v{<}100); a := 1) \cup (?(v = 100); a := -1)]\Box_{\text{tae}} v{<}100
 \end{equation}
 and 
\begin{equation}\label{eqn:MG2}
\begin{split}
v{\leq}100 \vdash [(?(v{<}100); a := 1) &\cup (?(v = 100); a := -1)] \\
[ \{x' = v, v' = a 
& \&\  0{\leq}v{\leq}100\}]\Box_{\text{tae}} v{<}100.
\end{split}
\end{equation}
The main goal (\rref{eqn:MG1}) is relatively easy to prove. We can use $[\cup]_{\text{tae}}$ and $\land R$ to split the proof into the two subgoals 
\begin{equation}\label{eqn:SG1} v{\leq}100 \vdash [?(v{<}100); a := 1] \Box_{\text{tae}} v{<}100
\end{equation} and
\begin{equation}\label{eqn:SG2} v{\leq}100 \vdash [?(v = 100); a := -1] \Box_{\text{tae}} v{<}100. \end{equation}

Using $[;]_{\text{tae}}$, the first subgoal (\rref{eqn:SG1}) reduces to showing 
\begin{equation}\label{eqn:ez1}
v{\leq}100 \vdash [?(v{<}100)] \Box_{\text{tae}} v{<}100
\end{equation} and  
\begin{equation}\label{eqn:ez2}
v{\leq}100 \vdash [?(v{<}100)][a := 1] \Box_{\text{tae}} v{<}100.
\end{equation}
Using $[?]_{\text{tae}}$, (\rref{eqn:ez1}) holds iff $v{\leq}100  \vdash v{\leq}100$, which is identically true.
Using $[:=]_{\text{tae}}$,  (\rref{eqn:ez2}) holds iff  $v{\leq}100 \vdash [?(v{<}100)] ((v{\leq}100) \land [a := 1] v{<}100)$ holds, and this is a valid \dL formula. The second subgoal (\rref{eqn:SG2}) proves similarly.

The second main goal (\rref{eqn:MG2}) is more complicated because it involves ODEs reasoning.  We can use the \dL axiom $[\cup]$ and $\land R$ to split the proof into two subgoals:
\begin{equation} \label{eqn:SG3}
v{\leq}100 \vdash [?(v{<}100); a := 1] [ \{x' = v, v' = a\ \&\  0{\leq}v{\leq}100\}]\Box_{\text{tae}} v{<}100
\end{equation} and
\begin{equation}\label{eqn:SG4}
    v{\leq}100 \vdash [?(v = 100); a := -1] [ \{x' = v, v' = a\ \&\  0{\leq}v{\leq}100\}]\Box_{\text{tae}} v{<}100.
\end{equation}

Now, we can use the \dL axioms $[;]$, $[:=]$, and $[?]$ to reduce (\rref{eqn:SG3}) to 
\begin{equation} \label{eqn:ODE1}
v{\leq}100, v{<}100 \vdash [ \{x' = v, v' = 1\ \&\  0{\leq}v{\leq}100\}]\Box_{\text{tae}} v{<}100
\end{equation}

Similarly, we reduce (\rref{eqn:SG4}) to
\begin{equation}\label{eqn:ODE2}
    v{\leq}100, v{=}100 \vdash [ \{x' = v, v' = -1\ \&\  0{\leq}v{\leq}100\}]\Box_{\text{tae}} v{<}100.
\end{equation}

Now, we need to use axiom $['\&]_{\text{tae}}$, which says:
\begin{equation*}
    \rae{x' = f(x) \& \evDomainConstraint}{\FOLformulaSymbol} \leftrightarrow \cl{\FOLformulaSymbol} \land \forall t {>}0 \left(\left( \forall 0{\leq}s{\leq}t \ [x := y(s)]\evDomainConstraint \right) \to \secondFOLformulaSymbol \right)
\end{equation*}

First, we focus on (\rref{eqn:ODE1}).
For clarity, we use $v_0$ for the value of $v$ in the initial state before it starts evolving along the ODEs, and similarly we use $x_0$ for the value of $x$ in the initial state.
Following \rref{prop:definable}, $Q$ is $(1 = 0 \to t + v_0{<}100) \land t + v_0 {\leq}100$. 

So we reduce (\rref{eqn:ODE1}) to:
\begin{equation}\label{eqn:DL1}
\begin{split}
    & v_0{\leq}100, v_0{<}100 \vdash  v_0{\leq}100 \land \forall t {>}0 \\
    &\left(\left( \forall 0{\leq}s{\leq}t \ [x := .5s^2 + v_0s + x_0][v := s + v_0] 0{\leq}v{\leq}100 \right) \to Q \right)
    \end{split}
\end{equation}

Notice that (\rref{eqn:DL1}) is a \dL formula.
Since \PdTL is a conservative extension of \dL, we can use any and all relevant rules of \dL to close goal (\rref{eqn:DL1}).
We start by using the contextual equivalence rules of \dL \cite{Pl2} to replace $Q$ with (the logically equivalent formula) $t+v_0{\leq}100$.
Thus (\rref{eqn:DL1}) becomes
\begin{equation}\label{eqn:dLCEQ}
\begin{split}
    & v_0{\leq}100, v_0{<}100 \vdash  v_0{\leq}100 \land \forall t {>}0 \\
    &\left(\left( \forall 0{\leq}s{\leq}t \ [x := .5s^2 + v_0s + x_0][v := s + v_0] 0{\leq}v{\leq}100 \right) \to t+v_0{\leq}100\right)
    \end{split}
\end{equation}

Using the assignment axiom $[:=]$ from \dL, this reduces to
\begin{equation}\label{eqn:DLSub}
\begin{split}
   & v_0{\leq}100, v_0{<}100 \vdash v_0 \leq 100 \\
   & \land \forall t {>}0
   \left(\left( \forall 0{\leq}s{\leq}t (0{\leq}v_0 + s{\leq}100) \right) \to t+v_0{\leq}100\right)
       \end{split}
\end{equation}

The main thing we need to show (\rref{eqn:DLSub}) is that 
$$v_0{<}100 \vdash \forall t {>}0 \left(\left( \forall 0{\leq}s{\leq}t\ 0{\leq}v_0{+}s{\leq}100 \right) \to t+v_0{\leq}100 \right),$$
 and this is a valid sequent because in particular when $s = t$, $0{\leq}v_0 + t{\leq}100$.

In a similar way, to close (\rref{eqn:ODE2}), using $['\&]_{\text{tae}}$ gives:
\begin{equation}\label{eqn:DL2}
\begin{split}
    & v_0{\leq}100, v_0{=}100 \vdash  v_0{\leq}100 \land \forall t {>}0 \\
    &\left(\left( \forall 0{\leq}s{\leq}t \ [x := -.5s^2 + v_0s + x_0][v := -s + v_0] 0{\leq}v{\leq}100 \right) \to -t + v_0 {\leq}100 \right)
    \end{split}
\end{equation}

Using the assignment axiom from \dL, this reduces to
\begin{equation*}
    v_0{\leq}100 \land v_0 = 100 \vdash v_0{\leq}100 \land \forall t {>}0 \left(\left( \forall 0{\leq}s{\leq}t 0{\leq}-s + v_0{\leq}100 \right) \to -t + v_0 {\leq}100 \right)
\end{equation*}
which is a valid sequent.
\end{document}